\definecolor{DarkBlue}{rgb}{0.1,0.1,0.5}
\definecolor{DarkGreen}{rgb}{0.1,0.5,0.1}
\newtheorem{theorem}{Theorem}
\newtheorem{lemma}{Lemma}
\newtheorem{claim}[lemma]{Claim}
\newtheorem{remark}{Remark}
\newtheorem{proposition}{Proposition}
\newtheorem*{proposition*}{Proposition}
\newtheorem{definition}{Definition}
\let\originalleft\left
\let\originalright\right
\renewcommand{\left}{\mathopen{}\mathclose\bgroup\originalleft}
\renewcommand{\right}{\aftergroup\egroup\originalright}
\newcommand{\dens}{\rho}
\newcommand{\prof}{v}
\newcommand{\densestgreedy}{\mathtt{DensestGreedy}}
\newcommand{\Th}{^{\mathrm{th}}}
\newcommand{\abs}[1]{\left|#1\right|}
\newcommand{\Size}[2]{#1^{\left[#2\right]} }
\newcommand{\Goods}[2]{#1^{\left(#2\right)} }
\newcommand{\Y}{Z}
\newcommand{\tauprime}{\widehat{\tau}}
\newcommand{\structlb}{T}
\newcommand{\structub}{\widehat{\structlb}}
\newcommand{\efcount}{{\sf EFCount}}
\DeclareMathOperator*{\argmax}{arg\,max}
\DeclareMathOperator*{\argmin}{arg\,min}
\newcommand{\EFone}{{\sf EF1}}
\newcommand{\EFtwo}{{\sf EF2}}
\newcommand{\EFk}{{\sf EFk}}
\title{\bfseries Finding Fair Allocations under Budget Constraints}
\author{Siddharth Barman\thanks{Indian Institute of Science. {\tt barman@iisc.ac.in}} \quad Arindam Khan\thanks{Indian Institute of Science. {\tt arindamkhan@iisc.ac.in}} \quad Sudarshan Shyam\thanks{Aarhus University. {\tt sudarshans.iitkgp@gmail.com}} \quad K.V.N.~Sreenivas\thanks{Indian Institute of Science. {\tt venkatanaga@iisc.ac.in}}}
\date{\empty}
\begin{document}
\maketitle

\begin{abstract}
We study the fair allocation of indivisible goods among agents with identical, additive valuations but individual budget constraints. Here, the indivisible goods---each with a specific size and value---need to be allocated such that the bundle assigned to each agent is of total size at most the agent's budget. Since envy-free allocations do not necessarily exist in the indivisible goods context, compelling relaxations---in particular, the notion of {\em envy-freeness up to $k$ goods} (\EFk)---have received significant attention in recent years. In an $\EFk$ allocation, each agent prefers its own bundle over that of any other agent, up to the removal of $k$ goods, and the agents have similarly bounded envy against the charity (which corresponds to the set of all unallocated goods). Recently, Wu et al.~(2021) showed that an allocation that satisfies the budget constraints and maximizes the Nash social welfare is $1/4$-approximately $\EFone$. However, the computation (or even existence) of exact $\EFk$ allocations remained an intriguing open problem.

We make notable progress towards this by proposing a simple, greedy, polynomial-time algorithm that computes $\EFtwo$ allocations under budget constraints. Our algorithmic result  implies the universal existence of $\EFtwo$ allocations in this fair division context. The analysis of the algorithm exploits intricate structural properties of envy-freeness. Interestingly, the same algorithm also provides $\EFone$ guarantees for important special cases. Specifically, we settle the existence of $\EFone$ allocations for instances in which: (i) the value of each good is proportional to its size, (ii) all goods have the same size, or (iii) all the goods have the same value. Our $\EFtwo$ result extends to the setting wherein the goods' sizes are agent specific.
\end{abstract}

\section{Introduction}
Discrete fair division is an actively growing field of research at the interface of computer science, mathematical economics, and multi-agent systems \cite{handbook2016,aziz2022algorithmic,amanatidis2022fair}. This study is motivated, in large part, by resource-allocation settings in which the underlying resources have to be assigned integrally and cannot be fractionally divided among the agents. Notable examples of such settings include fair allocation of courses \cite{Budish2011TheCA,Othman2010}, public housing units \cite{deng2013story}, and inheritance \cite{Spliddit}. 

A distinguishing feature of discrete fair division is its development of fairness notions that are applicable in the context of indivisible goods. A focus on relaxations is necessitated by the fact that existential guarantees, under classic fairness notions, are scarce in the context of indivisible goods. In particular, the fundamental fairness criterion of envy-freeness---which requires that each agent values the bundle assigned to her over that of any other agent---cannot be guaranteed in the indivisible-goods setting; consider the simple example of a single indivisible good and multiple agents. Interestingy, such pathology can be addressed by considering a natural relaxation: prior works have shown that, among agents with monotone valuations, there necessarily exists an allocation in which envy towards any agent can be resolved by the removal of a good \cite{LMMS042,Budish2011TheCA}.

More generally, recent research in discrete fair division has addressed existential and algorithmic questions related to the notion of envy-freeness up to $k$ goods (\EFk); see, e.g., the survey \cite{suksompong2021constraints} and references therein. In an $\EFk$ allocation, each agent prefers its own bundle over that of any other agent, up to the removal of $k$ goods from the other agent's bundle. A mature understanding has been developed in recent years specifically for allocations that are envy-free up to one good ($\EFone$), e.g., it is known that, under additive valuations, Pareto efficiency can be achieved in conjunction with $\EFone$ \cite{caragiannis2019unreasonable,barman2018finding}. 

However, most works on $\EFone$, and further relaxations, assume that all possible assignments of the indivisible goods (among the agents) are feasible. On the other hand, combinatorial constraints are an unavoidable requirement in many resource-allocation settings. As an illustrative example to highlight the significance of constraints in discrete fair division settings, consider a curator tasked with fairly partitioning  artwork among different museums (i.e., among different agents).\footnote{This stylized example is adapted from \cite{gourves2014near}} Each artifact (indivisible good) has an associated value and a space requirement (depending on its size). Note that the artifacts assigned to a particular museum must fit within its premises and, hence, in this setting not all allocations are feasible. Indeed, here the curator needs to identify a partition of the artifacts that is not only fair but also feasible with respect to the museums' space constraints. The current work addresses an abstraction of this problem. 

A bit more formally, we study the fair allocation of $m$ indivisible goods among $n$ agents with identical, additive valuations but individual budget constraints. Here, each good $g \in [m]$ has a size $s(g) \in \mathbb{Q}_+$ and value $v(g) \in \mathbb{Q}_+$, and the goods need to be partitioned such that bundle assigned to each agent $a \in [n]$ is of total size at most the agent's budget $B_a \in \mathbb{Q}_+$. Note that in this setting with constraints, one might not be able to assign all the $m$ goods among the $n$ agents. Specifically, consider a case wherein the total size of all the goods $s([m]) > \sum_{a=1}^n B_a$. To account for goods that might remain unallocated, we utilize the construct of \emph{charity}. This idea has been used in multiple prior works; see, e.g., \cite{wu2021budget,chaudhury2021little}. The subset of goods that are not assigned to any of the $n$ agents are, by default, given to the charity.

In this framework, we consider envy-freeness up to $k$ goods (\EFk) while respecting the budget constraints. Recall that in the current model, the agents have identical, additive valuations, i.e., for any agent $a \in [n]$ the value of any subset of goods $S \subseteq [m]$ is the sum of values of the goods in it, $v(S) \coloneqq \sum_{g \in S} v(g)$. Also, we say that---for an agent $a \in [n]$ with assigned bundle $A_a \subseteq [m]$---$\EFk$ holds against a subset $F$ iff the value of the assigned bundle, $v(A_a)$, is at least as much as the value of $F$, up to the removal of $k$ goods from $F$. Overall, an allocation $A_1, A_2, \ldots A_n$ (in which agent $a \in [n]$ receives subset $A_a$) is deemed to be $\EFk$ iff, for each pair of agents $a, b \in [n]$ and every subset $F \subseteq A_b$ of size at most $B_a$, the $\EFk$ guarantee holds for agent $a$ against subset $F$. That is, while evaluating envy from agent $a$ towards agent $b$, we consider, within $A_b$, all subsets of size at most $a$'s budget $B_a$.

The fair-division model with budget constraints was proposed by Wu et al.~\cite{wu2021budget}. Addressing agents with distinct, additive valuations, Wu et al.~\cite{wu2021budget} show that an allocation that satisfies the budget constraints and maximizes the Nash Social Welfare is $1/4$-approximately $\EFone$. A manuscript by Gan et al.~\cite{gan2021approximately} improves this guarantee to $1/2$-approximately $\EFone$ for agents with identical, additive valuations. In addition, Gan et al.~\cite{gan2021approximately} show that if all the agents have the same budget and in the case of two agents, an $\EFone$ allocation can be computed efficiently. However, for a general number of agents with distinct budgets, the computation (or even existence) of \emph{exact} $\EFk$ allocations remained an intriguing open problem.  \\

\noindent 
{\bf Our Results and Techniques.}
We make notable progress towards this open question by proposing a simple, greedy, polynomial-time algorithm that computes $\EFtwo$ allocations under budget constraints (Theorem \ref{thm:densestgreedy-ef2}). Our algorithmic result  implies the universal existence of $\EFtwo$ allocations in this fair division context. The same algorithm also provides $\EFone$ guarantees for important special cases. Specifically, we settle the existence of $\EFone$ allocations for instances in which: (i) the value of each good is proportional to its size, (ii) all goods have the same size, or (iii) all the goods have the same value; see Theorems \ref{thm:unit-density-ef1}, \ref{thm:unit-size-ef1}, and \ref{theorem:cardinality}. That is, we prove that, if the densities, values, or sizes of the goods are homogeneous, then an $\EFone$ allocation is guaranteed to exist. Furthermore, our $\EFtwo$ result even extends to the setting wherein the goods' sizes are agent specific (see Theorem \ref{thm:densestgreedy-ef2_gap} in Appendix \ref{appendix:mini-GAP}).

Our algorithm (Algorithm \ref{algo:dense}) allocates goods in decreasing order of density,\footnote{The density of a good $g$ is defined to be its value-by-size ratio, $v(g)/s(g)$.} while maintaining the budget constraints. It is relevant to note that, while the design of the algorithm is simple, its analysis rests on  intricate structural properties of envy-freeness under budget constraints. We obtain the $\EFtwo$ and $\EFone$ guarantees using ideas that are notably different for the ones used in the unconstrained settings; in particular, different from analysis of the envy-cycle-elimination method or the round-robin algorithm \cite{aziz2022algorithmic}. 

Complementing the robustness of the algorithm, we also provide an example that shows that the greedy algorithm might not find an $\EFone$ allocation, i.e., the $\EFtwo$ guarantee is tight (Section \ref{section:tightness}).  \\

\noindent
{\bf The Knapsack Problem.}
The budget constraints, as considered in this work, are the defining feature of the classic knapsack problem. The knapsack problem and its numerous variants have been extensively studied in combinatorial optimization, approximation and online algorithms \cite{martello1990knapsack, knapbook, albers2021improved}. The knapsack problem finds many applications in practice \cite{skiena1999interested}. Recall that the objective in the knapsack problem is to find a subset with maximum possible value, subject to a single budget constraint. That is, the goal in the standard knapsack problem is utilitarian and not concerned with fairness.   

Algorithmic aspects of the special cases considered in the current paper have been addressed in prior works: (i) knapsack instances in which the value of each good is proportional to its size are known as proportional instances~\cite{cygan2016online} or subset-sum instances~\cite{pisinger2005hard}, (ii) instances where all the goods have the same value are referred to as cardinality \cite{galvez2021approximating, Galvez00RW21, K0001MSW21} or unit \cite{cygan2016online} instances. In addition, we also study the special case wherein all the goods have the same size. 

Proportional and cardinality versions of the knapsack problem are known to be technically challenging by themselves. In particular, in the context of online algorithms it is known that there does not exist a deterministic algorithm with bounded competitive ratio for these two versions \cite{lueker1998average, marchetti1995stochastic}. 

The knapsack problem has also been studied from the perspective of group fairness \cite{Patel0L21} and fairness in aggregating voters' preferences \cite{FluschnikSTW19}. In these works, there is only one knapsack and the single, selected subset of goods induces (possibly distinct) valuations among the agents. By contrast, the current work addresses multiple knapsacks, one for each agent. \\

\noindent{\bf Generalized Assignment Problem (GAP).} We also address instances in which the goods' sizes are agent specific. While such instances constitute a generalization of the formulation considered in the rest of the paper, they themselves are a special case of GAP \cite{shmoys1993approximation}. GAP consists of goods both whose sizes and values are agent specific. For GAP, it is known that value maximization is {\rm APX}-hard. In fact, even with common values and agent-specific sizes, the value-maximization objective does \emph{not} admit a polynomial-time approximation scheme \cite{chekuri2005polynomial}. \\  

\noindent
{\bf Additional Related Work.}
As mentioned previously, $\EFk$ allocations have been studied in various discrete fair division contexts~\cite{suksompong2021constraints}. In particular, Bil{\`o} et al.~\cite{bilo2022almost} consider settings in which the indivisible goods correspond to vertices of a given graph $G$ and each agent must receive a connected subgraph of $G$. It is shown in \cite{bilo2022almost} that if the graph $G$ is a path, then, under the connectivity constraint, an $\EFtwo$ allocation is guaranteed to exist. Under connectivity constraints imposed by general graphs $G$, Bei et al.~\cite{bei2022price} characterize the smallest $k$ for which an $\EFk$ allocation necessarily exists among two agents (i.e., this result addresses the $n=2$ case). We also note that exact $\EFk$ guarantees are incomparable with multiplicative approximations, as obtained in \cite{wu2021budget}.  

The current work focusses on settings in which the agents have an identical (additive) valuation over the goods. We note that, in the context of budget constraints, identical valuations already provide a technically-rich model. Fair division with identical valuations has been studied in multiple prior works; see, e.g.,~\cite{plaut2020almost,barman2021uniform}. Indeed, in many application domains each agent's cardinal preference corresponds to the monetary worth of the goods and, hence, in such setups the agents share a common valuation.

\section{Notation and Preliminaries}
We study the problem of fairly allocating a set of indivisible goods $[m] = \{1,2, \dots, m\}$ among a set of agents $[n] = \{1,2, \dots, n \}$ with budget constraints.
In the setup, every good $g \in [m]$ has a size $s(g) \in \mathbb{Q}_+$ and a value $\prof(g) \in \mathbb{Q}_+$. The density of any good $g \in [m]$ will be denoted as $\dens(g) \coloneqq \prof(g)/s(g)$. Furthermore, every agent $a \in [n]$ has an associated budget $B_a \in \mathbb{Q}_+$ that specifies an upper bound on the cumulative size of the set of goods that agent $a$ can receive. We conform to the framework wherein the valuations and sizes of the goods are additive; in particular, for any subset of goods $S \subseteq [m]$, we write the value $\prof(S) \coloneqq  \sum_{g \in S} \prof(g)$ and the size $s(S) \coloneqq  \sum_{g \in S} s(g)$. Hence, in this setup, a subset $S \subseteq [m]$ can be assigned to agent $a \in [n]$ only if $s(S) \leq B_a$, and the subset has value $v(S)$ for the agent. An instance of the fair division problem with budget constraints is specified as a tuple $\langle [m], [n], \{ v(g) \}_{g\in [m]}, \{s(g)\}_{g \in [m]}, \{B_a\}_{a\in [n]} \rangle$. 

Note that in fair division settings with constraints, one might not be able to assign all the $m$ goods among the $n$ agents. Specifically, consider a setting wherein $s([m]) > \sum_{a=1}^n B_a$. To account for goods that might remain unallocated, we utilize the construct of \emph{charity}. 
The goods that are not assigned to any of the $n$ agents are, by default, given to the charity. 


An allocation $\mathcal{A} = (A_1, A_2, \ldots, A_n)$ refers to a tuple of disjoint sets of goods, i.e., for every $a\in[n]$, $A_a\subseteq G$
and for all $a,b\in[n]$ such that $a\ne b$, $A_a\cap A_b=\emptyset$. Here $A_a$ indicates the set of goods allocated to agent $a$.
Throughout, we will maintain allocations $\mathcal{A} = (A_1, A_2, \ldots, A_n)$ that are feasible, i.e., satisfy the budget constraints of all the agents, $s(A_a) \leq B_a$ for every agent $a \in [n]$. 
As mentioned above, the set of remaining goods, $[m]\setminus (A_1\cup A_2\cup \dots\cup A_n)$, will be assigned to the charity.

Next, we define the fairness notions studied in this work. Consider an allocation $\mathcal{A} = (A_1, A_2, \ldots, A_n)$.
An agent $a\in[n]$ is said to be \emph{envy-free up to one good} ($\EFone$) towards agent $b\in[n]$
iff for every subset $F \subseteq A_b$, with $s(F) \leq B_a$ (and $|F| \geq 1$), there exists a 
good $f \in F$ such that $v(A_a) \geq v(F \setminus \{f\})$. Further, an agent $a\in[n]$ is said to be $\EFone$ towards the charity
iff for every subset $F \subseteq [m]\setminus \cup_{a=1}^n A_a$, with $s(F) \leq B_a$ (and $|F| \geq 1$), there exists a 
good $f \in F$ such that $v(A_a) \geq v(F \setminus \{f\})$.
The allocation $\mathcal A$ is said to be $\EFone$ iff every agent $a\in[n]$ is $\EFone$ towards every other agent $b\in[n]$ and the charity.

Analogously, we define $\EFtwo$:

\begin{definition}[$\EFtwo$]
\label{defn:eftwo}
Let $\mathcal{A} = (A_1, A_2, \ldots, A_n)$ be an arbitrary allocation.
An agent $a\in[n]$ is said to be \emph{envy-free up to two goods} ($\EFtwo$) towards agent $b\in[n]$
iff for every subset $F \subseteq A_b$, with $s(F) \leq B_a$ (and $|F| \geq 2$), there exist
goods $f_1,f_2 \in F$ such that $v(A_a) \geq v(F \setminus \{f_1,f_2\})$. Further, an agent $a\in[n]$ is said to be $\EFtwo$ towards the charity
iff for every subset $F \subseteq [m]\setminus \cup_{a=1}^n A_a$, with $s(F) \leq B_a$ (and $|F| \geq2$), there exist goods 
$f_1,f_2 \in F$ such that $v(A_a) \geq v(F \setminus \{f_1,f_2\})$.
The allocation $\mathcal A$ is said to be $\EFtwo$ iff every agent $a\in[n]$ is $\EFtwo$ towards every other agent $b\in[n]$ and the charity.
\end{definition}

Throughout, we will assume that the goods have distinct densities -- this assumption holds without loss of generality and can be enforced by perturbing the densities (and appropriately the values) by sufficiently small amounts (see Appendix \ref{appendix:distinct-densities}). The assumption ensures that, in any nonempty subset $S \subseteq [m]$, the good with the maximum density ${\argmax_{g \in S} \ \dens(g)}$ is uniquely defined. Also, indexing the goods, in any subset $S =\{g_1, g_2, \ldots, g_k \}$, in decreasing order of density results in a unique ordering with $\dens(g_1) > \dens(g_2) > \ldots > \dens(g_k)$. For any subset $S \subseteq [m]$ and good $g \in [m]$, we will use the shorthands $S + g \coloneqq S \cup \{g \}$ and $S - g \coloneqq S \setminus \{g\}$.

\section{The Density Greedy Algorithm}
\label{section3:densestgreedy-ef2}
This section develops a greedy algorithm (Algorithm \ref{algo:dense} - $\densestgreedy$) that allocates goods in decreasing order of density, while maintaining the budget constraints. We will prove that the algorithm achieves  $\EFtwo$ for general budget-constrained instances and $\EFone$ for multiple special cases. 

\begin{theorem}
\label{thm:densestgreedy-ef2}
For any given fair division instance with budget constraints $\langle [m], [n], \{ v(g) \}_{g\in [m]}, \{s(g)\}_{g \in [m]}, \{B_a\}_{a\in [n]} \rangle$, Algorithm \ref{algo:dense} ($\densestgreedy$) computes an $\EFtwo$ allocation in polynomial time.
\end{theorem}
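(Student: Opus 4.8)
The plan is to analyze the greedy algorithm $\densestgreedy$, which processes goods in decreasing order of density $\dens(g) = \prof(g)/\size(g)$ and, on each good, assigns it to some feasible agent (presumably the one whose current bundle has smallest value, or via some tie-breaking rule among agents who can still fit it; when no real agent can fit it, it goes to the charity agent $n+1$). First I would fix notation: let $g_1, g_2, \ldots, g_m$ be the goods in decreasing density order, and let $\mathcal{A}^{(t)}$ denote the partial allocation after step $t$. The polynomial running time is immediate (one sort, then $m$ steps each touching $n+1$ bundles), so the entire content is the $\EFtwo$ guarantee. Fix a final allocation $\mathcal{A} = (A_1, \ldots, A_{n+1})$, a pair of agents $a, b$, and a ``worst-case'' subset $F \subseteq A_b$ with $\size(F) \le B_a$; I must exhibit two goods $f, f' \in F$ whose removal brings $\prof(F)$ down to at most $\prof(A_a)$. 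Without loss we may take $F$ to be the $\prof$-maximal subset of $A_b$ of size at most $B_a$, so the goal becomes $\prof(F) - \prof(f) - \prof(f') \le \prof(A_a)$ for the two highest-value (or suitably chosen) goods of $F$.

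The key structural idea I would exploit is a density-ordered exchange/prefix argument. Consider the moment each good of $F$ was assigned to $b$: at that moment agent $a$'s bundle was smaller in value than $b$'s (by the greedy choice rule) OR agent $a$ could not accommodate that good because $a$'s bundle was already too full. This dichotomy is the crux. Let $f^\star$ be the good in $F$ assigned \emph{last} (latest in the density order, hence of lowest density among $F$'s goods). At the step when $f^\star$ was handed to $b$, everything currently in $A_a$ except possibly goods added after that step was already present; and since the algorithm preferred $b$ over $a$, either $\prof(A_a^{(\text{then})}) \le \prof(A_b^{(\text{then})})$, or $a$ was full, i.e., $\size(A_a^{(\text{then})}) + \size(f^\star) > B_a$. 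In the first case, since densities are decreasing, all goods $a$ receives afterward have density at most $\dens(f^\star)$, and one shows $\prof(A_a) \ge \prof(A_b^{(\text{then})}) \ge \prof(F) - \prof(f^\star) - \prof(f')$ where $f'$ accounts for goods of $F$ assigned after $f^\star$ (there are none later in density, so $f'$ is a second ``catch-up'' term handled by a careful size/density accounting). In the second (``$a$ is full'') case, I would compare $\prof(A_a)$ against $\prof(F \setminus \{f, f'\})$ by a size argument: since $\size(F) \le B_a$ but $\size(A_a^{(\text{then})}) > B_a - \size(f^\star)$, the bundle $A_a$ has comparable total size to $F$, and because $A_a$ was filled greedily by denser goods (density $\ge$ that of the goods it ``missed''), a size-to-value conversion via densities gives $\prof(A_a) \ge \dens_{\min}(A_a) \cdot \size(A_a) \ge \dens(f^\star)\cdot(\text{size of all but two goods of }F) \ge \prof(F\setminus\{f,f'\})$, after removing the two largest-size goods of $F$ to make the size inequality go through. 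Identifying precisely which two goods to delete, and in which of the two regimes each deletion is ``spent,'' is the delicate bookkeeping that forces $k=2$ rather than $k=1$.

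The main obstacle I anticipate is exactly this coupling of the two regimes into a single uniform choice of the pair $\{f, f'\}$: one deletion is naturally needed to absorb the ``last good given to $b$'' slack ($f^\star$ itself), while the second is needed to absorb a size discretization loss when converting between the size inequality $\size(A_a) \gtrsim \size(F)$ and a value inequality through the density bound $\dens(\,\cdot\,) \ge \dens(f^\star)$ — a conversion that loses up to one good's worth because $a$'s fullness only says $a$ cannot fit \emph{one more} good. I would therefore set $f$ to be the lowest-density good of $F$ (to kill the greedy-preference slack) and $f'$ to be the largest-\emph{size} good of $F$ (to kill the packing slack), prove the inequality $\prof(A_a) \ge \prof(F \setminus \{f, f'\})$ by splitting on the dichotomy above, and finally observe that the charity agent $n+1$ is handled by the same argument since it is ``always full'' in the relevant sense (every good it holds was rejected by all budgets, so the ``$a$ is full'' branch, or its symmetric form, applies verbatim). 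A small separate check is needed when $|F| < 2$ or $|A_a| = 0$, but those are immediate from feasibility.
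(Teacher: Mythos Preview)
Your high-level decomposition is sound: if $a$ is still active at the moment the least-dense good $f^\star$ of $F$ is handed to $b$, then $b$ is the minimum-value active agent, so $\prof(A_a^{\text{then}}) \ge \prof(A_b^{\text{then}}) \ge \prof(F\setminus\{f^\star\})$ and you get $\EFone$ outright (note that your displayed inequality has the wrong direction---the greedy rule gives $\prof(A_a)\ge\prof(A_b)$, not $\le$). The entire difficulty is therefore the ``$a$ inactive'' branch, and there your plan has a genuine gap.

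The step that fails is the assertion that ``$A_a$ was filled greedily by denser goods (density $\ge$ that of the goods it missed)'', from which you want $\prof(A_a)\ge\dens(f^\star)\cdot s(A_a)$. This is not true in general. When agent $a$ is selected and $f^\star$ is still unassigned but already too large to fit, the algorithm gives $a$ the densest \emph{fitting} unassigned good, which may have density strictly below $\dens(f^\star)$. So $A_a$ can contain goods less dense than $f^\star$, and your size-to-value conversion collapses. Removing in addition the largest-\emph{size} good $f'$ from $F$ does not repair this: the problem is on the $A_a$ side, not the $F$ side, and no fixed choice of two goods in $F$ restores a uniform density lower bound on $A_a$.

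The paper sidesteps this precisely by \emph{not} pivoting on the last good $f^\star$. Instead it tracks, for each density-ordered prefix $\Goods{Z}{i}$ of $Z=F$, the quantity $\efcount\!\big(\Size{X}{s(\Goods{Z}{i})},\,\Goods{Z}{i}\big)$ and takes $t$ to be the \emph{first} index where this count equals $2$ (an intermediate-value argument guarantees such a $t$). The pivot good is $g_Z=z_t$, generally not the least-dense good of $Z$. With $\gamma$ the total size of goods in $X$ of density at least $\dens(g_Z)$, the greedy rule forces $\gamma+s(g_Z)>B_a$; crucially, one only needs the density bound on the fractional prefix $\Size{X}{\gamma}$, not on all of $X$. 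This yields $\prof\!\big(X\setminus\Size{X}{\tau}\big)\ge\prof\!\big(Z\setminus\Size{Z}{\widehat\tau}\big)$, which combined with $\efcount\!\big(\Size{X}{\tau},\Size{Z}{\widehat\tau}\big)=2$ (built into the choice of $t$) gives $\efcount(X,Z)\le 2$ by a direct value-addition lemma. The missing ingredients in your plan are thus the $\efcount$-based selection of the pivot and the use of fractional density prefixes in place of the whole bundle $A_a$.
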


\begin{algorithm}
\caption{$\densestgreedy$ -- Given instance $\langle [m], [n], \{ v(g) \}_{g}, \{s(g)\}_{g}, \{B_a\}_{a} \rangle$,  
allocate the goods $[m]$ among agents $[n]$ (the unassigned goods go to charity).}  \label{algo:dense}
\begin{algorithmic}[1]
\STATE Initialize allocation $(A_1, \dots, A_{n}) \leftarrow (\emptyset, \ldots, \emptyset)$. Also, define the set of active agents $N \coloneqq [n]$ and the set of unallocated goods $G \coloneqq [m]$. 
\WHILE{$G \neq \emptyset$ and $N \neq \emptyset$}
\STATE Select arbitrarily a minimum-valued agent $a \in N$, i.e., $a = {\displaystyle \argmin_{b \in N} \prof(A_b)}$. \label{line3}
\IF{for all goods $g\in G$ we have $s\left( A_a + g \right) > B_a$}
\STATE Set agent $a$ to be inactive, i.e., $N \leftarrow N \setminus \{a\}$.
\ELSE
\STATE \label{line:assign-good} Write $g' = {\displaystyle \argmax_{g \in G :\  s(A_a + g) \leq B_a} \dens(g)}$ 
and update $A_a \leftarrow A_a + g'$ along with $G \leftarrow G - g'$.
\ENDIF
\ENDWHILE
\RETURN $(A_1,A_2,\dots,A_{n})$
\end{algorithmic}
\end{algorithm}
Recall the assumption that the goods have distinct densities and, hence, in Line \ref{line:assign-good} we obtain a unique good $g'$ (among the ones that fit within agent $a$'s available budget). While our goal is to find a fair, integral allocation of the (indivisible) goods, for analytic purposes, we will consider fractional assignment of goods to agents. Towards this, for any scalar $\alpha \in[0,1]$ and good $g \in [m]$, we define $\alpha \cdot g$ to be a new good whose size and value are $\alpha$ times the size and value of good $g$, respectively. With fractional goods, we obtain set difference between subsets $I$ and $J$ by adjusting the fractional amount of each good present in $I$. Formally, for subsets $I = \{g_1, g_2, \ldots, g_k\}$ and $J$, let $\alpha_i$ denote the fraction of the good $g_i$ present in $J$,\footnote{Note that, if $\alpha_i = 0$, then the good $g_i$ is not included in $J$. Complementarily, if $\alpha_i = 1$, then the good $g_i$ is completely included in $J$. Also, $g_i$ itself could be a fractional good.} then $I \setminus J \coloneqq \cup_{i=1}^k \left\{ (1-\alpha_i) \cdot g_i \right\}$.  

We next define key constructs for the analysis. For any subset of goods $S$, we define two density-wise prefix subsets of $S$; in particular, $S^{(i)}$ is the subset of the $i$ densest goods in $S$ and $S^{[B]}$ consists of the densest goods in $S$ of total size $B$. Formally, for any subset of goods $S = \{s_1, s_2, \ldots, s_k \}$, indexed in decreasing order of density, and any index $1 \leq i \leq |S|$, write $S^{(i)} \coloneqq \{s_1, \ldots, s_i \}$.
\begin{definition}[Prefix Subset $\Size{S}{B}$]
\label{definition:prefix-subset}
For any subset of goods $S = \{g_1, g_2, \ldots, g_k \}$, indexed in decreasing order of density, and for any nonnegative threshold $B \le s(S)$, let $P=\{g_1, \ldots, g_{\ell-1} \}$ be the (cardinality-wise) largest prefix of $S$ such that $s(P) \leq B$. Then, we define $\Size{S}{B} \coloneqq P \cup \left\{ \alpha \cdot g_{\ell} \right\}$, where $\alpha = \frac{B-s(P)}{s(g_{\ell})}$.  
\end{definition}
If the threshold $B \geq s(S)$, then we simply set $\Size{S}{B}=S$. Note that in $\Size{S}{B}$ at most one good is fractional and, for $B \leq s(S)$, the size of $\Size{S}{B}$ is exactly equal to $B$. It is also relevant to observe that, if $A_a$ is the subset of goods assigned to agent $a \in [n]$ at the end of Algorithm \ref{algo:dense}, then $A^{(i)}_a$ is in fact the set of the first $i$ goods assigned to agent $a$ in the algorithm; recall that the algorithm assigns the goods in decreasing order of density. The following two propositions provide useful properties of Algorithm \ref{algo:dense} and are based on the algorithm's selection criteria. The proofs of these propositions appear in Appendix \ref{appendix:proposition-proofs}.

\begin{restatable}{proposition}{PropGreedyOne}
\label{greedyproperty}
Let $X=\{g_1, g_2, \ldots, g_k\}$  denote the set of goods assigned to an agent $a \in [n]$ (i.e., $X = A_a$) and $Y=\{h_1, h_2, \ldots, h_\ell\}$ 
be the set of goods assigned to one of the agents $b \in [n]$, {or} to the charity (i.e., $Y = A_b$ or $Y = [m] \backslash \cup_{i=1}^n A_i$) at the end of Algorithm \ref{algo:dense}. Further, let the goods in the sets $X$ and $Y$ be indexed in decreasing order of density.  For indices $i < |X|$ and $j < |Y|$, suppose $\prof\left(X^{(i)} \right) < \prof\left(Y^{(j)}\right)$ and $s\left(X^{(i)} + h_{j+1}\right) \leq B_a$. Then, $\dens(g_{i+1}) > \dens(h_{j+1})$. 
\end{restatable}

\begin{restatable}{proposition}{PropGreedyTwo}
\label{greedyproperty-full}
Let $X=\{g_1, g_2, \ldots, g_k\}$  denote the set of goods assigned to an agent $a \in [n]$ and $Y=\{h_1, h_2, \ldots, h_\ell\}$ 
be the set of goods assigned to one of the agents $b \in [n]$, {or} to the charity, at the end of Algorithm \ref{algo:dense}. 
Further, let the goods in the sets $X$ and $Y$ be indexed in decreasing order of density. 
If, for any index $j < \abs{X}$, the size $s\left(X+h_{j+1}\right)\le B_a$, then we have $\prof\left(X\right)\geq \prof\left(Y^{(j)}\right)$.
\end{restatable}

We define the function $\efcount(\cdot)$ to capture envy count, i.e., the number of goods that need to be removed in order to achieve envy-freeness. Specifically, 
for any subset of goods $X,Y$, we define $\efcount(X,Y)$ as the minimum number of goods whose removal from $Y$ yields a subset of goods with value at most $\prof(X)$,  
\begin{align}
\efcount(X,Y) \coloneqq \min_{R \subseteq Y: \ \prof({Y \setminus R} ) \leq \prof(X)} \  |R| \label{eqn:efcount-defn}
\end{align}

\subsection{Structural Properties of Envy Counts}
\label{section:structural-props}

This section develops important building blocks for the algorithm's analysis. 
\begin{lemma}\label{lemma:lipschitz}
For any subset of goods $X$ and $Y$ along with any index $i < |Y|$, let $\structlb \coloneqq s\left(Y^{(i)}\right)$ and $\structub \coloneqq s\left(Y^{(i+1)}\right)$. 
Then, $\efcount\left(\Size{X}{\structub}, \Size{Y}{\structub}\right) \leq \efcount\left(\Size{X}{\structlb}, \Size{Y}{\structlb}\right) + 1$.
\end{lemma}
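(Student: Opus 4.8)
The ``$+1$'' in the statement suggests that a witness for the larger threshold should be obtained from a witness for the smaller one by discarding a single extra good, and that is essentially the whole argument. The first step is to identify what the two prefix subsets of $Y$ actually are. Since $\structlb = s(Y^{(i)})$ and $i < |Y|$, the cardinality-wise largest prefix of $Y$ of total size at most $\structlb$ is $Y^{(i)}$ itself --- the next good in the density order has strictly positive size, so no larger prefix fits --- and hence Definition~\ref{definition:prefix-subset} gives $\Size{Y}{\structlb} = Y^{(i)}$ with no fractional part. The identical reasoning applied with $\structub = s(Y^{(i+1)})$ yields $\Size{Y}{\structub} = Y^{(i+1)} = Y^{(i)} + y_{i+1}$, where $y_{i+1}$ denotes the $(i+1)$-st densest good of $Y$. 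So the inequality to be proved is just $\efcount(\Size{X}{\structub}, Y^{(i+1)}) \le \efcount(\Size{X}{\structlb}, Y^{(i)}) + 1$.

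Next I would fix an optimal removal set $R \subseteq Y^{(i)}$ for the right-hand quantity, so that $|R| = \efcount(\Size{X}{\structlb}, Y^{(i)})$ and $\prof(Y^{(i)} \setminus R) \le \prof(\Size{X}{\structlb})$; such an $R$ exists, since $R = Y^{(i)}$ is always feasible, so $\efcount$ is well defined. The candidate witness for the left-hand side is $R' \coloneqq R + y_{i+1} \subseteq Y^{(i+1)}$. Because $y_{i+1} \notin Y^{(i)} \supseteq R$, we have $|R'| = |R| + 1$, and, crucially, $Y^{(i+1)} \setminus R' = Y^{(i)} \setminus R$, so $\prof(Y^{(i+1)} \setminus R') = \prof(Y^{(i)} \setminus R) \le \prof(\Size{X}{\structlb})$. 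Finally, since $\structub = \structlb + s(y_{i+1}) > \structlb$, the bundle $\Size{X}{\structlb}$ is a sub-bundle of $\Size{X}{\structub}$ (both are the density-sorted prefix of $X$, truncated at sizes $\structlb$ and $\structub$ respectively), whence $\prof(\Size{X}{\structlb}) \le \prof(\Size{X}{\structub})$ by additivity and nonnegativity of $\prof$. Chaining the two inequalities shows that $R'$ is a removal set of size $|R| + 1$ certifying $\efcount(\Size{X}{\structub}, Y^{(i+1)}) \le |R| + 1 = \efcount(\Size{X}{\structlb}, Y^{(i)}) + 1$, as required.

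I do not anticipate a genuine obstacle: the lemma is close to a definitional consequence. The one point deserving care --- and the reason the statement is phrased in terms of the consecutive thresholds $\structlb$ and $\structub$ rather than arbitrary ones --- is the identification of $\Size{Y}{\structlb}$ and $\Size{Y}{\structub}$ with the \emph{integral} prefixes $Y^{(i)}$ and $Y^{(i+1)}$; this is exactly what makes $y_{i+1}$ a single good whose deletion costs only $1$ in the envy count. For two unrelated thresholds $B < B'$ the set $\Size{Y}{B'} \setminus \Size{Y}{B}$ could contain several goods, and the bound would degrade accordingly.
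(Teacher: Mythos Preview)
Your proposal is correct and follows essentially the same route as the paper's proof: take an optimal removal set $R$ witnessing $\efcount(\Size{X}{\structlb},\Size{Y}{\structlb})$, enlarge it to $R' = R + y_{i+1}$, and combine $Y^{(i+1)} \setminus R' = Y^{(i)} \setminus R$ with the monotonicity $\prof(\Size{X}{\structlb}) \le \prof(\Size{X}{\structub})$. You are simply more explicit than the paper in first identifying $\Size{Y}{\structlb} = Y^{(i)}$ and $\Size{Y}{\structub} = Y^{(i+1)}$, which is a welcome clarification rather than a different argument.
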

\begin{proof}
Write $c \coloneqq \efcount\left(\Size{X}{\structlb}, \Size{Y}{\structlb}\right)$. Therefore, by definition, there exists a size-$c$ subset $R \subseteq Y^{[\structlb]}$ with the property that 
$\prof(X^{[\structlb]}) \geq \prof(Y^{[\structlb]} \setminus R)$. Define subset $R' := R \cup \{h_{i+1}\}$,  where $h_{i+1}$ is the good in the set $Y^{(i+1)}\setminus Y^{(i)}$. For this set $R'$ of cardinality $c+1$, we have
\begin{align*}
\prof\left(X^{[\structub]}\right) &\geq \prof\left(X^{[\structlb]}\right) 
     \geq \prof\left(Y^{[\structlb]} \setminus R\right) 
     = \prof\left(Y^{[\structub]} \setminus R'\right).
\end{align*}
This implies $\efcount\left(\Size{X}{\structub}, \Size{Y}{\structub}\right)\leq c + 1$, and the lemma stands proved. 
\end{proof}

The following lemma shows that if $\efcount$ from a subset $X$ to a subset $Y$ is more than two, then we can select prefix subsets of $X$ and $Y$ such that the count becomes exactly equal to two.  
\begin{lemma}\label{existenceof2cut}
Let $X$ and $Y$ be any subsets of goods with the property that $\efcount(X,Y) \geq 2$. Then, there exists an index $t \leq |Y|$ such that, with $\structlb \coloneqq s\left(\Goods{Y}{t}\right)$, we have  $\efcount\left(\Size{X}{\structlb}, \Size{Y}{\structlb}\right) = 2$.
\end{lemma}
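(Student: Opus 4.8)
The plan is to track how the envy count evolves as we enlarge the density-prefix of $Y$ one good at a time, and then finish with a discrete intermediate-value argument. Concretely, for each $t \in \{0,1,\dots,|Y|\}$ set $\structlb_t \coloneqq s\left(\Goods{Y}{t}\right)$ (so $\structlb_0 = 0$) and define $f(t) \coloneqq \efcount\left(\Size{X}{\structlb_t},\,\Size{Y}{\structlb_t}\right)$. We want an index $t$ with $f(t) = 2$, so it suffices to show that $f$ starts strictly below $2$, ends at $2$ or above, and never increases by more than $1$ in a single step.

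First I would pin down the two endpoints. At $t = 0$ both $\Size{X}{0}$ and $\Size{Y}{0}$ are empty (value $0$), so $f(0) = 0$. At $t = |Y|$ we have $\structlb_{|Y|} = s(Y)$, hence $\Size{Y}{s(Y)} = Y$, while $\Size{X}{s(Y)}$ is a prefix subset of $X$ and therefore has value at most $\prof(X)$. Now $\efcount(\cdot, Y)$ is non-increasing in the value of its first argument: this is immediate from the definition~\eqref{eqn:efcount-defn}, since lowering the target value only shrinks the set of admissible removal-subsets $R \subseteq Y$. Consequently $f(|Y|) = \efcount\left(\Size{X}{s(Y)},\,Y\right) \geq \efcount(X,Y) \geq 2$, where the last inequality is the hypothesis of the lemma.

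Next I would invoke Lemma~\ref{lemma:lipschitz} verbatim, applied to the present sets $X$ and $Y$ with index $i = t$: it states precisely that $\efcount\left(\Size{X}{\structub},\Size{Y}{\structub}\right) \leq \efcount\left(\Size{X}{\structlb},\Size{Y}{\structlb}\right) + 1$ for $\structlb = s\left(\Goods{Y}{t}\right)$ and $\structub = s\left(\Goods{Y}{t+1}\right)$, that is, $f(t+1) \leq f(t) + 1$ for every $0 \leq t < |Y|$. Finally, let $t$ be the least index with $f(t) \geq 2$; it is well defined and satisfies $t \geq 1$ because $f(0) = 0 < 2 \leq f(|Y|)$. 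By minimality $f(t-1) \leq 1$, and the one-step bound gives $f(t) \leq f(t-1) + 1 \leq 2$; hence $f(t) = 2$. Since $t \in \{1,\dots,|Y|\}$, in particular $t \leq |Y|$, this $t$ is exactly the index claimed by the lemma.

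I do not anticipate a genuine obstacle: the argument is a monotone/Lipschitz intermediate-value statement built entirely on Lemma~\ref{lemma:lipschitz}. The only points needing a careful word are (a) the monotonicity of $\efcount$ in the value of its first argument, which is what lets the computation of $f(|Y|)$ go through even when $s(X) > s(Y)$ (so that $\Size{X}{s(Y)}$ is a strict fractional prefix of $X$ rather than $X$ itself), and (b) the degenerate endpoint $t = 0$; if one prefers not to refer to $\Goods{Y}{0}$, one can instead start at $t = 1$, where $\Size{Y}{s(\Goods{Y}{1})}$ consists of a single good and hence $f(1) \leq 1$, and run the identical argument on $\{1,\dots,|Y|\}$ (here $|Y| \geq 2$ since $\efcount(X,Y) \geq 2$).
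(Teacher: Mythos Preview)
Your proof is correct and follows essentially the same approach as the paper: define the sequence of envy counts over the density-prefixes of $Y$, check the endpoints $f(0)=0$ and $f(|Y|)\geq 2$ (using that $\prof\left(\Size{X}{s(Y)}\right)\leq \prof(X)$), invoke Lemma~\ref{lemma:lipschitz} for the one-step Lipschitz bound, and finish with the discrete intermediate value theorem. Your write-up is slightly more explicit about the monotonicity of $\efcount$ in its first argument and about selecting the minimal index, but the argument is the same.
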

\begin{proof}
The lemma essentially follows from a discrete version of the intermediate value theorem. For indices $t \in \{0,1,2,\ldots, |Y|\}$, define the function $h(t) \coloneqq s\left(\Goods{Y} {t}\right)$, i.e., $h(t)$ denotes the size of the $t$ densest goods in $Y$. Extending this function, we consider the envy count at different size thresholds; in particular, write $H(t) \coloneqq \efcount\left(\Size{X}{h(t)}, \Size{Y}{h(t)}\right)$ for each $t \in \{0,1,2,\ldots, |Y|\}$. Note that $H(0) = 0$. We will next show that $(i)$ $H(|Y|) \geq 2$ and $(ii)$ the discrete derivative of $H$ is at most one, i.e., $H(t+1) - H(t) \leq 1$ for all $0 \leq t < |Y|$. These properties of the integer-valued function $H$ imply that there necessarily exists an index $t^*$ such that $H(t^*) = 2$. This index $t^*$  satisfies the lemma.

Therefore, we complete the proof by establishing properties $(i)$ and $(ii)$ for the function $H(\cdot)$. For $(i)$, note that the definition of the prefix subset gives us $\prof(X)\ge \prof\left(\Size{X}{s(Y)}\right)$. Hence, $\efcount\left(\Size X {s(Y)}, Y\right) \geq \efcount(X, Y) \geq 2$; the last inequality follows from the lemma assumption.  Since $h(|Y|) = s(Y)$, we have $H(|Y|) \geq 2$. Property $(ii)$ follows directly from Lemma \ref{lemma:lipschitz}. This completes the proof. 
\end{proof}

The next lemma will be critical in our analysis. At a high level, it asserts that if we have two subsets $X'$ and $\Y'$ with $\efcount(X',\Y') = 2$ and one adds more value into $X'$ than $\Y'$, then the envy count does not increase. 

\begin{lemma}\label{envy_diff}
Given two subsets of goods $X$ and $\Y$ along with two nonnegative size thresholds $\structlb,\structub \in \mathbb{R}_+$ with the properties that  
\begin{itemize}
\item $\efcount\left(\Size{X}{\structlb}, \Size \Y {\structub}\right) = 2$ \ and
\item $\prof\left(X \setminus \Size X {\structlb}\right) \geq  \prof\left(\Y \setminus \Size \Y {\structub}\right)$.
\end{itemize}
Then, $\efcount(X,\Y) \leq \efcount\left(\Size X {\structlb}, \Size \Y {\structub}\right)=2$.
\end{lemma}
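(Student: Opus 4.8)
The plan is to unwind the definition of $\efcount$ at the prefix level, combine the two hypotheses by adding inequalities, and then correct for the single fractional good that the prefix operator may introduce.

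First I would invoke the first hypothesis: since $\efcount\left(\Size{X}{\structlb}, \Size{\Y}{\structub}\right) = 2$, the defining minimization \eqref{eqn:efcount-defn} gives a subset $R \subseteq \Size{\Y}{\structub}$ with $|R| = 2$ and $\prof\left(\Size{\Y}{\structub} \setminus R\right) \le \prof\left(\Size{X}{\structlb}\right)$. Next I would use the elementary identity that, for any subset $S$ and threshold $B$, the value splits additively as $\prof(S) = \prof\left(\Size{S}{B}\right) + \prof\left(S \setminus \Size{S}{B}\right)$ — each good of $S$ contributes its $\alpha$-fraction to $\Size{S}{B}$ and its complementary $(1-\alpha)$-fraction to $S \setminus \Size{S}{B}$, by the fractional set-difference convention fixed in the preliminaries. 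Adding the displayed inequality to the second hypothesis, written as $\prof\left(\Y \setminus \Size{\Y}{\structub}\right) \le \prof\left(X \setminus \Size{X}{\structlb}\right)$, the right-hand side collapses to $\prof(X)$, and, because $R \subseteq \Size{\Y}{\structub}$, the left-hand side collapses to $\prof(\Y) - \prof(R)$. This yields $\prof(\Y) - \prof(R) \le \prof(X)$.

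The remaining issue — and the only subtle point — is that $R$ is a subset of the prefix $\Size{\Y}{\structub}$, which by Definition \ref{definition:prefix-subset} may contain one fractional good $\alpha \cdot g$ that is not itself an element of $\Y$. To repair this I would pass from $R$ to a set $R'$ by replacing this fractional element (if it occurs in $R$) with the corresponding whole good $g \in \Y$, and leaving the other element of $R$ — which, being a non-fractional element of the prefix, is a whole element of $\Y$ — untouched; note $g$ differs from that other element, so $R'$ still has two distinct members. Then $R' \subseteq \Y$, $|R'| = 2$, and $\prof(R') \ge \prof(R)$, hence $\prof(\Y \setminus R') = \prof(\Y) - \prof(R') \le \prof(\Y) - \prof(R) \le \prof(X)$. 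By \eqref{eqn:efcount-defn}, $R'$ is an admissible witness for $\efcount(X, \Y)$, so $\efcount(X, \Y) \le |R'| = 2$, which together with the first hypothesis gives the claimed chain $\efcount(X,\Y) \le \efcount\left(\Size{X}{\structlb}, \Size{\Y}{\structub}\right) = 2$.

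I do not expect any step to present a genuine difficulty: the value-splitting identity and the additive combination are routine, and the whole burden is the bookkeeping of making sure the combined inequality is read as the removal of two \emph{whole} goods from $\Y$ rather than fractions, which is exactly what the passage from $R$ to $R'$ accomplishes.
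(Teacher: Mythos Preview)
Your proposal is correct and follows essentially the same route as the paper's proof: both extract a two-element witness $R=\{g_1',g_2'\}$ from the prefix-level $\efcount$, add the two hypotheses using the additive decomposition $\prof(S)=\prof(\Size{S}{B})+\prof(S\setminus\Size{S}{B})$ to obtain $\prof(X)\ge\prof(\Y)-\prof(R)$, and then replace any fractional element of $R$ by the corresponding whole good in $\Y$ to get a valid witness $R'\subseteq\Y$ with $\prof(R')\ge\prof(R)$. Your write-up is slightly more explicit about why $R'$ still has two distinct elements, but the argument is the same.
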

\begin{proof}
Given that $\efcount\left(\Size X {\structlb}, \Size \Y {\structub}\right) = 2$, there exist two goods $g_1', g_2' \in \Size \Y {\structub}$
such that $\prof\left(\Size{\Y}{\structub} - g_1' - g_2'\right)  \leq \prof\left(\Size X {\structlb}\right)$. Now, using the definition of the prefix subsets (Definition \ref{definition:prefix-subset}) we get
\begin{align}
	\prof(X) &= \prof\left(\Size X {\structlb}\right) + \prof\left(X \setminus \Size X {\structlb}\right) \nonumber \\
	& \geq \prof\left(\Size \Y {\structub} - g_1' - g_2'\right) + \prof\left(X \setminus \Size X {\structlb}\right) \nonumber \\
	& \geq \prof\left(\Size \Y {\structub} - g_1' - g_2'\right) +  \prof\left(\Y \setminus \Size \Y {\structub}\right) \tag{via lemma assumption} \\
	& = \prof(\Y)  -\left( \prof(g'_1) + \prof(g'_2) \right) \label{ineq:interim}
\end{align}
The definition of the prefix subset $\Size{\Y}{\structub}$ ensures that, corresponding to goods $g'_1, g'_2 \in \Size{\Y}{\structub}$, there exist two goods $g_1, g_2 \in \Y$ such that $v(g_1) + v(g_2) \geq v(g'_1) + v(g'_2)$. This bound and inequality (\ref{ineq:interim}) give us $\prof(X) \geq \prof(\Y - g_1 - g_2)$. This implies $\efcount(X,\Y) \leq 2$ and completes the proof of the lemma. 
\end{proof}
\begin{remark}
Lemma \ref{envy_diff} continues to hold when $\efcount\left(\Size{X}{\structlb}, \Size{\Y}{\structub}\right) = c$, for any integer $c\geq1$.
\end{remark}

\subsection{Proof of Theorem \ref{thm:densestgreedy-ef2}: $\densestgreedy$ achieves EF2}
\label{section:proof-of-main-theorem}

This subsection establishes Theorem \ref{thm:densestgreedy-ef2}. The runtime analysis of the algorithm is direct. Therefore, we focus on proving that $\densestgreedy$ necessarily finds an $\EFtwo$ allocation. Towards this, let $X = \{x_1, x_2, \dots, x_k \}$ be the subset of goods allocated to an agent $a \in [n]$, and  let $Y =  \{y_1, y_2, \dots, y_{\ell} \}$ be the subset of goods allocated to an other agent $b \in [n]$ or to the charity at the end of $\densestgreedy$. The goods in both $X$ and $Y$ are indexed in decreasing order of density. Establishing $\EFtwo$ between the sets of goods $X$ and $Y$ corresponds to showing that, for any subset of goods $\Y \subseteq Y$, with $s(\Y) \leq B_a$, we have $\efcount(X,\Y) \leq 2$. 

Consider any such subset $\Y$ and index its goods in decreasing order of density, $\Y = \{z_1,z_2,\dots,z_{\ell'} \}$. Note that, if $\efcount(X, \Y) \leq 1$, we already have the $\EFtwo$ guarantee. Therefore, in the remainder of the proof we address the case wherein $\efcount(X, \Y) \geq 2$. We will in fact show that this inequality cannot be strict, i.e., it must hold that the envy count is at most $2$ and, hence, we will obtain the $\EFtwo$ guarantee. Our proof relies on carefully identifying certain prefix subsets, showing that they satisfy relevant properties, and finally invoking Lemma \ref{envy_diff}.  

We start by considering function $h(i)$ which denotes the size of the $i$ densest goods in set $Z$, i.e., $h(i) \coloneqq s(\Goods{Z}{i})$ for $i \in \{0,1,2,\ldots, |Z|\}$. Furthermore, define index
\begin{align}
\label{eq:ef2}
t \coloneqq \min\left\{ i : \efcount\left(\Size{X}{h(i)}, \Goods{\Y}{i}\right) = 2\right\}
\end{align}

\begin{figure}[h]
\centering
\includegraphics[scale=0.9]{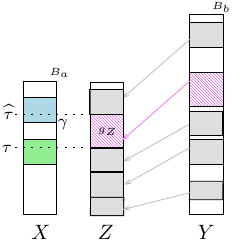} 
\caption{Figure illustrating size thresholds $\tau$, $\tauprime$, and the good $g_{\Y}$.} \label{figure:cut-offs}
\end{figure}
Existence of such an index $t \ge 2$ follows from Lemma \ref{existenceof2cut}. Also, note that $\Goods{\Y}{i} = \Size{\Y}{h(i)}$. 
We will denote the $t$th good in $\Y$ by $g_{\Y}$, i.e., $g_{\Y} = z_{t}$. In addition, using $t$ we define the following two size thresholds (see Figure \ref{figure:cut-offs}):
\begin{align}
\tau \coloneqq s\left(\Goods \Y {t-1}\right) \quad \text{ and } \quad \tauprime \coloneqq s\left(\Goods \Y t\right)\label{eq:defn-tau}
\end{align}
That is, $\tau = h(t-1)$ and $\tauprime = h(t)$. Now, from Lemma \ref{lemma:lipschitz} and the definition of $t$ (equation (\ref{eq:ef2})) we can infer that $\efcount\left(\Size X {\tau}, \Size \Y {\tau}\right) \ge 1$. Furthermore, using the minimality of $t$ we get $\efcount\left(\Size X {\tau}, \Size \Y {\tau}\right) < 2$.
Hence, 
\begin{align}
\efcount\left(\Size X {\tau}, \Size \Y {\tau}\right) = 1 \label{eqn:xtau-ztau}
\end{align}

We will establish two properties for the sets $X$ and $Z$ under consideration and use them to invoke Lemma \ref{envy_diff}. Specifically, in Lemma \ref{small_ef2} we will show that  $\efcount\left(\Size X {\tau}, \Size \Y \tauprime\right) = 2$ and in Lemma \ref{lemma} we prove $\prof\left(X \setminus \Size X \tau\right) \geq \prof\left(\Y \setminus \Size \Y \tauprime\right)$. These are exactly the two properties required to apply Lemma \ref{envy_diff} with $  \structlb = \tau$ and $ \structub = \tauprime$. 

\begin{lemma}\label{small_ef2}
$\efcount\left(\Size X {\tau}, \Size \Y \tauprime\right) = 2$.
\end{lemma}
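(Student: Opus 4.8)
The plan is to sandwich $\efcount\left(\Size X \tau, \Size \Y \tauprime\right)$ between $2$ and $2$. I would first record the elementary monotonicity of $\efcount$ in the value of its first slot: directly from (\ref{eqn:efcount-defn}), if $\prof(X')\le\prof(X'')$ then the feasible set of removals for $X'$ is contained in that for $X''$, hence $\efcount(X',W)\ge\efcount(X'',W)$ for every $W$. I would also note the structural fact that, because $\tau=h(t-1)=s\left(\Goods\Y{t-1}\right)$ and $\tauprime=h(t)=s\left(\Goods\Y t\right)$ are exactly the sizes of density-prefixes of $\Y$, we have $\Size\Y\tau=\Goods\Y{t-1}$ and $\Size\Y\tauprime=\Goods\Y t$ with no fractional good, and hence $\Size\Y\tauprime=\Size\Y\tau+g_\Y$ (a disjoint union), where $g_\Y=z_t$; likewise $\Size X\tauprime=\Size X{h(t)}$. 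Since $t\ge 2$, the prefix $\Goods\Y{t-1}$ is nonempty, so all these objects are well defined.

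For the lower bound: as $\tau\le\tauprime$, the prefix subset of $X$ is larger at the bigger threshold, so $\prof\left(\Size X\tau\right)\le\prof\left(\Size X\tauprime\right)$; by the monotonicity above, $\efcount\left(\Size X\tau,\Size\Y\tauprime\right)\ge\efcount\left(\Size X\tauprime,\Size\Y\tauprime\right)=\efcount\left(\Size X{h(t)},\Goods\Y t\right)=2$, where the last equality is exactly the defining property of the index $t$ in (\ref{eq:ef2}).

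For the upper bound I would use (\ref{eqn:xtau-ztau}): since $\efcount\left(\Size X\tau,\Size\Y\tau\right)=1$, there is a single good $r\in\Size\Y\tau$ with $\prof\left(\Size\Y\tau-r\right)\le\prof\left(\Size X\tau\right)$. Deleting the two goods $r$ and $g_\Y$ from $\Size\Y\tauprime=\Size\Y\tau+g_\Y$ leaves precisely $\Size\Y\tau-r$, whose value is at most $\prof\left(\Size X\tau\right)$; hence $\efcount\left(\Size X\tau,\Size\Y\tauprime\right)\le 2$. Combining the two bounds gives the lemma. I do not anticipate a substantive obstacle here: everything follows from (\ref{eqn:xtau-ztau}), the definition of $t$, and the monotonicity of $\efcount$; the only point requiring care is checking that $\Size\Y\tau$ and $\Size\Y\tauprime$ genuinely differ by the single whole good $g_\Y$, which is where it matters that $\tau$ and $\tauprime$ were chosen as sizes of density-prefixes rather than arbitrary size thresholds.
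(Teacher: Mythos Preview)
Your proof is correct and follows essentially the same approach as the paper. The upper bound argument (remove $r$ and $g_\Y$ from $\Size\Y\tauprime$ using (\ref{eqn:xtau-ztau})) is identical to the paper's; you are simply more explicit about the lower bound, which the paper leaves tacit by jumping directly from the $\le 2$ inequality to the equality, relying on the reader to supply the monotonicity-plus-definition-of-$t$ step you spelled out.
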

\begin{proof}
Since $\efcount\left(\Size X \tau, \Size \Y \tau\right) = 1$ (see equation (\ref{eqn:xtau-ztau})), there exists a good $g_1 \in \Size \Y \tau$ such that $\prof\left(\Size{X}{\tau}\right) \geq \prof\left(\Size{\Y}{\tau} - g_1\right)$.  
Also, by definition, we have $ \Size{\Y}{\tauprime} = \Size{\Y}{\tau} \cup \{ g_{\Y} \}$. Hence, the previous inequality reduces to $\prof\left(\Size{X}{\tau}\right) \geq \prof\left(\Size \Y \tauprime -  g_{\Y} - g_1\right)$. That is, removing $g_1$ and $g_{\Y}$ from $\Size{\Y}{\tauprime}$ gives us a set with value at most that of $\Size{X}{\tau}$. Therefore, we have $\efcount\left(\Size X {\tau}, \Size \Y \tauprime\right) = 2$. The lemma stands proved. 
\end{proof}

We define $\gamma$ as the size of the goods in $X$ that are at least as dense as $g_{\Y}$, i.e.,  
\begin{align}
\gamma \coloneqq \sum_{g \in X: \dens(g) \geq \dens(g_{\Y})} s(g) \label{eq:def-g}
\end{align}

We will establish bounds considering $\gamma$ and use them to prove Lemma \ref{lemma} below. 

\begin{claim} \label{lemma5}
It holds that $\gamma \leq \tauprime$ and $\prof\left(\Size X \gamma\right) < \prof\left(\Size \Y \tau\right)$. 
\end{claim}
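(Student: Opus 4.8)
The plan is to reduce the statement to its first assertion, $\gamma \le \tauprime$, and then prove that inequality by chaining the envy‑count relations already available together with one elementary density estimate. For the reduction, suppose $\gamma \le \tauprime$ is known. Then $\Size X \gamma$ is a density‑wise prefix of $\Size X \tauprime$, so $\prof\left(\Size X \gamma\right) \le \prof\left(\Size X \tauprime\right)$. Applying the definition of $t$ in equation (\ref{eq:ef2}) with $i = t$, and using $h(t) = \tauprime$ together with $\Goods \Y t = \Size \Y \tauprime$, we have $\efcount\left(\Size X \tauprime, \Size \Y \tauprime\right) = 2$. Since $g_{\Y} = z_t$ belongs to $\Size \Y \tauprime = \{z_1,\ldots,z_t\}$, deleting the single good $g_{\Y}$ cannot bring the value down to $\prof\left(\Size X \tauprime\right)$, i.e.\ $\prof\left(\Size \Y \tauprime - g_{\Y}\right) > \prof\left(\Size X \tauprime\right)$; and since $\Size \Y \tauprime - g_{\Y} = \{z_1,\ldots,z_{t-1}\} = \Size \Y \tau$, this yields $\prof\left(\Size X \gamma\right) \le \prof\left(\Size X \tauprime\right) < \prof\left(\Size \Y \tau\right)$, which is the second assertion. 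So it suffices to prove $\gamma \le \tauprime$.

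I would prove $\gamma \le \tauprime$ by contradiction, assuming $\gamma > \tauprime$. By (\ref{eq:def-g}) and the distinctness of densities, $\Size X \gamma$ is exactly the set of goods of $X$ whose density exceeds $\dens(g_{\Y})$, and it has total size $\gamma$. Since $\tau < \tauprime < \gamma$, both prefix subsets $\Size X \tau$ and $\Size X \tauprime$ sit inside $\Size X \gamma$, so every (possibly fractional) good contributing to either of them has density strictly larger than $\dens(g_{\Y})$. Hence the slice $\Size X \tauprime \setminus \Size X \tau$ has total size $\tauprime - \tau = s(g_{\Y})$ and value strictly greater than $\dens(g_{\Y}) \cdot s(g_{\Y}) = \prof(g_{\Y})$, so that
\[
\prof\left(\Size X \tauprime\right) = \prof\left(\Size X \tau\right) + \prof\left(\Size X \tauprime \setminus \Size X \tau\right) > \prof\left(\Size X \tau\right) + \prof(g_{\Y}).
\]

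To close the argument I would feed in the two envy‑count equalities at levels $t-1$ and $t$. By (\ref{eqn:xtau-ztau}) we have $\efcount\left(\Size X \tau, \Size \Y \tau\right) = 1$, so removing the highest‑value good of $\Size \Y \tau = \{z_1,\ldots,z_{t-1}\}$ already drops the value to at most $\prof\left(\Size X \tau\right)$; that is, $\prof\left(\Size X \tau\right) \ge \prof\left(\{z_1,\ldots,z_{t-1}\}\right) - \max_{i<t} \prof(z_i)$. Substituting into the display and using $\prof\left(\{z_1,\ldots,z_{t-1}\}\right) + \prof(g_{\Y}) = \prof\left(\{z_1,\ldots,z_t\}\right)$ gives $\prof\left(\Size X \tauprime\right) > \prof\left(\{z_1,\ldots,z_t\}\right) - \max_{i<t}\prof(z_i)$. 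On the other hand, $\efcount\left(\Size X \tauprime, \Size \Y \tauprime\right) = 2$ means that even removing the highest‑value good of $\{z_1,\ldots,z_t\}$ leaves a value exceeding $\prof\left(\Size X \tauprime\right)$, i.e.\ $\prof\left(\Size X \tauprime\right) < \prof\left(\{z_1,\ldots,z_t\}\right) - \max_{i\le t}\prof(z_i)$. Comparing the two estimates forces $\max_{i\le t}\prof(z_i) < \max_{i<t}\prof(z_i)$, which is impossible since the left maximum ranges over a superset. This contradiction proves $\gamma \le \tauprime$ and hence the claim.

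The one step I expect to need genuine care is the fractional‑goods bookkeeping in the second paragraph: one must verify that, under the hypothesis $\gamma > \tauprime$, the prefix subsets $\Size X \tau$, $\Size X \tauprime$ and the slice between them really do consist only of (fractions of) goods denser than $g_{\Y}$ — which follows immediately because all of their sizes are at most $\gamma = s(\{g\in X:\dens(g)>\dens(g_{\Y})\})$ — so that the pointwise bound $\prof(\cdot) > \dens(g_{\Y})\cdot s(\cdot)$ may be applied to every piece, including the at‑most‑one fractional good, and so that the set‑difference/value identities used are exactly those set up by Definition \ref{definition:prefix-subset} and Lemma \ref{lemma:lipschitz}. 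Everything else is a direct manipulation of the envy‑count inequalities supplied by the definition of $t$.
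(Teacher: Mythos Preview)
Your proof is correct and follows essentially the same route as the paper: both argue $\gamma\le\tauprime$ by contradiction via the density estimate on the slice $\Size{X}{\tauprime}\setminus\Size{X}{\tau}$, and both derive the second inequality from the first using that $\efcount\left(\Size{X}{\tauprime},\Size{\Y}{\tauprime}\right)=2$ forces $\prof\left(\Size{X}{\tauprime}\right)<\prof\left(\Size{\Y}{\tauprime}-g_{\Y}\right)=\prof\left(\Size{\Y}{\tau}\right)$. The only cosmetic difference is that, for the first part, the paper packages the contradiction by invoking Lemma~\ref{envy_diff} (in its $c=1$ form) to conclude $\efcount\left(\Size{X}{\tauprime},\Size{\Y}{\tauprime}\right)\le 1$ directly, whereas you unroll that lemma by hand into the comparison $\max_{i\le t}\prof(z_i)<\max_{i<t}\prof(z_i)$; the two arguments are equivalent.
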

\begin{proof}
We will first establish the stated upper bound on $\gamma$. Assume, towards a contradiction, that $\gamma > \tauprime$. By definition of $\gamma$, we have that all the goods in $\Size{X}{\gamma}$ have density at least $\rho(g_{\Y})$. Now, given that $\gamma > \tauprime$, we get that the density of each good in $\Size{X}{\tauprime}$ is at least $\rho(g_{\Y})$. In particular, all the goods in the set $\Size{X}{\tauprime} \setminus \Size{X}{\tau}$ are at least as dense as $g_{\Y}$. Hence, $\prof\left(\Size X \tauprime \setminus \Size X \tau\right) \geq \prof\left(\Size \Y \tauprime \setminus \Size \Y \tau\right) = \prof(g_{\Y})$. This inequality and equation (\ref{eqn:xtau-ztau}) give us $\efcount(\Size{X}{\tauprime}, \Size{\Y}{\tauprime}) \leq 1$; see Lemma \ref{envy_diff}. This bound, however, contradicts the definition of $t$ (and, correspondingly, $\tauprime$) as specified in equation (\ref{eq:ef2}). Hence, by way of contradiction, we obtain the desired upper bound $\gamma \leq \tauprime$.  

Next, we prove the second inequality from the claim. For a contradiction, assume that $\prof\left(\Size X \gamma\right) \geq \prof\left(\Size \Y \tau\right)$. Since $\gamma \leq \tauprime$, we further get $\prof\left(\Size{X}{\tauprime}\right) \geq \prof\left(\Size \Y \tau\right) =  \prof\left(\Size \Y \tauprime - g_{\Y}\right)$. That is, $\efcount\left(\Size X \tauprime, \Size \Y \tauprime\right) \leq 1$. This envy count contradicts the definition of $t$ (and, correspondingly, $\tauprime$); see equation (\ref{eq:ef2}). Therefore, we obtain the second part of the claim. 
\end{proof}

We will now prove Lemma \ref{lemma}.

\begin{lemma}\label{lemma}
$\prof\left(X \setminus \Size X \tau\right) \geq \prof\left(\Y \setminus \Size \Y \tauprime\right)$.
\end{lemma}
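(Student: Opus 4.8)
The plan is to distill from the run of $\densestgreedy$ a single size inequality, $\gamma+s(g_{\Y})>B_a$, and then close the argument with a short density‑counting estimate. To set up, recall that $g_{\Y}=z_t$, that $\Y\setminus\Size{\Y}{\tauprime}=\{z_{t+1},\dots,z_{\ell'}\}$ is exactly the set of goods of $\Y$ strictly less dense than $g_{\Y}$, and that $\Size{X}{\gamma}$ is exactly the set of goods of $X$ strictly denser than $g_{\Y}$ (note $g_{\Y}\notin X$, as $X$ and $Y$ are disjoint). I will also use, from Claim \ref{lemma5}, that $\prof\!\left(\Size{X}{\gamma}\right)<\prof\!\left(\Size{\Y}{\tau}\right)$; interestingly, the bound $\gamma\le\tauprime$ from that claim will not be needed.

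\emph{Step 1: $\gamma+s(g_{\Y})>B_a$.} Let $q$ be the index with $g_{\Y}=y_q$ in $A_b=Y$; then the set $\Goods{\Y}{t-1}$ of goods of $\Y$ denser than $g_{\Y}$ is contained in the set $\Goods{A_b}{q-1}$ of goods of $A_b$ denser than $g_{\Y}$, so $\prof\!\left(\Goods{A_b}{q-1}\right)\ge\prof\!\left(\Goods{\Y}{t-1}\right)=\prof\!\left(\Size{\Y}{\tau}\right)$. I split on whether $A_a$ contains a good less dense than $g_{\Y}$. If it does not, then $A_a=\Size{X}{\gamma}$ and $s(A_a)=\gamma$; were $\gamma+s(g_{\Y})\le B_a$, I would apply Proposition \ref{greedyproperty-full} (with index $j=q-1$) to get $\prof\!\left(\Size{X}{\gamma}\right)=\prof(A_a)\ge\prof\!\left(\Goods{A_b}{q-1}\right)\ge\prof\!\left(\Size{\Y}{\tau}\right)$, contradicting Claim \ref{lemma5}; hence $\gamma+s(g_{\Y})>B_a$. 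If it does, let $f$ be the first good of density below $\dens(g_{\Y})$ that $\densestgreedy$ assigns to agent $a$; since $a$ receives goods in decreasing order of density, the bundle of $a$ at that moment is precisely $\Size{X}{\gamma}$. If $g_{\Y}$ were already allocated by then --- necessarily to $b$ --- then at the (earlier) step where $b$ received $g_{\Y}$ the agent $a$ was still active and held a subset of $\Size{X}{\gamma}$, so the minimum‑value selection rule would give $\prof\!\left(\Size{X}{\gamma}\right)\ge\prof\!\left(\Goods{A_b}{q-1}\right)\ge\prof\!\left(\Size{\Y}{\tau}\right)$, again contradicting Claim \ref{lemma5}. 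So $g_{\Y}$ is still available at that moment; since $\densestgreedy$ nevertheless picked the less dense good $f$, the good $g_{\Y}$ cannot have fit into $a$'s residual budget $B_a-\gamma$, i.e. $\gamma+s(g_{\Y})>B_a$.

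\emph{Step 2: density counting.} Since $B_a\ge s(\Y)\ge s\!\left(\Goods{\Y}{t}\right)=\tauprime=\tau+s(g_{\Y})$, Step 1 yields both $\gamma>\tau$ and $s\!\left(\Y\setminus\Size{\Y}{\tauprime}\right)=s(\Y)-\tauprime<\gamma-\tau$. Now $\Size{X}{\gamma}\setminus\Size{X}{\tau}$ is a nonempty sub‑collection of $X\setminus\Size{X}{\tau}$ of total size $\gamma-\tau$ consisting only of goods strictly denser than $g_{\Y}$, so its value exceeds $\dens(g_{\Y})(\gamma-\tau)$; whereas $\Y\setminus\Size{\Y}{\tauprime}$ consists only of goods strictly less dense than $g_{\Y}$ and has size $<\gamma-\tau$, so its value is less than $\dens(g_{\Y})(\gamma-\tau)$. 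Chaining these bounds gives $\prof\!\left(X\setminus\Size{X}{\tau}\right)\ge\prof\!\left(\Size{X}{\gamma}\setminus\Size{X}{\tau}\right)>\prof\!\left(\Y\setminus\Size{\Y}{\tauprime}\right)$, which is the claimed inequality (with room to spare).

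\emph{Where the difficulty lies.} The entire argument hinges on Step 1, the inequality $\gamma+s(g_{\Y})>B_a$; this is the only place the dynamics of $\densestgreedy$ enter. The delicate case is the one in which $a$ eventually picks up a good less dense than $g_{\Y}$: there one must identify the exact step at which this first happens and use the minimum‑value selection rule together with Claim \ref{lemma5} to rule out that $g_{\Y}$ has already been handed to $b$ by that step. Step 2 is then just routine arithmetic with densities and prefix sizes.
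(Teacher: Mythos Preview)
Your proof is correct and follows the same two-step route as the paper: first establish the key size inequality $\gamma + s(g_{\Y}) > B_a$ from the greedy dynamics together with Claim~\ref{lemma5}, then compare sizes and densities of $\Size{X}{\gamma}\setminus\Size{X}{\tau}$ against $\Y\setminus\Size{\Y}{\tauprime}$ to conclude. Your Step~1 is in fact more careful than the paper's---the paper invokes Proposition~\ref{greedyproperty} in one compressed line, whereas you explicitly split on whether $X\setminus\Size{X}{\gamma}$ is empty (using Proposition~\ref{greedyproperty-full} in that case and a direct timing argument otherwise), which cleanly handles the boundary case where Proposition~\ref{greedyproperty}'s hypothesis $i<|A_a|$ would not apply.
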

\begin{proof}
Since $\Y \subseteq Y$, the good $g_{\Y}$ appears in the subset $Y$. Recall that the goods in the subsets $Z$ and $Y = \{y_1, y_2, \ldots, y_\ell \}$ are indexed in order of decreasing density. Write $t' \in [|Y|]$ to denote the index of $g_{\Y}$ in $Y$ (i.e., $g_{\Y} = y_{t'}$). Claim \ref{lemma5} gives us 
$\prof\left(\Size X \gamma\right) < \prof\left(\Size \Y \tau\right) = \sum_{i=1}^{t-1} \prof(z_i)\le\sum_{i=1}^{t'-1} \prof(y_i)$. That is, $\prof\left(\Size X \gamma\right) < \prof\left(\Goods Y {t'-1} \right)$.  Also, by definition of $\gamma$ (equation \ref{eq:def-g}), we have that the goods in $X \setminus {\Size{X}{\gamma}}$ (if any) have density less than $\rho(g_{\Y})$. These observations and Proposition \ref{greedyproperty} imply that including $g_{\Y}$ in ${\Size{X}{\gamma}}$ must violate agent $a$'s budget $B_a$, i.e., it must be the case that  
\begin{align}
	\gamma + s(g_{\Y}) > B_a\label{eq:gz-doesnt-fit}
\end{align}

Using inequality (\ref{eq:gz-doesnt-fit}), we will prove that $\prof\left(\Size X \gamma \setminus \Size X \tau\right) \geq \prof\left(\Y \setminus \Size \Y \tauprime\right)$. This bound directly implies the lemma, since $\Size{X}{\gamma} \subseteq X$. In particular, the size of the concerned set satisfies 
\begin{align}
s\left(\Size X \gamma \setminus \Size X \tau\right) & = \gamma - \tau  \nonumber \\
&  = \gamma - \tauprime + s(g_{\Y}) \tag{$\tauprime - s(g_{\Y}) = \tau$} \\
& > B_a - \tauprime \tag{via inequality (\ref{eq:gz-doesnt-fit})}\\
& \geq s\left(\Y \setminus \Size \Y \tauprime\right) \label{ineq:sizelb}
\end{align}
The last inequality follows from the facts that $s(\Y) \leq B_a$ and $s( \Size \Y \tauprime) = \tauprime$. 
Furthermore, by definition of $\gamma$, we have that every good $g \in \Size X \gamma \setminus \Size X \tau$ has density $\dens(g) \geq \dens(g_{\Y})$. In addition, for every good $g' \in \Y \setminus \Size \Y \tauprime$, the density $\dens(g') \leq \dens(g_{\Y})$. These bounds on the densities and the sizes of the subsets ${\Size{X}{\gamma}} \setminus \Size X \tau$ and $\Y \setminus \Size \Y \tauprime$ give us $\prof\left( {\Size{X}{\gamma}} \setminus \Size X \tau\right) \geq \prof\left(\Y \setminus \Size \Y \tauprime\right)$. As mentioned previously, this inequality and the containment $\Size{X}{\gamma} \subseteq X$ imply $\prof\left(X \setminus \Size X \tau\right) \geq \prof\left(\Y \setminus \Size \Y \tauprime\right)$. The lemma stands proved. 
\end{proof}

Overall, Lemma \ref{small_ef2} gives us $\efcount\left(\Size X {\tau}, \Size \Y \tauprime\right) = 2$. In addition, via Lemma \ref{lemma}, we have $\prof\left(X \setminus \Size X \tau\right) \geq \prof\left(\Y \setminus \Size \Y \tauprime\right)$. Therefore, applying Lemma \ref{envy_diff}, we conclude that $\efcount(X, \Y) \leq 2$. This establishes the desired $\EFtwo$ guarantee for the allocation computed by Algorithm \ref{algo:dense} and completes the proof of Theorem \ref{thm:densestgreedy-ef2}.

\subsection{Fair Division in Proportional Instances}
\label{section:efone-proportional-instances}

This section shows that, if all the goods have the same density, then an $\EFone$ allocation can be computed in polynomial time. 
While in the rest of the paper we assume that the goods have distinct densities, in this section we in fact address goods with exactly the same densities. We address this technical difference, by simply including any consistent tie breaking rule in Algorithm \ref{algo:dense}. That is, for proportional instances, the $\densestgreedy$ algorithm applies a tie breaking rule (e.g., lowest index first) while selecting among the unallocated goods in Line \ref{line:assign-good}. With this minor modification, all the previously established results (specifically, Lemma \ref{lemma}) continue to hold for proportional instances. Next, we establish the $\EFone$ guarantee for proportional instances.  

\begin{theorem}
\label{thm:unit-density-ef1}
For any given budget-constrained fair division instance $\langle [m], [n], \{ v(g) \}_{g}, \{s(g)\}_{g}, \{B_a\}_{a} \rangle$ in which all the goods have the same density (i.e., ${v(g)}/{s(g)} = {v(g')}/{s(g')}$ for all $g, g' \in [m]$), Algorithm \ref{algo:dense} ($\densestgreedy$) computes an $\EFone$ allocation in polynomial time.
\end{theorem}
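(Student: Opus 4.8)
The plan is to leverage the machinery already built for the $\EFtwo$ proof and show that, in the proportional (equal-density) case, the envy count between any two agents' bundles never reaches $2$. Fix agents $a, b \in [n+1]$ with final bundles $X = A_a$ and $Y = A_b$, and take any $\Y \subseteq Y$ with $s(\Y) \le B_a$. As in the $\EFtwo$ argument, it suffices to show $\efcount(X, \Y) \le 1$. Suppose for contradiction that $\efcount(X, \Y) \ge 2$. Then Lemma \ref{existenceof2cut} and the index $t$ from equation (\ref{eq:ef2}) are available, along with the thresholds $\tau = h(t-1)$ and $\tauprime = h(t)$, the good $g_{\Y} = z_t$, and equation (\ref{eqn:xtau-ztau}): $\efcount(\Size{X}{\tau}, \Size{\Y}{\tau}) = 1$. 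The key point is that with a consistent tie-breaking rule in place, all the lemmas invoked in Section \ref{section:proof-of-main-theorem}---in particular Proposition \ref{greedyproperty}, Claim \ref{lemma5}, and Lemma \ref{lemma}---still go through (the paper explicitly asserts this), so I may use $\prof(X \setminus \Size{X}{\tau}) \ge \prof(\Y \setminus \Size{\Y}{\tauprime})$ and all its supporting inequalities.

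\textbf{Main step.} The extra leverage in the equal-density setting is that value is exactly proportional to size: there is a constant $\dens$ with $\prof(S) = \dens \cdot s(S)$ for every subset $S$. I would use this to sharpen the size bound (\ref{ineq:sizelb}) into a value bound with slack. Recall from the proof of Lemma \ref{lemma} that inequality (\ref{eq:gz-doesnt-fit}) gives $\gamma + s(g_{\Y}) > B_a$, hence
\begin{align*}
s\left(\Size{X}{\gamma} \setminus \Size{X}{\tau}\right) = \gamma - \tau = \gamma - \tauprime + s(g_{\Y}) > B_a - \tauprime \ge s\left(\Y \setminus \Size{\Y}{\tauprime}\right).
\end{align*}
Since all densities equal $\dens$, multiplying through by $\dens$ yields $\prof\left(\Size{X}{\gamma} \setminus \Size{X}{\tau}\right) > \prof\left(\Y \setminus \Size{\Y}{\tauprime}\right)$, and in fact $\prof\left(\Size{X}{\gamma} \setminus \Size{X}{\tau}\right) \ge \prof\left(\Y \setminus \Size{\Y}{\tauprime}\right) + \dens\bigl(B_a - \tauprime - s(\Y \setminus \Size{\Y}{\tauprime})\bigr) + \dens\bigl(\gamma + s(g_\Y) - B_a\bigr)$, a strict gain of at least $\dens \cdot s(g_\Y) = \prof(g_\Y)$ over what Lemma \ref{lemma} guarantees, because $\gamma - \tau \ge (B_a - \tauprime) + s(g_\Y) \ge s(\Y\setminus\Size\Y\tauprime) + s(g_\Y)$ needs care---more robustly, I would just track that $s(\Size{X}{\gamma}\setminus\Size{X}{\tau}) - s(\Y\setminus\Size{\Y}{\tauprime}) \ge s(g_\Y)$ whenever $\gamma - \tauprime \ge 0$, which holds by Claim \ref{lemma5}. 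Converting via $\dens$: $\prof(X \setminus \Size{X}{\tau}) \ge \prof(\Size{X}{\gamma}\setminus\Size{X}{\tau}) \ge \prof(\Y\setminus\Size{\Y}{\tauprime}) + \prof(g_\Y)$.

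\textbf{Conclusion from the slack.} Now combine this with the $\efcount(\Size{X}{\tau},\Size{\Y}{\tau}) = 1$ fact. From equation (\ref{eqn:xtau-ztau}) there is a good $g_1 \in \Size{\Y}{\tau}$ with $\prof(\Size{X}{\tau}) \ge \prof(\Size{\Y}{\tau} - g_1) = \prof(\Size{\Y}{\tauprime}) - \prof(g_\Y) - \prof(g_1)$. Adding the value-slack bound,
\begin{align*}
\prof(X) &= \prof(\Size{X}{\tau}) + \prof(X \setminus \Size{X}{\tau}) \\
&\ge \prof(\Size{\Y}{\tauprime}) - \prof(g_\Y) - \prof(g_1) + \prof(\Y \setminus \Size{\Y}{\tauprime}) + \prof(g_\Y) \\
&= \prof(\Y) - \prof(g_1).
\end{align*}
Hence removing the single good $g_1$ from $\Y$ brings its value down to at most $\prof(X)$, so $\efcount(X, \Y) \le 1$, contradicting $\efcount(X,\Y) \ge 2$. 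This yields the $\EFone$ guarantee for every ordered pair of agents, and since the runtime bound is the same as in Theorem \ref{thm:densestgreedy-ef2}, the theorem follows.

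\textbf{Anticipated obstacle.} The delicate point is pinning down exactly how much extra value the proportionality buys---i.e., verifying that the slack in the size inequality is at least $s(g_\Y)$ and not merely some unquantified positive amount. This hinges on $\gamma \le \tauprime$ (Claim \ref{lemma5}) together with (\ref{eq:gz-doesnt-fit}) and $s(\Y) \le B_a$; I would want to double-check the chain $\gamma - \tau = (\gamma - \tauprime) + s(g_\Y)$ and that $(\gamma - \tauprime) + (B_a - \tauprime) \ge 0$ arithmetic carefully, since a sign error there collapses the argument. A secondary concern is making sure the tie-breaking subtlety (goods of equal density) does not break Proposition \ref{greedyproperty} as used inside Lemma \ref{lemma}; the paper asserts it does not, so I would simply cite that, but in a fully self-contained write-up I would re-examine the one place where strict density comparison $\dens(g_{i+1}) > \dens(h_{j+1})$ is invoked and confirm the consistent tie-break makes the needed inequality non-strict-but-sufficient.
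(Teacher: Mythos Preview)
Your argument hinges on the slack inequality $\prof(X \setminus \Size{X}{\tau}) \geq \prof(\Y \setminus \Size{\Y}{\tauprime}) + \prof(g_{\Y})$, which you try to obtain from a size-slack of at least $s(g_{\Y})$. Unwinding the arithmetic,
\[
s\bigl(\Size{X}{\gamma} \setminus \Size{X}{\tau}\bigr) - s\bigl(\Y \setminus \Size{\Y}{\tauprime}\bigr) \;=\; (\gamma - \tau) - (s(\Y) - \tauprime) \;=\; \gamma + s(g_{\Y}) - s(\Y),
\]
so the bound you need is equivalent to $\gamma \geq s(\Y)$. But Claim \ref{lemma5} gives $\gamma \leq \tauprime \leq s(\Y)$; your appeal to Claim \ref{lemma5} for ``$\gamma - \tauprime \geq 0$'' has the direction reversed. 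The actual slack $\gamma + s(g_{\Y}) - s(\Y)$ is indeed positive (via (\ref{eq:gz-doesnt-fit}) and $s(\Y) \leq B_a$) but can be arbitrarily smaller than $s(g_{\Y})$. Consequently your final chain only recovers $\prof(X) \geq \prof(\Y) - \prof(g_1) - \prof(g_{\Y})$, which is just the $\EFtwo$ bound, and the contradiction never materializes.

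The paper extracts the extra strength of equal density at the \emph{prefix} rather than in the tail. From inequality (\ref{ineq:sizelb}) one has $s(X) \geq s(\Size{X}{\gamma}) > \tau$, so $\Size{X}{\tau}$ has size exactly $\tau$, as does $\Size{\Y}{\tau}$; proportionality then gives the exact equality $\prof(\Size{X}{\tau}) = \prof(\Size{\Y}{\tau})$. Combining this with Lemma \ref{lemma} (used as-is, with no extra slack) yields
\[
\prof(X) \;=\; \prof(\Size{\Y}{\tau}) + \prof(X \setminus \Size{X}{\tau}) \;\geq\; \prof(\Size{\Y}{\tau}) + \prof(\Y \setminus \Size{\Y}{\tauprime}) \;=\; \prof(\Y) - \prof(g_{\Y}),
\]
hence $\efcount(X,\Y) \leq 1$. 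In short, the improvement over $\EFtwo$ comes from replacing the weak bound $\prof(\Size{X}{\tau}) \geq \prof(\Size{\Y}{\tau}) - \prof(g_1)$ (which is all that $\efcount = 1$ gives you) by the exact equality $\prof(\Size{X}{\tau}) = \prof(\Size{\Y}{\tau})$, not from tightening Lemma \ref{lemma}.
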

\begin{proof}
We use the constructs defined in Sections \ref{section:structural-props} and \ref{section:proof-of-main-theorem}. In particular, let $X$ be the set of goods allocated to an agent $a \in [n]$ and let $Y$ be the set of goods allocated to an agent $b \in [n]$, or to the charity at the end of Algorithm \ref{algo:dense}. Consider any subset $\Y \subseteq Y$, with size $s(\Y) \leq B_a$ (and $|\Y| \geq 2$). By way of contradiction,
we will show that $\efcount(X,\Y) \leq 1$ and, hence, obtain the stated $\EFone$ guarantee.

Assume, towards a contradiction, that $\efcount(X,\Y) \geq 2$. In such a case, the constructs (specifically, $t$, $\tau$, $\tauprime$, and the good $g_{\Y}$) considered in Sections \ref{section:structural-props} and \ref{section:proof-of-main-theorem} are well-defined. Using the previously-established properties of these constructs, we will show that there necessarily exists
a good $g \in \Y$ such that $\prof(X) \geq \prof(\Y - g)$. Hence, by contradiction, we will get that $\efcount(X, \Y) <2$, i.e., $\EFone$ holds between $X$ and $\Y$. 

We first note that the size of the set $X$ is at least $\tau$. This follows from inequality (\ref{ineq:sizelb}), which gives us $s\left(\Size{X}{\gamma} \setminus \Size{X}{\tau}\right)  > 0$ and, hence, we have  $s(X) - \tau \geq s\left(\Size{X}{\gamma}\right) - \tau > 0$. This lower bound on the size of $X$ implies that the prefix subset $\Size{X}{\tau}$ has size exactly equal to $\tau$. 
In addition, $s(\Size{\Y}{\tau}) = \tau$. Now, given that all the goods have the same density, we obtain
\begin{align}
v(\Size{X}{\tau}) = v(\Size{\Y}{\tau}) \label{ineq:valxtau-ztau}
\end{align}
Therefore, 
\begin{align*}
\prof(X) &= \prof\left(\Size X \tau\right) + \prof\left(X \backslash \Size X \tau\right) \\
         &= \  \prof\left(\Size \Y \tau\right) + \prof\left(X \backslash \Size X \tau\right) \tag{via equation (\ref{ineq:valxtau-ztau})}\\
         &\geq  \prof\left(\Size \Y \tau\right) + \prof\left(\Y \backslash \Size \Y \tauprime \right) \tag{via Lemma \ref{lemma}} \\
        &=  \prof\left(\Size \Y \tau\right) + \prof\left(\Y \backslash \Size \Y \tau\right) - \prof(g_{\Y}) \tag{since $\Size{\Y}{\tauprime} \setminus \Size{\Y}{\tau} = \{g_{\Y}\}$} \\
         &\geq  \prof(\Y - g_{\Y}) 
\end{align*}
Hence, we obtain that $\efcount(X,Z)\le 1$, which is a contradiction. This establishes the theorem.
\end{proof}

\subsection{Fair Division {of Equal-Sized Goods}}
This section shows that the $\densestgreedy$ algorithm finds $\EFone$ allocations for
instances in which all the goods have equal sizes.\footnote{As in the general case, the values of the goods can be distinct. Here, we also retain the assumption that all the goods have distinct densities.}

\begin{theorem}
\label{thm:unit-size-ef1}
For any given budget-constrained fair division instance $\langle [m], [n], \{ v(g) \}_{g}, \{s(g)\}_{g}, \{B_a\}_{a} \rangle$ in which all the goods have the same size (i.e., $s(g) = s(g')$ for all $g, g' \in [m]$), Algorithm \ref{algo:dense} ($\densestgreedy$) computes an $\EFone$ allocation in polynomial time.
\end{theorem}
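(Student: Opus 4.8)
The plan is to follow the template of the proof of Theorem~\ref{thm:unit-density-ef1}: I reuse the constructs $t,\tau,\tauprime,g_{\Y},\gamma$ and the facts established in Sections~\ref{section:structural-props} and \ref{section:proof-of-main-theorem}, but I replace the identity $\prof(\Size X\tau)=\prof(\Size\Y\tau)$ (which held there because all densities coincided) by a cardinality argument that exploits the common size $\sigma$ of the goods. Fix agents $a,b\in[n+1]$ with bundles $X=A_a$, $Y=A_b$, and a subset $\Y\subseteq Y$ with $s(\Y)\le B_a$; the goal is $\efcount(X,\Y)\le 1$, so assume towards a contradiction that $\efcount(X,\Y)\ge 2$, which makes the constructs above well-defined (with $t\ge 2$ and $g_{\Y}=z_t$) and, via inequality~(\ref{eq:gz-doesnt-fit}), yields $\gamma+\sigma>B_a$. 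In particular $a\neq n+1$ (otherwise $B_a=\infty$ and this last inequality already fails), so $B_a<\infty$; put $k_a\coloneqq\floor{B_a/\sigma}$, the largest number of goods that fit within agent $a$'s budget.

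The crucial first step is to convert the size inequality $\gamma+\sigma>B_a$ into a cardinality statement. Since every good has size $\sigma$, we have $\gamma=\sigma N$, where $N$ is the number of goods of $X$ strictly denser than $g_{\Y}$ (strict, since $g_{\Y}\in A_b$ while $X=A_a$, and densities are distinct); hence $(N+1)\sigma>B_a$ forces $N\ge k_a$. On the other hand, feasibility gives $\abs X\sigma=s(X)\le B_a$, so $\abs X\le k_a$, and trivially $N\le\abs X$. Combining, $N=\abs X=k_a$: agent $a$'s bundle is completely full and \emph{every} good in it is strictly denser than $g_{\Y}$. Moreover $s(\Y)\le B_a$ gives $\abs\Y\le k_a$ while $g_{\Y}=z_t\in\Y$ gives $\abs\Y\ge t$, so $\abs X=k_a\ge\abs\Y\ge t$.

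Next, since the goods have a common size, the prefix subset at threshold $\tau=s(\Goods\Y{t-1})=(t-1)\sigma$ carries no fractional good: as $\abs X\ge t-1$, we get $\Size X\tau=\Goods X{t-1}$ and $\Size\Y\tau=\Goods\Y{t-1}$. Equation~(\ref{eqn:xtau-ztau}) then reads $\efcount(\Goods X{t-1},\Goods\Y{t-1})=1$, and removing the densest (hence most valuable) good $z_1$ of $\Y$ yields $\prof(\Goods X{t-1})\ge\prof(\Goods\Y{t-1})-\prof(z_1)$. For the tails, write $X\setminus\Goods X{t-1}=\{x_t,\dots,x_{k_a}\}$ and $\Y\setminus\Goods\Y{t-1}=\{z_t,\dots,z_{\abs\Y}\}$: the former has $k_a-t+1\ge\abs\Y-t+1$ goods, each strictly denser than $g_{\Y}=z_t$ and hence (common size) of value exceeding $\prof(z_t)\ge\prof(z_j)$ for every $j\ge t$; matching $x_{t+i}$ with $z_{t+i}$ for $0\le i\le\abs\Y-t$ and discarding the surplus positive terms of the left sum gives $\prof(X\setminus\Goods X{t-1})\ge\prof(\Y\setminus\Goods\Y{t-1})$. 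Adding these two inequalities, $\prof(X)\ge\prof(\Goods\Y{t-1})+\prof(\Y\setminus\Goods\Y{t-1})-\prof(z_1)=\prof(\Y)-\prof(z_1)=\prof(\Y-z_1)$, so $\efcount(X,\Y)\le 1$, contradicting the assumption. Hence $\densestgreedy$ computes an $\EFone$ allocation, and the polynomial running time is inherited from Theorem~\ref{thm:densestgreedy-ef2}.

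The step I expect to be the main obstacle is the first one: extracting from inequality~(\ref{eq:gz-doesnt-fit}) the clean conclusion that $X$ is completely full and entirely denser than $g_{\Y}$, which hinges on handling the floor $k_a=\floor{B_a/\sigma}$ carefully (especially when $B_a/\sigma\notin\mathbb Z$). Once that structural fact is in hand, the rest is a routine density-versus-cardinality comparison of the two tails, with no new machinery required beyond Sections~\ref{section:structural-props}--\ref{section:proof-of-main-theorem}.
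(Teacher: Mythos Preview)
Your proof is correct, but it takes a genuinely different route from the paper's. The paper does \emph{not} reuse the $\EFtwo$ machinery here; instead it argues directly by a case split on cardinalities. In Case~1 ($|X|\le|\Y|-1$), the least-dense good $y_{\ell'}\in\Y$ fits into $A_a$ (since $s(X)\le B_a-\beta$), and Proposition~\ref{greedyproperty-full} immediately gives $\prof(X)\ge\prof(\Goods{Y}{\ell'-1})\ge\prof(\Y-y_{\ell'})$. In Case~2 ($|X|\ge|\Y|$), the paper proves by induction on $i$ that $\efcount(\Goods{X}{i},\Goods{\Y}{i})\le1$ for all $i\le|\Y|$: the inductive step either invokes Lemma~\ref{lemma:lipschitz} (when the count was $0$) or uses Proposition~\ref{greedyproperty} together with the common size to get $\prof(x_i)\ge\prof(z_i)$ (when the count was $1$).

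Your argument instead mirrors the proportional-case proof (Theorem~\ref{thm:unit-density-ef1}): you import $t,\tau,\tauprime,g_{\Y},\gamma$ and the established facts~(\ref{eqn:xtau-ztau}) and~(\ref{eq:gz-doesnt-fit}), and the equal-size hypothesis lets you convert $\gamma+\sigma>B_a$ into the sharp structural statement $N=|X|=k_a\ge|\Y|$ with \emph{every} good of $X$ strictly denser than $g_{\Y}$. This in effect collapses the paper's case split (your argument automatically lands in Case~2) and replaces the induction by a single head/tail decomposition at index $t-1$. What you gain is uniformity with the $\EFtwo$ and proportional proofs and a shorter argument once Sections~\ref{section:structural-props}--\ref{section:proof-of-main-theorem} are in hand; what the paper's proof gains is self-containment, relying only on Propositions~\ref{greedyproperty} and~\ref{greedyproperty-full} and Lemma~\ref{lemma:lipschitz} rather than on the deeper facts~(\ref{eqn:xtau-ztau}) and~(\ref{eq:gz-doesnt-fit}).
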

\begin{proof}
For instances in which the goods have the same size, ordering the goods in decreasing order of densities is equivalent to ordering them in decreasing order of values. Hence, the $\densestgreedy$ algorithm allocates the goods in decreasing order of value, while considering the budget constraints. Also, write $\beta$ to denote the common size of the goods and note that, in the current context, for any subset of goods $S \subseteq [m]$, we have $s(S)  = \beta \abs{S}$. 

We now estabilsh the $\EFone$ guarantee. Let $X$ be the set of goods allocated to an agent $a \in [n]$ and let $Y$ be the set of goods allocated to an agent $b \in [n]$, or to the charity at the end of Algorithm \ref{algo:dense}. We will prove that for any subset of goods $\Y \subseteq Y$, with $s(\Y) \leq B_a$, we have $\efcount(X,\Y) \leq 1$. Towards this, we consider the following two complementary and exhaustive cases \\
\noindent
Case {\sf 1}:  $\abs X \leq \abs{\Y}-1$. \\
\noindent
Case {\sf 2}: $\abs X \geq \abs{\Y}$. \\

\noindent
Case {\sf 1}: $\abs X \leq \abs{\Y}-1$. Write $Y=\{y_1,y_2,\dots,y_{\ell}\}$ to denote the goods in set $Y$, indexed in decreasing order of densities. Furthermore, let $y_{\ell'}$ denote the good in $\Y$ with minimum density. Equivalently, for the instance in hand, $y_{\ell'}$ is the minimum valued good in $\Y \subseteq Y$. Also, note that $\prof(\Y -y_{\ell'} ) \leq \prof \left(\Goods{Y}{\ell'-1} \right)$. 

Recall that $\beta$ denotes the size of each good. In the current case, we have $\abs X\le\abs{\Y}-1$ and, hence, $s(X)\le s(\Y)-\beta \le B_a-\beta=B_a-s\left(y_{\ell'}\right)$. That is, $s\left(X+y_{\ell'}\right)\le B_a$. Now, using Proposition \ref{greedyproperty-full}, we obtain $\prof(X)\ge\prof\left(\Goods{Y}{\ell'-1}\right)\ge\prof\left(Z-y_{\ell'}\right)$.
This shows that $\efcount(X,Z)\le 1$. \\

\noindent
Case {\sf 2}: $\abs X \geq \abs{\Y}$. Here, we will consider prefix subsets $\Goods{X}{i}$ and
$\Goods{\Y}{i}$ for all $i \in \{1,2,\ldots, |\Y|\}$ and establish, via induction over $i$,
that $\efcount\left(\Goods{X}{i},\Goods{\Y}{i}\right) \leq 1$.
This bound (with $i = |\Y|$) gives us the $\EFone$ guarantee.

For the base case, $i=1$, the stated bound $\efcount\left(\Goods{X}{i},\Goods{\Y}{i}\right) \leq 1$ holds directly,
since $\abs{\Goods{\Y}{1}}=1$. Now, for the induction step, assume that for an index $i \in \{2, 3, \ldots, \abs{Z}\}$, the envy count satisfies $\efcount\left(\Goods{X}{i-1},\Goods{\Y}{i-1}\right) \leq 1$.

Write $\structlb \coloneqq s(\Goods \Y {i-1})$ and $\structub \coloneqq s(\Goods \Y i)$. Since all the goods have the same size $\beta$,  we have $\structlb = (i-1) \beta$ and
$\structub = i \beta$. Furthermore, note that $\Goods{X}{i-1} = \Size{X}{\structlb}$ and $\Goods{X}{i} = \Size{X}{\structub}$. Therefore, the desired inequality $\efcount(\Goods{X}{i},\Goods{\Y}{i}) \leq 1$ is equivalent to $\efcount\left(\Size X \structub, \Size \Y \structub\right) \leq 1$.

By the induction hypothesis, we have
$\efcount\left(\Goods{X}{i-1},\Goods{\Y}{i-1}\right) \leq 1$, i.e.,
$\efcount\left(\Size X \structlb, \Size \Y \structlb\right) \leq 1$.
First, we note that if $\efcount\left(\Size X \structlb, \Size \Y \structlb\right) = 0$,
the induction step follows from Lemma \ref{lemma:lipschitz}:
\begin{align*}
    \efcount\left(\Size X \structub, \Size \Y \structub\right)
    \leq \efcount\left(\Size X \structlb, \Size \Y \structlb\right) + 1 = 1.
\end{align*}
In the complementary case, wherein $\efcount\left(\Size X \structlb, \Size \Y \structlb\right) = 1$,
we have $\prof\left(\Size X \structlb\right) < \prof\left( \Size \Y \structlb\right)$.
Write $x_i$ and $z_i$ to, respectively, denote the $i\Th$ good (indexed in decreasing order of densities)
in the sets $X$ and $\Y$. Since all the goods are of size $\beta$, we have
$s\left(\Size X \structlb\right) + s(z_i) = \structlb + s(z_i) = (i-1)  \beta + \beta = \structub \leq s(\Y) \leq B_a$.
That is, good $z_i$ can be included in $\Size X \structlb$ while maintaining agent $a$'s budget constraint.
Using this observation along with the inequality $\prof(\Size X \structlb) < \prof( \Size \Y \structlb)$
and Proposition \ref{greedyproperty}, we get $\dens(x_i) \geq \dens(z_i)$. This inequality reduces to $\prof(x_i) \geq \prof(z_i)$, since the goods have the same size. Furthermore, note that
$\Size X \structub = \Size X \structlb \cup \{x_i\}$ and $\Size \Y \structub = \Size \Y \structlb \cup \{z_i\}$. Hence, using $\prof(x_i)\ge\prof(z_i)$, we obtain
\begin{align*}
    \efcount\left(\Size X \structub, \Size \Y \structub\right)
    \leq \efcount\left(\Size X \structlb, \Size \Y \structlb\right) = 1.
\end{align*}
Therefore, for all $1\leq i \leq |\Y|$, we have  $\efcount(\Goods{X}{i},\Goods{\Y}{i}) \leq 1$ and,
in particular, $\efcount(X, \Y) \leq 1$. This gives use the stated $\EFone$ guarantee and completes the proof.
\end{proof}

\subsection{Fair Division in Cardinality Instances}
This section establishes the existence of $\EFone$ allocations in instances wherein each good has the same value.\footnote{As in the general case, the goods can have different sizes.} Note that, in such a setup, the densest good is the one with the smallest size. 
We establish the following theorem for cardinality instances. 

\begin{theorem} \label{theorem:cardinality}
For any given budget-constrained fair division instance $\langle [m], [n], \{ v(g) \}_{g}, \{s(g)\}_{g}, \{B_a\}_{a} \rangle$ in which all the goods have the same value (i.e., $v(g) = v(g')$ for all $g, g' \in [m]$), Algorithm \ref{algo:dense} ($\densestgreedy$) computes an $\EFone$ allocation in polynomial time.
\end{theorem}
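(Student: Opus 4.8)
\textbf{Proof proposal for Theorem~\ref{theorem:cardinality}.}

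The plan is to mimic the structure of the proof of Theorem~\ref{thm:unit-size-ef1}, but with the roles of size and value interchanged in a suitable sense, since in a cardinality instance the density ordering is by \emph{increasing} size. Fix two agents $a,b\in[n+1]$ with bundles $X$ and $Y$ at the end of $\densestgreedy$, and fix a subset $\Y\subseteq Y$ with $s(\Y)\le B_a$; by the common-value assumption $\prof(\cdot)$ is just (a constant times) cardinality, so I may as well write $w$ for the common value and note that for any set $S$ we have $\prof(S)=w\abs S$. Thus $\efcount(X,\Y)\le 1$ is equivalent to $\abs X\ge\abs{\Y}-1$. So the entire theorem reduces to the purely combinatorial claim: \emph{$\densestgreedy$ never leaves an agent $a$ with a bundle so small that some other agent holds a feasible-for-$a$ subset of size (in cardinality) at least $\abs{A_a}+2$.}

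I would prove this by contradiction via the greedy selection rules, exactly as in Case~{\sf 1} of Theorem~\ref{thm:unit-size-ef1}. Suppose $\abs X\le\abs{\Y}-2$. Let $g_{\Y}$ be the least dense good of $\Y$, i.e.\ (in a cardinality instance) the \emph{largest} good of $\Y$; since $\Y$ has at least two goods, $\Y-g_{\Y}$ is nonempty and $\prof(\Y-g_{\Y})=w(\abs\Y-1)$. The key size computation: because $g_{\Y}$ is the largest good of $\Y$ and $\abs\Y\ge\abs X+2$, we have $s(g_{\Y})\ge \frac{1}{\abs\Y}s(\Y)$, hence $s(X)+s(g_{\Y})$ need not be small in general --- so the naive argument that ``$g_{\Y}$ fits into $X$'' is exactly where care is needed. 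Instead I would pass to a prefix of $\Y$: let $\Y'$ be the $(\abs X+1)$ densest goods of $\Y$ (the $\abs X+1$ \emph{smallest} goods of $\Y$), so $\abs{\Y'}=\abs X+1$ and, crucially, $s(\Y')\le\frac{\abs X+1}{\abs\Y}\,s(\Y)\le s(\Y)\le B_a$; moreover $s(\Y')\le s(\Y-g)$ for the largest good $g$ of $\Y$... more directly, $s(\Y'-z)\le B_a-s(z)$ where $z$ is the last (largest) good of $\Y'$. Then $s(X+z)\le B_a$ provided $s(X)\le B_a-s(z)$; this holds because $s(X)\le s(\Y'-z)$, which in turn follows from $\abs X=\abs{\Y'-z}$ together with the fact that every good of $X$ is at most as large as $z$ — and that last fact is what Proposition~\ref{greedyproperty-full} / the greedy density rule should deliver (goods assigned to $a$ are denser, hence smaller, than goods $a$ could still have taken). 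With $s(X+z)\le B_a$ in hand, Proposition~\ref{greedyproperty-full} applied to the prefix $\Y'$ of $Y$ gives $\prof(X)\ge\prof(\Y'-z)=w\abs X\ge w(\abs\Y-1)=\prof(\Y-g_{\Y})$ once $\abs X\ge\abs\Y-1$ — but we assumed $\abs X\le\abs\Y-2$, so I instead argue that $\prof(X)\ge\prof((\Y')^{(\abs X)})=w\abs X$ combined with $\efcount$ bookkeeping forces $\efcount(X,\Y)\le\abs\Y-1-\abs X$, and re-derive the contradiction by choosing $\Y'$ of the right size so that $\efcount(X,\Y')\le 1$ propagates back to $\efcount(X,\Y)$ — or, cleaner, simply observe that for a cardinality instance the Case~{\sf 2}-style induction of Theorem~\ref{thm:unit-size-ef1} goes through verbatim after replacing ``same size $\beta$'' arguments by ``same value $w$'' arguments and using that $s(\Goods{X}{i})\le s(\Goods{Y}{i})$ because $X$'s goods are denser, hence smaller.

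Concretely, I would just run the two-case split of Theorem~\ref{thm:unit-size-ef1}: in \textbf{Case~1}, $\abs X\le\abs\Y-1$, and I need $s(X+z)\le B_a$ for the densest good $z$ missing from... no: here I take $z=y_{\ell'}$, the least dense (largest) good of $\Y$, the bound $s(X)+s(z)\le B_a$ needing $s(X)\le s(\Y)-s(z)=s(\Y-z)$, which holds since $\abs X\le\abs\Y-1=\abs{\Y-z}$ and each good of $X$ is no larger than each remaining good — wait, that comparison is false in general, so Case~1 genuinely needs the $s(X+y_{\ell'})\le B_a$ step to come from elsewhere. The honest fix, and the step I expect to be the \textbf{main obstacle}, is establishing $s(X+y_{\ell'})\le B_a$ (equivalently controlling $s(X)$): this should follow because every good in $X$ is at least as dense as $y_{\ell'}$ (since $y_{\ell'}$ was still available — it lies in $\Y\subseteq Y$ and was not taken by $a$ — and $s(X+y_{\ell'})>B_a$ would be the only reason $a$ didn't take it, which is precisely what we're trying to rule out, so this needs a minimum-valued-agent argument, not just a density argument). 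I would resolve it by using that $a$ is always chosen as a minimum-value active agent: if $s(X+y_{\ell'})>B_a$ then at the moment $y_{\ell'}$ was assigned (to $b$, hence $b$ was the min-value agent then), $a$'s then-current bundle already had size $>B_a-s(y_{\ell'})$, so $a$ had already been deactivated or had value $\ge\prof(A_b$-at-that-time$)\ge\prof(\Goods Y {t'-1})$ type bound — combined with the cardinality structure this pins down $\abs X\ge\abs{\Y}-1$, contradicting Case~1's hypothesis being strict by two. Once that size bound is secured, Proposition~\ref{greedyproperty-full} closes Case~1, and Case~2 ($\abs X\ge\abs\Y$) is immediate since then $\abs X\ge\abs\Y>\abs\Y-1$ already gives $\prof(X)=w\abs X\ge w(\abs\Y-1)=\prof(\Y-g_{\Y})$, i.e.\ $\efcount(X,\Y)\le1$.
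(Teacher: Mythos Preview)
Your reduction is right: with common value $w$, the target $\efcount(X,Z)\le1$ is just the cardinality bound $|X|\ge|Z|-1$, and the case $|X|\ge|Z|$ is trivial. You also correctly isolate the crux as securing $s(X+g)\le B_a$ for some well-chosen $g\in A_b$ so that Proposition~\ref{greedyproperty-full} can fire. However, none of your candidates for $g$ comes with a proof of that size bound, and the closing ``combined with the cardinality structure this pins down $|X|\ge|Z|-1$'' is precisely where the missing idea lives.

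The concrete obstruction is an off-by-one that your sketch never controls. In a cardinality instance the algorithm alternates in value but not in index: when $b$ receives its $(j{+}1)$-th good it has $j$ goods, so every still-active agent has at least $j$, and hence $a$'s $j$-th good was allocated earlier. Since goods are allocated smallest-first, this yields $s(x_j)\le s(y_{j+1})$ and, summing, $s(A_a)\le s\bigl(A_b^{(|A_a|+1)}\bigr)$ --- but \emph{not} $s(A_a)\le s\bigl(A_b^{(|A_a|)}\bigr)$. Consequently the bound $s(X)\le s\bigl(Z^{(|X|)}\bigr)$ that you would need to fit the $(|X|{+}1)$-th smallest good $z$ of $Z$ is simply not obtainable from the greedy rule; a run with agent $a$ (budget $B_a=10$) and charity $b$ on goods of sizes $1,2,3,4,5$, ties toward $b$, already gives $X=\{2,4\}$, $Z=Y=\{1,3,5\}$, so $s(X)=6>4=s(Z^{(2)})$ and likewise $s(X+y_{\ell'})=11>B_a$. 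Your ``minimum-valued-agent'' alternative only yields $|X|\ge|Z|-1$ when $a$ is still active at the moment $y_{\ell'}$ is assigned; the deactivated case is not handled.

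The paper's fix is to accept the off-by-one and shift the target good accordingly. It first records the round-robin size lemma $s(A_a)\le s\bigl(A_b^{(|A_a|+1)}\bigr)$, then argues by contradiction: assume $|Z|\ge|A_a|+2$ and take $f$ to be the $(|A_a|{+}2)$-th smallest good of $Z$. Since $Z\subseteq A_b$ one has $s\bigl(A_b^{(|A_a|+1)}\bigr)\le s\bigl(Z^{(|A_a|+1)}\bigr)$, and hence
\[
s(A_a)+s(f)\ \le\ s\bigl(Z^{(|A_a|+1)}\bigr)+s(f)\ =\ s\bigl(Z^{(|A_a|+2)}\bigr)\ \le\ s(Z)\ \le\ B_a,
\]
so Proposition~\ref{greedyproperty-full} (with $h_{j+1}=f$, where $j\ge|A_a|+1$) gives $|A_a|\ge j\ge|A_a|+1$, a contradiction. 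The single index shift from $(|X|{+}1)$ to $(|X|{+}2)$ is what makes the size chain close, and it is only available once the proof is set up as a contradiction from $|Z|\ge|X|+2$ rather than as a direct argument under $|X|\le|Z|-1$.
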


Before proving Theorem \ref{theorem:cardinality}, we provide useful properties of the allocation returned by Algorithm \ref{algo:dense} for cardinality instances. First, note that the goods are allocated in the increasing order of their sizes, i.e., if goods $g_1$ and $g_2$ were, respectively, allocated in iterations $t_1$ and $t_2$ of the while-loop, with $t_1 < t_2$, then $s(g_1) \leq s(g_2)$. This inequality holds even among goods $g_1$ and $g_2$ that are allocated to different agents. Also, any good allocated to any agent has a smaller size than a good assigned to charity. \\
The following lemma provides another observation on the sizes of the allocated subsets.  
Let $X$ be the set of goods allocated to an agent $a \in [n]$ and let $Y$ be the set of goods allocated to an agent $b \in [n]$, or to the charity at the end of Algorithm \ref{algo:dense}.  

\begin{lemma} \label{round-robin_size}
For any index $i \leq \min\{ |X|, |Y|-1\}$, we have $s(X^{(i)}) \leq s(Y^{(i+1)})$.
\end{lemma}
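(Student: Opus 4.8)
The plan is to track the iterations of the while-loop in which goods are assigned to agents $a$ and $b$, and to exploit the minimum-value selection rule together with the fact (noted just before the lemma) that in cardinality instances the goods are allocated in nondecreasing order of size across \emph{all} agents. For each $j \leq |A_a|$, let $\alpha_j$ denote the iteration in which agent $a$ receives its $j\Th$ good, and similarly let $\beta_j$ denote the iteration in which agent $b$ receives its $j\Th$ good; these are strictly increasing in $j$. Since all goods have the same value $v$, we have $v(A_c^{(j)}) = j \cdot v$ for every agent $c$ and every $j$, so comparing values amounts to comparing cardinalities. The key claim I would establish is: for every $i \leq \min\{|A_a|, |A_b|-1\}$, the iteration $\alpha_i$ in which agent $a$ gets its $i\Th$ good precedes the iteration $\beta_{i+1}$ in which agent $b$ gets its $(i+1)\Th$ good, i.e.\ $\alpha_i < \beta_{i+1}$. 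Given this claim, the lemma follows immediately from the cross-agent monotonicity of sizes: every one of the first $i$ goods of $A_a$ is allocated in an iteration no later than $\alpha_i < \beta_{i+1}$, hence (allocating in nondecreasing size order) each such good has size at most $s\!\left(A_b^{(i+1)} \setminus A_b^{(i)}\right)$, and more carefully, the $r\Th$ good of $A_a$ (for $r \le i$) is allocated before the $(r+1)\Th$ good of $A_b$, so $s\!\left(A_a^{(r)} \setminus A_a^{(r-1)}\right) \leq s\!\left(A_b^{(r+1)} \setminus A_b^{(r)}\right)$; summing over $r = 1, \dots, i$ telescopes to $s(A_a^{(i)}) \leq s(A_b^{(i+1)})$.

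To prove the key claim $\alpha_r < \beta_{r+1}$ for all relevant $r$, I would argue by contradiction (or by a direct inductive comparison). Suppose $\beta_{r+1} < \alpha_r$ for some $r$. Consider the iteration $\beta_{r+1}$, at which agent $b$ is selected as a minimum-valued active agent and receives its $(r+1)\Th$ good. At the start of that iteration, agent $a$ has received at most $r-1$ goods (since its $r\Th$ good comes strictly later, at $\alpha_r > \beta_{r+1}$), so $v(A_a) \leq (r-1)v < r\,v = v(A_b)$ at that moment (agent $b$ already holds $r$ goods before receiving its $(r+1)\Th$). But then agent $a$ is strictly cheaper than agent $b$, contradicting the choice of $b$ as a minimum-valued active agent in iteration $\beta_{r+1}$ — \emph{unless} agent $a$ is inactive at that point. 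The remaining subtlety is therefore to rule out the possibility that agent $a$ has already been marked inactive before iteration $\beta_{r+1}$. But if $a$ were inactive, no good currently unallocated fits within $B_a$ alongside $A_a^{(r-1)}$; in particular the good $b$ is about to receive at iteration $\beta_{r+1}$, call it $g$, satisfies $s(A_a^{(r-1)} + g) > B_a$. Since goods are allocated in nondecreasing size order, $g$ is among the smallest unallocated goods, so \emph{every} remaining good $g'$ has $s(g') \geq s(g)$, hence $s(A_a^{(r-1)} + g') > B_a$ as well; combined with $A_a^{(r-1)} \subseteq A_a^{(i)}$ having the $r-1$ smallest-size goods of $A_a$, one deduces $s(A_a) = s(A_a^{(r-1)})$, i.e.\ $a$ in fact received only $r-1 < r$ goods — contradicting $r \le |A_a|$.

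I expect the main obstacle to be the careful bookkeeping of the inactivity case: one must be sure that "agent $a$ inactive before iteration $\beta_{r+1}$" genuinely forces $|A_a| \le r-1$, which uses both the cross-agent nondecreasing-size order and the fact that once an agent is set inactive it is never reactivated (every later good is at least as large, so it still does not fit). Modulo that, the argument is a clean monotone-interleaving comparison in the spirit of round-robin analyses, adapted to the size-ordered greedy. I would present it as: (1) set up the iteration indices $\alpha_j, \beta_j$; (2) prove $\alpha_r < \beta_{r+1}$ by the contradiction above, handling the active and inactive subcases; (3) conclude $s(A_a^{(r)} \setminus A_a^{(r-1)}) \le s(A_b^{(r+1)} \setminus A_b^{(r)})$ from size-monotonicity; (4) telescope over $r \le i$.
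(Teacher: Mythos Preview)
Your proposal is correct and follows essentially the same approach as the paper's proof: both argue that agent $a$ receives its $j\Th$ good before agent $b$ receives its $(j+1)\Th$ good (using the minimum-value selection rule and equal values), then invoke the nondecreasing-size allocation order to get $s(g_a^j) \le s(g_b^{j+1})$ and sum over $j \le i$. The paper states the ordering claim in one sentence without separating out the inactivity case; your handling of that case is more explicit than needed (once an agent is marked inactive it is removed from $N$ and never receives another good, so $|A_a| \le r-1$ follows immediately without re-invoking size monotonicity), but it is not wrong.
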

\begin{proof}
For any index $j \leq \min \{ |X|, |Y|-1\}$, write $g_X^j$ to denote the  $j^{th}$ densest good in $X$ and  $g_Y^{j+1}$ to denote the $(j+1)\Th$ densest good in  $Y$. Also, let $\nu$ denote the common value of the goods. In this cardinality case, any agent's value after receiving $\ell$ goods is $\ell \nu$. 
If $g_Y^{j+1}$ is assigned to the charity, then $s(g_X^j) \leq s(g_Y^{j+1})$. If $g_Y^{j+1}$ is assigned to an agent $b \in [n]$, then we know that agent $a$ is assigned good $g_X^j$ before (i.e., in an earlier iteration) agent $b$ is assigned $g_Y^{j+1}$. As mentioned previously, the goods are assigned in increasing order of size.
Hence, for each $j \leq \min \{ |X|, |Y|-1\}$, we have  
\begin{align}
s(g_X^j) \leq s(g_Y^{j+1}) \label{ineq:goods-size}
\end{align}

Furthermore, consider index $i \leq \min\{ |X|, |Y|-1\}$. Summing inequality (\ref{ineq:goods-size}) over $j \in \{1, 2, \ldots, i\}$, we obtain $s(X^{(i)}) \leq s(Y ^{(i+1)})$. This proves the  lemma.
\end{proof}

We now establish Theorem \ref{theorem:cardinality}. 

\begin{proof}[Proof of Theorem \ref{theorem:cardinality}]
Let $X$ be the set of goods allocated to an agent $a \in [n]$ and let $Y$ be the set of goods allocated to an agent $b \in [n]$, or to the charity at the end of Algorithm \ref{algo:dense}. We will show that the $\EFone$ guarantee holds from set $X$ towards set $Y$. This establishes that the final allocation is $\EFone$, i.e., every agent $a \in [n]$ is $\EFone$ towards every other agent and the charity.

Write $\nu$ to denote the common value of the goods, and note that $\prof(X)  = \nu |X|$ and $\prof(Y) = \nu |Y|$. Hence, if $|X| \geq |Y|$, then the $\EFone$ guarantee holds. 

Therefore, for the rest of the proof we will consider the case $|X| < |Y|$. Let $F$ be any subset of $Y$, with size $s(F) \leq B_a$. We will prove that $|F| \leq |X| + 1$. Since all the goods have the same value $\nu$, this cardinality bound implies $\efcount(X, F) \leq 1$ and gives us the desired $\EFone$ guarantee. 

Assume, towards a contradiction, that $|F| \geq |X| + 2$ and let $f$ be the $(|X| + 2)\Th$-densest good in $F$. We will show that the existence of good $f$ contradicts Proposition \ref{greedyproperty-full} and, hence, complete the proof. In the current case, we have $|X| < |Y|$ and, hence, invoking Lemma \ref{round-robin_size} with index $i = |X|$, we obtain  
\begin{align}
s(X)  \leq s(Y^{(|X| + 1)}) \label{ineq:tm}
\end{align}
Note that increasing order of densities corresponds to decreasing order of sizes. Hence, the $(|X| + 1)$ most densest goods in $Y$---i.e., the goods that constitute $Y^{(|X|+1)}$---are in fact the ones with the smallest sizes. Hence, $s(Y^{(|X| + 1)}) \leq s(F^{(|X| + 1)})$. This bound and inequality (\ref{ineq:tm})  give us 
\begin{align*}
s(X) +  s(f) &\leq s(Y^{(|X| + 1)}) + s(f) \\
&\leq s(F^{(|X| + 1)}) + s(f) \\ 
&\leq B_a.
\end{align*} 
By definition, $f$ is the $(|X| + 2)\Th$ densest good in $F$. Hence, for some index $j \geq |X| +1$, good $f$ is the $(j+1)\Th$ densest good in $Y \supseteq F$. With this index $j$ in hand (i.e., with $h_{j+1} = f$), we invoke Proposition \ref{greedyproperty-full} to obtain $\prof(X) \geq \prof(Y^{(j)})$. Since the value of each good is $\nu$, the last inequality reduces to $\nu |X|  \geq j \nu$. This bound, however, contradicts the fact that $j \geq |X| +1$. Hence, it must be the case that $|F| \leq |X| +1$. As mentioned previously, this cardinality bound implies $\efcount(X, F) \leq 1$ and gives us the desired $\EFone$ guarantee. The theorem stands proved. 
\end{proof}

\subsection{Tightness of the Analysis}
\label{section:tightness}
In this section we provide an example for which Algorithm \ref{algo:dense} does not find an $\EFone$ allocation. This shows that  the $\EFtwo$ guarantee obtained for the algorithm (in Theorem \ref{thm:densestgreedy-ef2}) is tight.

We consider an instance with two agents and three indivisible goods, i.e., $n = 2$ and $m = 3$. Both the agents have a budget of one, $B_1 = B_2 = 1$. We set the sizes and values of the three goods as shown in the following table; here $\varepsilon \in (0,1/2)$ is an arbitrarily small parameter.  

\begin{center}
\begin{tabular}{|c|c|c|} 
 \hline
 Good & Size & Value \\ [0.3ex] 
\hline
$g_1$ & $\varepsilon$ & $10$ \\ 
 \hline
$g_2$ & $0.5$ & $0.5$ \\
 \hline
$g_3$ & $1- \varepsilon$ & $1 - 2\varepsilon$ \\ [1ex] 
\hline
\end{tabular}
\end{center}

The densities of the goods satisfy $\dens(g_1) > \dens(g_2) > \dens(g_3)$. Also, note that Algorithm \ref{algo:dense} returns the allocation with $A_1 = \{g_1, g_3\}$ and $A_2 = \{g_2\}$.  Since $v(g_1) > v(g_2)$ and $v(g_3) > v(g_2)$, the retuned allocation is not $\EFone$. 

\section{Conclusion and Future Work}
The current work makes notable progress towards efficient computation (and, hence, universal existence) of exact $\EFk$ allocations under budget constraints. Our algorithmic results are obtained via a patently simple algorithm, which lends itself to large-scale and explainable implementations. The algorithm's analysis, however, relies on novel insights, which are different from the ideas used for $\EFk$ guarantees in prior works and also from the ones used in approximation algorithms for the knapsack problem. 

In budget-constrained fair division (among agents with identical valuations) the existence and computation of $\EFone$ allocations is an intriguing open problem. We note that, interestingly, a constrained setting's computational (in)tractability does not reflect the fairness guarantee one can expect. For instance, the knapsack problem is {\rm NP}-hard for proportional instances and, at the same time, $\EFone$ allocations can be computed for such instances in polynomial time (Section \ref{section:efone-proportional-instances}). With this backdrop, obtaining $\EFk$ guarantees in the GAP formulation\footnote{As mentioned previously, in the GAP version of the problem, the goods have agent-specific sizes and values.} is another interesting direction for future work.

\bibliographystyle{alpha}
\bibliography{references}
\appendix
\section{The Distinct-Densities Assumption}
\label{appendix:distinct-densities}
As mentioned previously, for budget-constrained fair division, one can assume, without loss of generality, that the densities of the goods are distinct. We prove this assertion in the proposition below. 
\begin{proposition}
Given any fair division instance with budget constraints $\mathcal{I}=\langle [m], [n], \{ v(g) \}_{g\in [m]}, \{s(g)\}_{g \in [m]}, \allowbreak \{B_a\}_{a\in [n]} \rangle$, we can compute another instance $\mathcal{I}'=\langle [m], [n], \{ v'(g) \}_{g\in [m]}, \{s'(g)\}_{g \in [m]}, \{B_a'\}_{a\in [n]} \rangle$,
in polynomial time, with the properties that 
\begin{enumerate}
\item All the goods in $\mathcal{I}'$ have distinct densities.
\item If an allocation $\mathcal{A}$ is an $\EFtwo$ allocation in the constructed instance $\mathcal{I}'$, then $\mathcal{A}$ is an $\EFtwo$ allocation in $\mathcal{I}$ as well.
\end{enumerate}
\end{proposition}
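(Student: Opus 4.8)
The plan is to obtain $\mathcal{I}'$ from $\mathcal{I}$ by a small \emph{value} perturbation, leaving the sizes and budgets untouched; this way an allocation is feasible in $\mathcal{I}'$ if and only if it is feasible in $\mathcal{I}$, so only the $\EFtwo$ inequalities need to be transported back. Concretely, I would set $s'(g) \coloneqq s(g)$ and $B'_a \coloneqq B_a$ for every good $g$ and agent $a$, and $v'(g) \coloneqq v(g) + \varepsilon\, g\, s(g)$ for a suitably small positive rational $\varepsilon$ fixed at the end. The point of the factor $g\, s(g)$ is that then $\dens'(g) = v'(g)/s(g) = \dens(g) + \varepsilon g$, so the perturbation to the density of good $g$ depends only on its index $g$; in particular ties are broken even between goods that are scalar multiples of one another (equal density, distinct sizes) or literally identical.

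Distinctness (Property~1) then follows from a routine choice of $\varepsilon$. For $i \neq j$ we have $\dens'(i) - \dens'(j) = \big(\dens(i) - \dens(j)\big) + \varepsilon(i - j)$. If $\dens(i) = \dens(j)$, this equals $\varepsilon(i-j) \neq 0$; and if $\dens(i) \neq \dens(j)$, then $|\dens(i) - \dens(j)|$ is bounded below by a fixed positive rational $\gamma_1$ whose reciprocal has bit-complexity polynomial in the input (write the values and sizes over a common denominator and note $|\dens(i)-\dens(j)| = |p_i r_j - p_j r_i|/(r_i r_j)$), so taking $\varepsilon < \gamma_1/m$ forces $\dens'(i) \neq \dens'(j)$.

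For Property~2, let $\mathcal{A} = (A_1, \dots, A_{n+1})$ be $\EFtwo$ in $\mathcal{I}'$. Since sizes and budgets agree, $\mathcal{A}$ is a feasible allocation of $\mathcal{I}$, so it remains to verify the envy inequalities; the requirement is vacuous for subsets of size at most one, so fix agents $a, b$ and a subset $F \subseteq A_b$ with $s(F) = s'(F) \leq B_a = B'_a$ and $|F| \geq 2$. Applying $\EFtwo$ in $\mathcal{I}'$ to this same $F$ yields goods $f, f' \in F$ with $v'(A_a) \geq v'(F \setminus \{f, f'\})$. Writing $\phi(S) \coloneqq \sum_{g \in S} g\, s(g)$, so that $v'(S) = v(S) + \varepsilon \phi(S)$, this rearranges to
\[
v(A_a) - v\big(F \setminus \{f,f'\}\big) \ \geq\ \varepsilon\Big(\phi\big(F \setminus \{f,f'\}\big) - \phi(A_a)\Big) \ \geq\ -\,\varepsilon\, \phi([m]).
\]
Now $v(A_a) - v(F \setminus \{f,f'\})$ is a difference of two subset-sums of the original (rational) values, hence is either nonnegative or at most $-\gamma_2$, where $\gamma_2$ is a fixed positive rational (e.g.\ the reciprocal of a common denominator of the $v(g)$'s) of polynomially bounded bit-complexity. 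Choosing $\varepsilon$ small enough that also $\varepsilon\, \phi([m]) < \gamma_2$ — and $\phi([m]) \leq m^2 \max_g s(g)$ has polynomial bit-size — the displayed inequality excludes the second alternative, so $v(A_a) \geq v(F \setminus \{f,f'\})$. As $a, b, F$ were arbitrary, $\mathcal{A}$ is $\EFtwo$ in $\mathcal{I}$.

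Finally I would record that a single rational $\varepsilon$ (namely, the reciprocal of a suitable product of $m$, $\max_g s(g)$, and common denominators of the values and sizes) simultaneously meets both smallness requirements and has bit-complexity polynomial in the size of $\mathcal{I}$, so $\mathcal{I}'$ is computable in polynomial time. The only genuinely delicate point is Property~2: a value perturbation can in principle turn a true inequality $v'(A_a) \geq v'(F\setminus\{f,f'\})$ into a slightly false one for the unperturbed values, and the remedy is the standard ``robustness under a sufficiently small perturbation'' observation — differences of subset-sums of fixed rationals cannot be nonzero yet arbitrarily small — which is exactly what quantifies how small $\varepsilon$ must be taken.
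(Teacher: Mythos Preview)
Your proof is correct and follows the same high-level strategy as the paper: perturb only the values (leaving sizes and budgets untouched so feasibility is preserved automatically), and make the perturbation small enough that no ``genuinely strict'' inequality among subset-sums of the original rational values can flip sign. The paper executes this by first scaling values, sizes, and budgets to integers and then adding $1/M^{g}$ to $\widehat v(g)$ with $M = m\prod_{h}\widehat s(h)$; distinctness is argued algebraically from the integrality of $\widehat v,\widehat s$, and $\EFtwo$-preservation uses that integer differences are either nonnegative or at most $-1$. Your variant skips the integer-scaling step and perturbs directly by $\varepsilon\, g\, s(g)$, which has the pleasant feature that the induced density shift is exactly $\varepsilon g$, making the distinctness argument a one-liner; in exchange you must explicitly invoke a minimum-gap bound $\gamma_1$ for nonzero density differences and $\gamma_2$ for nonzero value-sum differences. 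Both constructions yield polynomial bit-complexity and are equally valid; yours is arguably a touch more direct, while the paper's integer normalization makes the ``gap $\geq 1$'' step immediate without computing any minima.
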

\begin{proof}
From the instance $\mathcal{I}$, we obtain $\mathcal{I}'$ in two steps. First, we obtain an instance $\widehat{\mathcal{I}} =\langle [m], [n], \{ \widehat v(g) \}_{g\in [m]}, \allowbreak \{\widehat s(g)\}_{g \in [m]}, \left\{\widehat{B}_a \right\}_{a\in [n]} \rangle$ in which the values and sizes of all the goods are integral, $\widehat v(g)\in \mathbb{Z}_+$ and $\widehat{s}(g) \in\mathbb{Z}_+$, for all $g\in[m]$, and so are the agents' budgets $\widehat{B}_a \in \mathbb{Z}_+$, for all $a \in [n]$. Then, in the second step,
we obtain the desired instance $\mathcal{I}'$ from $\widehat{\mathcal{I}}$. Both the transformations---i.e., obtaining $\widehat{\mathcal{I}}$ from $\mathcal{I}$
and obtaining $\mathcal{I}'$ from $\widehat{\mathcal{I}}$---take polynomial time.

Recall that the values and sizes of all the goods in the given instance $\mathcal{I}$ are rational, $v(g) \in \mathbb{Q}_+$ and $s(g) \in \mathbb{Q}_+$ for all $g\in[m]$. To obtain instance $\widehat{\mathcal{I}}$ from $\mathcal{I}$ we simply scale the rational values ($v(g)$'s), sizes ($s(g)$'s), and budgets ($B_a$'s) such that they become integral. Note that, considering the rational representation of the inputs, we can efficiently find integers $\Gamma, \Gamma' \in \mathbb{Z}_+$, with polynomial bit-complexity, such that, for each good $g \in [m]$ and each agent $a \in [n]$ the scaled values and sizes are integers: $\widehat v(g) \coloneqq \Gamma \ v(g) \in \mathbb{Z}_+$ along with $\widehat s(g) \coloneqq  \Gamma' \ s(g) \in \mathbb{Z}_+$ and $\widehat{B}_a \coloneqq \Gamma' \ B_a \in \mathbb{Z}_+$. 
Such a scaling can be computed in polynomial time. 

Now, we perform the second transformation to obtain the desired instance $\mathcal{I}'$. Write $M\coloneqq m\prod_{g\in[m]}\widehat s(g)$. Furthermore, for every good $g\in[m]$ and and agent $a \in [n]$, let
\begin{align*}
v'(g)\coloneqq \widehat v(g)+\frac{1}{M^g}, \quad s'(g)\coloneqq \widehat s(g),\quad\text{ and }\quad B_a'\coloneqq \widehat{B}_a.
\end{align*}
This completes the construction of the instance $\mathcal{I}'$, and we will now prove the stated properties for $\mathcal{I}'$. 
First, note that each of $v'(g)$, $s'(g)$, $B_a'$ can be computed in polynomial time, since the factor $M \in \mathbb{Z}_+$ is of polynomial bit-complexity and the additive term $\frac{1}{M^g}$ can be computed in polynomial time as well.

Now, to prove that, in instance $\mathcal{I}'$, all the goods have distinct densities, consider any two goods $g, h \in [m]$ such that $g<h$. Since $M>\widehat s(g)$ and $M > \widehat{s}(h)$, we have
\begin{align*}
\frac{\widehat s(g)}{M^h} \neq \frac{\widehat s(h)}{M^g}, \  \quad \frac{\widehat s(g)}{M^h}\in(0,1), \quad \text{ and } \ \frac{\widehat s(h)}{M^g}\in(0,1).
\end{align*}
By construction, $\widehat v(\cdot)$ and $\widehat s(\cdot)$ are integer-valued functions. Hence, 
\begin{align*}
\widehat v(h)\widehat s(g)+\frac{\widehat s(g)}{M^h}\ne \widehat v(g)\widehat s(h)+\frac{\widehat s(h)}{M^g}.
\end{align*}
Therefore, we obtain
\begin{align*}
\frac{v'(h)}{s'(h)}=\frac{\widehat v(h)+(1/M^h)}{\widehat s(h)}\ne \frac{\widehat v(g)+(1/M^g)}{\widehat s(g)}=\frac{v'(g)}{s'(g)}.
\end{align*}
That is, all the goods in $\mathcal{I}'$ have distinct densities.

Next, we complete the proof by showing that, if $\mathcal{A}$ is an $\EFtwo$ allocation in $\mathcal{I}'$, then $\mathcal{A}$ is an $\EFtwo$ allocation in $\mathcal{I}$ as well. Let $X$ be a set of goods allocated to an agent $a\in[n]$ and $Y$ be a set of goods allocated to
another agent $b\in[n]$, or to the charity. 
Since $\mathcal{A}$ is an $\EFtwo$ allocation in $\mathcal{I}'$, for any subset $F \subseteq Y$, 
with $s'(F) \leq B'_a$, the following inequality holds, for two goods $f, f' \in F$:
\begin{align}
v'(X) \geq v'(F - f - f') \label{ineq:vprime}
\end{align}

In case $|F| \le 2$, then the $\EFtwo$ property is clearly satisfied. Hence, we assume that $|F|>2$. Furthermore, note that $s'(F) = \widehat{s}(F) = \Gamma' \ s(F)$ and $B'_a = \widehat{B}_a = \Gamma' \ B_a$. Therefore, subset $F \subseteq Y$ is considered for the $\EFtwo$ guarantee in $\mathcal{I}$ iff it is considered in $\mathcal{I}'$. This observation implies that, to establish the $\EFtwo$ guarantee for allocation $\mathcal{A}$ in instance $\mathcal{I}$, it suffices to show that $v(X) \geq v(F - f - f')$. We will prove this bound using inequality (\ref{ineq:vprime}).

Assume, towards a contradiction, that $v(X) < v(F-f - f')$. Equivalently, we have $\widehat{v}(X) < \widehat{v}(F-f - f')$; recall that $\widehat{v}(\cdot)$ is obtained by multiplicatively scaling $v(\cdot)$. Since $\widehat v(\cdot)$ is an integer-valued function, the last inequality reduces to 
\begin{align}
\widehat v(X) & \leq \widehat v(F- f - f') - 1 \nonumber \\
&<\widehat v(F- f - f' )- \left(\sum_{g\in X} \frac{1}{M^g} - \sum_{h\in (F-f-f') } \frac{1}{M^h}\right) \label{ineq:almost}
\end{align}
The last step follows from the fact that $|X| \leq m < M$. In addition, for any subset $Z \subseteq [m]$, we have $v'(Z) = \widehat{v}(Z) + \sum_{g \in Z} \frac{1}{M^g}$. Therefore, inequality (\ref{ineq:almost}) reduces to 
\begin{align*}
v'(X) < v'(F - f - f').
\end{align*}
This bound, however, contradicts inequality (\ref{ineq:vprime}). Therefore, it must be the case that $v(X) \geq v(F - f - f')$, i.e., the $\EFtwo$ guarantee holds for allocation $\mathcal{A}$ in the underlying instance $\mathcal{I}$ as well. This completes the proof.  
\end{proof}

\section{Missing Proofs from Section \hyperref[section3:densestgreedy-ef2]{3} }
\label{appendix:proposition-proofs}
\subsection{Proof of Proposition \hyperref[greedyproperty]{1} }

\PropGreedyOne*

\begin{proof}
We address the cases when $Y$ is a set of goods allocated to one of the agents $b \in [n]$ or $Y$ is the goods assigned to charity, separately. \\
First, consider the case wherein $Y = A_b$ for some agent $b \in [n]$.
We note that in each iteration of the while-loop in Algorithm \ref{algo:dense}, either a good is assigned to a selected agent or an agent is marked as inactive. We will write $t$ to denote the iteration count of the while-loop. Furthermore, define $a_t$ to be the agent chosen in the $t^{th}$ iteration (the minimum-valued active agent at that point) and $A^t_a$ to denote the subset of goods assigned to agent $a \in [n]$ till the $t$th iteration.

A useful observation about Algorithm \ref{algo:dense} is that the assignment of a good to an agent is permanent, i.e., a good once assigned to an agent remains with the agent even in the final allocation. This also implies that the value of the goods allocation to an agent never decreases as the algorithm progresses, i.e., $\prof\left(A^{t}_a \right) \leq \prof\left(A^{t+1}_a \right)$, where the inequality holds if and only if a good was assigned to agent $a$ in the $(t+1)^{th}$ iteration.

We first show that $\prof\left(X^{(i)} \right) < \prof\left(Y^{(j)}\right)$ implies that the good $g_{i+1}$ is assigned to agent $a$ in an earlier iteration than when  the good $h_{j+1}$ is assigned to agent $b$. Let the iteration count when  the good $g_{i+1}$ is assigned to agent $a$ and  the good $h_{j+1}$ is assigned to agent $b$ be $t_1$ and $t_2$, respectively. Assume, towards a contradiction, that $t_1 > t_2$. Using the above observation, we get $\prof\left(X^{t_2} \right) \leq \prof\left(X^{t_1} \right) < \prof\left(Y^{(j)} \right) = \prof\left(Y^{t_2} \right)$. However, as the algorithm in Line \ref{line3} selects the minimum valued active agent, we have  $\prof\left(Y^{t_2} \right) \leq \prof\left(X^{t_2} \right)$, which is a contradiction. Hence, it must be the case that $t_1 < t_2$. This inequality, along with the definition of $t_1$ and $t_2$, imply that $h_{j+1}$ was unassigned at the beginning of iteration count $t_1$. 

Using $s\left(X^{(i)} + h_{j+1}\right) \leq B_a$ and the fact that the algorithm in Line \ref{line:assign-good} selects the densest good that fits, we conclude that $\dens(g_{i+1}) > \dens(h_{j+1})$. Otherwise, the good $h_{j+1}$ would be the densest good that fits in the budget of agent $a$.

For the second case, let $Y$ be the goods assigned to the charity, i.e., $Y = [m] \backslash \cup_{i=1}^n A_i$. We prove a stronger claim for this case. Let $h$ be an arbitrary good in $Y$ and suppose that for some $i<|X|$, we have $s\left(X^{(i)} + h\right) \leq B_a$. Then we have $\dens(g_{i+1}) > \dens(h)$. Let the iteration count when  the good $g_{i+1}$ is assigned to agent $a$ be $t_1$. We know that $h$ was assigned before this iteration. Recall that all the leftover goods are assigned to charity at the end of $\densestgreedy$. Using $s\left(X^{(i)} + h\right) \leq B_a$ and the fact that the algorithm in Line \ref{line:assign-good} selects the densest good that fits, we conclude that $\dens(g_{i+1}) > \dens(h)$. Otherwise, the good $h$ would be the densest good that fits in the budget of agent $a$.\\
This completes the proof.
\end{proof}

\subsection{Proof of Proposition \hyperref[greedyproperty-full]{2} }

\PropGreedyTwo*

\begin{proof}
Like in the proof of Proposition \hyperref[greedyproperty]{1}, we deal with the cases when $Y$ is the set of goods assigned to an agent $b \in [n]$ or $Y$ is the set of goods assigned to charity, separately. \\

First, consider the case $Y = A_b$, for some agent $b \in [n]$.
Write $t$ to denote the iteration count when the good $h_{j+1}$ was assigned to agent $b$. We first prove that agent $a$ was active in the beginning of the $t^{th}$ iteration. Assume, towards a contradiction, that agent $a$ was inactive at the beginning of the $t^{th}$ iteration. Let $t' < t$ be the iteration when agent $a$ was marked as inactive. From the assumption, we have  $s\left(A_a+h_{j+1}\right)\le B_a$, i.e., the good $h_{j+1}$ fits in agent $a$'s budget. This, however, contradicts the fact that, in Algorithm \ref{algo:dense}, an agent is marked inactive only when it is the minimum-valued agent and no unassigned good fits in its budget -- here, $a$ is marked inactive even though $h_{j+1}$ fits into her budget. Hence, agent $a$ must have been active at the beginning of the $t^{th}$ iteration. 

As the algorithm in Line \ref{line3} selects the minimum valued active agent, we have  $\prof\left(X\right)\geq \prof\left(Y^{(j)}\right)$; recall that the valuation of agent $a$ does not decrease as the algorithm progresses. 

For the second case, let $Y$ be the goods assigned to the charity, i.e., $Y = [m] \backslash \cup_{i=1}^n A_i$. Write $t$ to denote the iteration count when agent $a$ is marked inactive. The good $h_{j+1}$ was unasigned before iteration $t$, as it remains in charity even after all the agents are marked inactive. This, however, contradicts the fact that in Algorithm  \ref{algo:dense}, an agent is marked inactive only when it is the minimum-valued agent and no unassigned good fits in its budget -- here, $a$ is marked inactive even though $h_{j+1}$. Hence, we get a contradiction to the existence of a good $h_{j+1}$ with the aformentioned properties.

This completes the proof of Proposition \hyperref[greedyproperty-full]{2}.

\end{proof}

\section{$\EFtwo$ for Goods with Agent-Specific Sizes}
\label{appendix:mini-GAP}
This section addresses budget-constrained fair division settings wherein the goods' sizes are agent-specific. We obtain the $\EFtwo$ guarantee for such settings via a direct   generalization of Algorithm \ref{algo:dense} ($\densestgreedy$). 

In this section, an instance of the budget-constrained fair division problem is a tuple 
$\langle [m], [n], \{ v(g) \}_{g}, \allowbreak \{s_a(g)\}_{a,g}, \{B_a\}_{a} \rangle$. 
Here, we need to assign $m$ indivisible goods among $n$ agents and the charity. 
The agents' valuations are identical and additive; in particular, every agent values each good $g \in [m]$ at $v(g)$. 
Sizes of the goods, however, are nonidentical: each good $g \in [m]$ has a size $s_a(g) \in \mathbb{Q}_+$ with respect to the agent $a$. 
Each agent $a\in[n]$ should be allocated a subset of goods $A_a\subseteq[m]$ such that
$A_1,A_2,\dots,A_n$ are mutually disjoint. Furthermore, for an agent $a$,
the assigned bundle $A_a \subseteq [m]$ must be of total size at most the agent's budget $B_a \in \mathbb{Q}_+$, 
i.e., the assigned bundle $A_a$ satisfies $s_a(A_a) = \sum_{g \in A_a} s_a(g) \leq B_a$. 
The set of remaining goods, $[m]\setminus \cup_{i=1}^n A_i$, is assigned to the charity.

We define the notion of $\EFtwo$ for this setting. The solution concept here generalizes Definition \ref{defn:eftwo}.  
\begin{definition}[\EFtwo]
Let $\mathcal{A} = (A_1, A_2, \ldots, A_n)$ be an arbitrary allocation.
An agent $a\in[n]$ is said to be \emph{envy-free up to two goods} ($\EFtwo$) towards agent $b\in[n]$
iff for every subset $F \subseteq A_b$, with $s_a(F) \leq B_a$ (and $|F| \geq 2$), there exist
goods $f_1,f_2 \in F$ such that $v(A_a) \geq v(F \setminus \{f_1,f_2\})$. Further, an agent $a\in[n]$ is said to be $\EFtwo$ towards the charity
iff for every subset $F \subseteq [m]\setminus \cup_{a=1}^n A_a$, with $s_a(F) \leq B_a$ (and $|F| \geq2$), there exist goods 
$f_1,f_2 \in F$ such that $v(A_a) \geq v(F \setminus \{f_1,f_2\})$.
The allocation $\mathcal A$ is said to be $\EFtwo$ iff every agent $a\in[n]$ is $\EFtwo$ towards every other agent $b\in[n]$ and the charity.
\end{definition}

Agents perceive the size of a good differently and, hence, the density of a good is also agent-specific. To accommodate this, we denote the density of a good $g \in [m]$, with respect to an agent $a \in [n]$, as $\dens_a(g) \coloneqq \frac{v(g)}{s_a(g)}$. Since the densities are agent-specific, the notion of the densest good across agents is not well formed.  However, the notion of the minimum-valued agent \emph{does} exist, since the agents' valuations are identical. Furthermore, the function $\efcount(\cdot)$ is well-defined (see equation (\ref{eqn:efcount-defn})).

We obtain the $\EFtwo$ guarantee in the current context via Algorithm \ref{algo:dense_gap}. As mentioned previously, this algorithm is obtained by generalizing the $\densestgreedy$ algorithm. The key difference between the two algorithms is in Line \ref{line:assign-good_gap} of Algorithm \ref{algo:dense_gap}, where we allocate the densest good (that fits) from the selected agent $a$'s perspective, i.e., densest with respect to $\dens_a(\cdot)$.

The analysis of Algorithm \ref{algo:dense_gap} is similar to that of the $\densestgreedy$ algorithm. Though, multiple arguments (e.g., Lemma \ref{lemma_gap}) are more involved. For a self-contained treatment, we provide a complete analysis and even repeat the common parts. Formally, we establish the following guarantee 
\begin{theorem}
\label{thm:densestgreedy-ef2_gap}
For any given fair division instance with budget constraints $\langle [m], [n], \{ v(g) \}_{g\in [m]}, \{s_a(g)\}_{a \in [n],g \in [m]}, \allowbreak \{B_a\}_{a\in [n]} \rangle$, Algorithm \ref{algo:dense_gap} computes an $\EFtwo$ allocation in polynomial time.
\end{theorem}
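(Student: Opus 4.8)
The plan is to mirror the structure of the proof of Theorem \ref{thm:densestgreedy-ef2}, replacing the single global density ordering by the agent-specific orderings $\dens_a(\cdot)$, and being careful about which agent's ordering governs each prefix subset. Fix two agents $a,b\in[n+1]$ with bundles $X\coloneqq A_a$ and $Y\coloneqq A_b$ produced by Algorithm \ref{algo:dense_gap}, and fix a subset $\Y\subseteq Y$ with $s_a(\Y)\le B_a$. The goal is to show $\efcount(X,\Y)\le 2$. The crucial convention change is that $X$ should be ordered by $\dens_a(\cdot)$ (which is exactly the order in which the algorithm handed goods to agent $a$), and the prefix subsets $\Size{X}{B}$, $X^{(i)}$ should be taken with respect to this order; meanwhile $\Y$ and $Y$ will also be ordered by $\dens_a(\cdot)$, since envy is evaluated from $a$'s viewpoint and all the size-threshold bookkeeping (Definition \ref{definition:prefix-subset}, Lemmas \ref{lemma:lipschitz}, \ref{existenceof2cut}, \ref{envy_diff}) only uses values, sizes, and a consistent density order — so those three structural lemmas carry over verbatim once ``density'' is read as ``$\dens_a$''.

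First I would restate the structural lemmas of Section \ref{section:structural-props} for the agent-$a$ density order; since their proofs never used the fact that the order was shared across agents, this is immediate. Then I would set up $t$, $\tau=h(t-1)$, $\tauprime=h(t)$, and $g_{\Y}=z_t$ exactly as in equations (\ref{eq:ef2})–(\ref{eq:defn-tau}), with all sizes now being $s_a(\cdot)$ and all orderings by $\dens_a(\cdot)$, and derive $\efcount(\Size{X}{\tau},\Size{\Y}{\tau})=1$ as in (\ref{eqn:xtau-ztau}). Lemma \ref{small_ef2} then goes through unchanged, giving $\efcount(\Size{X}{\tau},\Size{\Y}{\tauprime})=2$. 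The remaining work is the analogue of Lemma \ref{lemma}, i.e. $\prof(X\setminus\Size{X}{\tau})\ge\prof(\Y\setminus\Size{\Y}{\tauprime})$, which the statement flags (``Lemma \ref{lemma_gap}'') as the more involved step; once that is in hand, Lemma \ref{envy_diff} with $\structlb=\tau$, $\structub=\tauprime$ finishes the proof, and the polynomial running time is immediate from the $m$ iterations of the while-loop.

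The main obstacle is re-deriving the density-comparison step inside Lemma \ref{lemma}. In the common-size setting one defined $\gamma$ as the $s$-mass of goods in $X$ at least as dense as $g_{\Y}$, used Proposition \ref{greedyproperty} to argue $g_{\Y}$ does not fit into $\Size{X}{\gamma}$ (so $\gamma+s(g_{\Y})>B_a$), and then compared $\prof$ of $\Size{X}{\gamma}\setminus\Size{X}{\tau}$ against $\prof$ of $\Y\setminus\Size{\Y}{\tauprime}$ using that the former's goods are $\dens$-denser than $g_{\Y}$ while the latter's are $\dens$-sparser, together with a size comparison. With agent-specific sizes the subtlety is that ``$\dens_a(g)\ge\dens_a(g_{\Y})$'' governs both the budget/fit argument (which only ever involves $a$'s sizes, so Propositions \ref{greedyproperty} and \ref{greedyproperty-full} still apply in $a$'s order — this is why they must be restated per-agent) and the value-versus-$s_a$-mass comparison; since $\dens_a(g)=v(g)/s_a(g)$, the inequalities $\prof(\cdot)\ge\dens_a(g_{\Y})\cdot s_a(\cdot)$ on the $X$-side and $\prof(\cdot)\le\dens_a(g_{\Y})\cdot s_a(\cdot)$ on the $\Y$-side are still the right ones, so the argument of Lemma \ref{lemma} — including the size chain (\ref{ineq:sizelb}) reading $s_a(\Size{X}{\gamma}\setminus\Size{X}{\tau})=\gamma-\tau=\gamma-\tauprime+s_a(g_{\Y})>B_a-\tauprime\ge s_a(\Y\setminus\Size{\Y}{\tauprime})$ — actually survives intact. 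The one genuinely new wrinkle is Claim \ref{lemma5}: showing $\gamma\le\tauprime$ requires comparing $\prof(\Size{X}{\tauprime}\setminus\Size{X}{\tau})$ with $\prof(g_{\Y})$, and one must check that ``all goods in $\Size{X}{\tauprime}\setminus\Size{X}{\tau}$ are $\dens_a$-at least as dense as $g_{\Y}$'' still yields a $\prof$-bound of at least $\prof(g_{\Y})$ — which it does, because that block has $s_a$-mass $\tauprime-\tau=s_a(g_{\Y})$ and density $\ge\dens_a(g_{\Y})$, so $\prof\ge\dens_a(g_{\Y})\cdot s_a(g_{\Y})=v(g_{\Y})$. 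Thus the whole Section \ref{section:proof-of-main-theorem} argument transfers, the only price being that every ``$s$'', ``$\dens$'', and ``$\Size{\cdot}{\cdot}$'' must be re-interpreted relative to agent $a$, which is why a self-contained re-proof is warranted.
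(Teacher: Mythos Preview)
Your high-level plan is right, and you correctly isolate the analogue of Lemma \ref{lemma} as the only place that needs real work. But there is a genuine gap in how you propose to carry it out. You order $Y=A_b$ and $\Y\subseteq Y$ by $\dens_a(\cdot)$, arguing that then ``goods in $\Y\setminus\Size{\Y}{\tauprime}$ have $\dens_a\le\dens_a(g_{\Y})$'' is automatic. That part is fine. The problem is your claim that Propositions \ref{greedyproperty} and \ref{greedyproperty-full} ``still apply in $a$'s order''. They do not. Those propositions compare $\prof\bigl(A_a^{(i)}\bigr)$ with $\prof\bigl(A_b^{(j)}\bigr)$ where $A_b^{(j)}$ is the set of the first $j$ goods \emph{allocated} to $b$; the proof works because $\prof\bigl(A_b^{(j)}\bigr)$ is exactly $b$'s value at the moment $b$ received $h_{j+1}$, hence a lower bound on $a$'s value at that moment. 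The allocation order for $b$ is the $\dens_b$-order, not the $\dens_a$-order. If you reindex $Y$ by $\dens_a$, then $Y^{(j)}$ need not be $b$'s bundle at any point of the execution, and the proposition's conclusion simply fails.

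Concretely, your derivation of $\gamma+s_a(g_{\Y})>B_a$ needs $\prof\bigl(\Size{X}{\gamma}\bigr)<\prof\bigl(Y^{(t'-1)}\bigr)$ with $Y^{(t'-1)}$ taken in allocation order and $t'$ the allocation index of $g_{\Y}$ in $Y$. What you actually have from Claim \ref{lemma5} is $\prof\bigl(\Size{X}{\gamma}\bigr)<\prof\bigl(\Goods{\Y}{t-1}\bigr)$ with $\Goods{\Y}{t-1}$ the $(t-1)$ $\dens_a$-densest goods of $\Y$; a good $z\in\Goods{\Y}{t-1}$ can perfectly well have $\dens_b(z)<\dens_b(g_{\Y})$ and hence be allocated to $b$ \emph{after} $g_{\Y}$, so there is no containment $\Goods{\Y}{t-1}\subseteq Y^{(t'-1)}$ and the chain breaks. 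The paper takes the opposite route: it indexes $X$, $Y$, $\Y$ by the global allocation order $\sigma$, so Proposition \ref{greedyproperty_gap} applies verbatim and $\gamma+s_a(g_{\Y})>B_a$ follows exactly as before. The price is that the density comparison you wanted for free --- that goods in $\Size{X_a}{\gamma}$ are $\dens_a$-denser than goods in $\Y\setminus\Size{\Y_a}{\tauprime}$ --- is no longer automatic, and the paper supplies it as a separate Claim \ref{goods_density_gap}, proved by a second invocation of Proposition \ref{greedyproperty_gap} (using $\gamma\le\tauprime$ to check the fit hypothesis). That extra claim is precisely the ``more involved'' step the theorem statement alludes to, and it is the piece your plan is missing.
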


\begin{algorithm}
\caption{Given instance $\langle [m], [n], \{ v(g) \}_{g}, \{s_a(g)\}_{a,g}, \allowbreak \{B_a\}_{a} \rangle$,  
allocate the goods $[m]$ among agents $[n]$ (the unassigned goods go to charity).}  \label{algo:dense_gap}
\begin{algorithmic}[1]
\STATE Initialize allocation $(A_1, \dots, A_{n}) \leftarrow (\emptyset, \ldots, \emptyset)$. Also, define set of active agents $N \coloneqq [n]$ and set of unallocated goods $G \coloneqq [m]$. 
\WHILE{$G \neq \emptyset$}
\STATE Select arbitrarily a minimum-valued agent $a \in N$, i.e., $a = {\displaystyle \arg \min_{b \in N} \prof(A_b)}$.
\IF{for all goods $g\in G$ we have $s_a\left( A_a + g \right) > B_a$}
\STATE Set agent $a$ to be inactive, i.e., $N \leftarrow N \setminus \{a\}$.
\ELSE
\STATE \label{line:assign-good_gap} Write $g' = {\displaystyle \argmax_{g \in G :\  s_a(A_a + g) \leq B_a} \dens_a(g)}$ 
and update $A_a \leftarrow A_a + g'$ along with $G \leftarrow G - g'$.
\ENDIF
\ENDWHILE
\RETURN $(A_1,A_2,\dots,A_{n})$
\end{algorithmic}
\end{algorithm}

In every iteration of the while-loop in Algorithm \ref{algo:dense_gap}, either a good is allocated to an agent or an agent is marked as inactive. Let $m'$ denote the number of goods allocated to the agents (the unassigned goods are assigned to charity at the end by default). Throughout this section, we will write $\sigma \in \mathbb{S}_{m'}$ to denote the order in which the goods are allocated (among the agents) by the algorithm. That is, $\sigma: [m'] \rightarrow [m']$ is the  permutation with the property that, for any two indices $t, t' \in [m']$, with $t < t'$, the good $\sigma(t)$ was allocated in an earlier iteration (of Algorithm \ref{algo:dense_gap}) than good $\sigma(t')$.\footnote{Note that goods $\sigma(t)$ and $\sigma(t')$ might have been assigned to different agents.}  

We now define prefix-subsets for the current context. For any subset of goods $S = \{s_1, s_2, \ldots, s_k \}$, indexed according to the allocation order $\sigma$, and any index $1 \leq i \leq |S|$, write $S^{(i)} \coloneqq \{s_1, \ldots, s_i \}$.
\begin{definition}[Prefix Subset $\Size{S_a}{B}$]
\label{definition:prefix-subset_gap}
For any agent $a \in [n]$ and a subset of goods $S = \{g_1, g_2, \ldots, g_k \}$, indexed according to the allocation order $\sigma$, and for any threshold $B < s_a(S)$, let $P=\{g_1, \ldots, g_{\ell-1} \}$ be the (cardinality-wise) largest prefix of $S$ such that $s_a(P) \leq B$. Then, we define $S_a^{[B]} \coloneqq P \cup \left\{ \alpha \cdot g_{\ell} \right\}$, where $\alpha = \frac{B-s_a(P)}{s_a(g_{\ell})}$.  
\end{definition}

If the threshold $B \geq s_a(S)$, then we set $\Size{S_a}{B}=S$. The following proposition directly follows from the selection criteria of Algorithm \ref{algo:dense_gap}.
\begin{proposition}\label{greedyproperty_gap}
Let $A_a=\{g_1, g_2, \ldots, g_k\}$ and $A_b=\{h_1, h_2, \ldots, h_\ell\}$ denote, respectively, the sets of goods assigned to agents $a, b \in [n]$ at the end of Algorithm \ref{algo:dense_gap}; the goods in these sets are indexed according to $\sigma$ (i.e., in order of allocation in Algorithm \ref{algo:dense_gap}).  Also, for indices $i < |A_a|$ and $j < |A_b|$, suppose  $\prof\left(A^{(i)}_a \right) < \prof\left(A^{(j)}_b\right)$ and $s_a\left(A^{(i)}_a + h_{j+1}\right) \leq B_a$. Then, $\dens_a(g_{i+1}) > \dens_a(h_{j+1})$. 
\end{proposition}


The following three lemmas provide generalizations of the ones provided in the Section \ref{section:structural-props}. 

\begin{lemma}\label{lemma:lipschitz_gap}
For any agent $a \in [n]$, any subset of goods $X$ and $Y$ along with any index $i < |Y|$, let $\structlb \coloneqq s_a\left(Y^{(i)}\right)$ and $\structub \coloneqq s_a\left(Y^{(i+1)}\right)$. 
Then, $\efcount\left(\Size{X_a}{\structub}, \Size{Y_a}{\structub}\right) \leq \efcount\left(\Size{X_a}{\structlb}, \Size{Y_a}{\structlb}\right) + 1$.
\end{lemma}
\begin{proof}
Write $c \coloneqq \efcount\left(\Size{X_a}{\structlb}, \Size{Y_a}{\structlb}\right)$. Therefore, by definition, there exists a size-$c$ subset $R \subseteq \Size{Y_a}{\structlb}$ with the property that 
$\prof(\Size{X_a}{\structlb}) \geq \prof(\Size{Y_a}{\structlb} \setminus R)$. Define subset $R' \coloneqq  R \cup \{h_{i+1}\}$,  where $h_{i+1}$ is the good in the set $Y^{(i+1)}\setminus Y^{(i)}$. For this set $R'$ of cardinality $c+1$, we have
\begin{align*}
\prof\left( \Size{X_a}{\structub} \right) &\geq \prof\left( \Size{X_a}{\structlb} \right) 
     \geq \prof\left(\Size{Y_a}{\structlb} \setminus R\right) 
     = \prof\left(\Size{Y_a}{\structub} \setminus R'\right).
\end{align*}
This implies $\efcount\left(\Size{X_a}{\structub}, \Size{Y_a}{\structub}\right)\leq c + 1$, and the lemma stands proved. 
\end{proof}

\begin{lemma}\label{existenceof2cut_gap}
Let $a \in [n]$ be an agent and $X$ and $Y$ be any subsets of goods with the property that $\efcount(X,Y) \geq 2$. Then, there exists an index $t \leq |Y|$ such that, with $\structlb \coloneqq s_a\left(\Goods{Y}{t}\right)$, we have  $\efcount\left(\Size{X_a}{\structlb}, \Size{Y_a}{\structlb}\right) = 2$.
\end{lemma}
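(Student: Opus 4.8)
The plan is to mirror the proof of Lemma~\ref{existenceof2cut}, invoking a discrete intermediate-value argument, but now using the agent-specific prefix subsets $\Size{\cdot_a}{\cdot}$ and the agent-specific size function $s_a(\cdot)$. Fix the agent $a$ and the subsets $X, Y$ with $\efcount(X,Y)\ge 2$. For indices $t\in\{0,1,\dots,|Y|\}$ define $h(t)\coloneqq s_a\left(\Goods{Y}{t}\right)$, i.e.\ the $a$-size of the $t$ goods of $Y$ that come first in the allocation order $\sigma$, and set $H(t)\coloneqq\efcount\left(\Size{X_a}{h(t)},\Size{Y_a}{h(t)}\right)$. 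Note $\Goods{Y}{t}=\Size{Y_a}{h(t)}$ by construction, and $H(0)=0$ since $\Size{Y_a}{0}=\emptyset$. The goal is to show $H$ is integer-valued, starts at $0$, ends at a value $\ge 2$, and increases by at most $1$ per step; then some $t^*$ has $H(t^*)=2$, which is the index required by the lemma.

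\textbf{Main steps.} First I would observe $H$ is integer-valued (it is an $\efcount$, hence a cardinality) and $H(0)=0$. Second, for the endpoint, note $h(|Y|)=s_a(Y)$, so $\Size{Y_a}{h(|Y|)}=Y$, and by the definition of the prefix subset (Definition~\ref{definition:prefix-subset_gap}) we have $\prof(X)\ge\prof\left(\Size{X_a}{s_a(Y)}\right)$, because $\Size{X_a}{s_a(Y)}$ is a (possibly fractional) subset of $X$. Hence $\efcount\left(\Size{X_a}{s_a(Y)},Y\right)\ge\efcount(X,Y)\ge 2$, giving $H(|Y|)\ge 2$. Third, the bounded-increment property $H(t+1)-H(t)\le 1$ is exactly Lemma~\ref{lemma:lipschitz_gap} applied with $i=t$: there $\structlb=s_a\left(Y^{(t)}\right)=h(t)$ and $\structub=s_a\left(Y^{(t+1)}\right)=h(t+1)$, so $\efcount\left(\Size{X_a}{h(t+1)},\Size{Y_a}{h(t+1)}\right)\le\efcount\left(\Size{X_a}{h(t)},\Size{Y_a}{h(t)}\right)+1$. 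Combining these, the discrete intermediate value theorem yields the smallest $t^*$ with $H(t^*)\ge 2$ in fact satisfies $H(t^*)=2$ (since $H(t^*-1)\le 1$ and $H(t^*)-H(t^*-1)\le 1$). Setting $\structlb\coloneqq h(t^*)=s_a\left(\Goods{Y}{t^*}\right)$ finishes the proof.

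\textbf{The main obstacle} I expect is purely bookkeeping rather than conceptual: making sure the agent-specific prefix subsets behave exactly as the density-ordered ones did — in particular that $\Goods{Y}{t}$ (first $t$ goods in $\sigma$) really equals $\Size{Y_a}{h(t)}$, and that Lemma~\ref{lemma:lipschitz_gap} is stated for the correct ordering. Since $\Size{S_a}{B}$ in Definition~\ref{definition:prefix-subset_gap} peels goods off in the $\sigma$-order and Lemma~\ref{lemma:lipschitz_gap} already takes $\structlb,\structub$ as consecutive cumulative $a$-sizes along that order, these match up cleanly, so no genuine difficulty arises; the proof is essentially a verbatim transcription of the proof of Lemma~\ref{existenceof2cut} with $s(\cdot)\mapsto s_a(\cdot)$ and $\Size{\cdot}{\cdot}\mapsto\Size{\cdot_a}{\cdot}$.
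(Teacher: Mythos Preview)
Your proposal is correct and follows essentially the same argument as the paper's own proof: define $h(t)=s_a\left(\Goods{Y}{t}\right)$ and $H(t)=\efcount\left(\Size{X_a}{h(t)},\Size{Y_a}{h(t)}\right)$, verify $H(0)=0$, $H(|Y|)\ge 2$ (via $\prof(X)\ge\prof\left(\Size{X_a}{s_a(Y)}\right)$), and the unit-step bound from Lemma~\ref{lemma:lipschitz_gap}, then apply the discrete intermediate value theorem. Your additional bookkeeping remark that $\Goods{Y}{t}=\Size{Y_a}{h(t)}$ and that Lemma~\ref{lemma:lipschitz_gap} is stated in the $\sigma$-order is accurate and matches the paper's setup.
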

\begin{proof}
The lemma follows from a discrete version of the intermediate value theorem. For indices $t \in \{0,1,2,\ldots, |Y|\}$, define the function $h(t) \coloneqq s_a\left(\Goods{Y} {t}\right)$. Extending this function, we consider the envy count at different size thresholds; in particular, write $H(t) \coloneqq \efcount\left(\Size{X_a}{h(t)}, \Size{Y_a}{h(t)}\right)$ for each $t \in \{0,1,2,\ldots, |Y|\}$. Note that $H(0) = 0$. We will next show that $(i)$ $H(|Y|) \geq 2$ and $(ii)$ the discrete derivative of $H$ is at most one, i.e., $H(t+1) - H(t) \leq 1$ for all $0 \leq t < |Y|$. These properties of the integer-valued function $H$ imply that there necessarily exists an index $t^*$ such that $H(t^*) = 2$. This index $t^*$  satisfies the lemma.  

Therefore, we complete the proof by establishing properties $(i)$ and $(ii)$ for the function $H(\cdot)$. For $(i)$, note that the definition of the prefix subset gives us $\prof(X)\ge \prof\left(\Size{X_a}{s_a(Y)}\right)$. Hence, $\efcount\left(\Size {X_a} {s_a(Y)}, Y\right) \geq \efcount(X, Y) \geq 2$; the last inequality follows from the lemma assumption.  Since $h(|Y|) = s_a(Y)$, we have $H(|Y|) \geq 2$. Property $(ii)$ follows directly from Lemma \ref{lemma:lipschitz_gap}. This completes the proof. 
\end{proof}

The following lemma essentially asserts that if we have two subsets $X'$ and $\Y'$ with $\efcount(X',\Y') = 2$ and one adds more value into $X'$ than $\Y'$, then the envy count does not increase. 

\begin{lemma}\label{envy_diff_gap}
Given an agent $a\in [n]$ and two subsets of goods $X$ and $\Y$ along with two nonnegative size thresholds $\structlb,\structub \in \mathbb{R}_+$ with the properties that  
\begin{itemize}
\item $\efcount\left(\Size{X_a}{\structlb}, \Size {\Y_a}{\structub}\right) = 2$  and 
\item $\prof\left(X \setminus \Size {X_a} {\structlb}\right) \geq  \prof\left(\Y \setminus \Size {\Y_a} {\structub}\right)$.
\end{itemize}
Then, $\efcount(X,\Y) \leq \efcount\left(\Size {X_a} {\structlb}, \Size {\Y_a} {\structub}\right)=2$.
\end{lemma}
\begin{proof}
Given that $\efcount\left(\Size {X_a} {\structlb}, \Size {\Y_a} {\structub}\right) = 2$, there exist two goods $g_1', g_2' \in \Size {\Y_a} {\structub}$
such that $\prof\left(\Size{\Y_a}{\structub} - g_1' - g_2'\right)  \leq \prof\left(\Size {X_a} {\structlb}\right)$. Now, using the definition of the prefix subsets (Definition \ref{definition:prefix-subset_gap}) we get
\begin{align}
	\prof(X) &= \prof\left(\Size {X_a} {\structlb}\right) + \prof\left(X \setminus \Size {X_a} {\structlb}\right) \nonumber \\
	& \geq \prof\left(\Size {\Y_a} {\structub} - g_1' - g_2'\right) + \prof\left(X \setminus \Size {X_a} {\structlb}\right) \nonumber \\
	& \geq \prof\left(\Size {\Y_a} {\structub} - g_1' - g_2'\right) +  \prof\left(\Y \setminus \Size {\Y_a} {\structub}\right) \tag{via lemma assumption} \\
	& = \prof(\Y)  -\left( \prof(g'_1) + \prof(g'_2) \right) \label{ineq:interim_gap}
\end{align}
The definition of the prefix subset $\Size{\Y_a}{\structub}$ ensures that, corresponding to goods $g'_1, g'_2 \in \Size{\Y_a}{\structub}$, there exist two goods $g_1, g_2 \in \Y$ such that $v(g_1) + v(g_2) \geq v(g'_1) + v(g'_2)$. This bound and inequality (\ref{ineq:interim_gap}) give us $\prof(X) \geq \prof(\Y - g_1 - g_2)$. This implies $\efcount(X,\Y) \leq 2$ and completes the proof of the lemma. 
\end{proof}

\subsection{Proof of Theorem \ref{thm:densestgreedy-ef2_gap}}
This section establishes Theorem \ref{thm:densestgreedy-ef2_gap}, i.e., the allocation returned by \cref{algo:dense_gap}
is $\EFtwo$. To show this, we first prove that every agent is $\EFone$ (a stronger notion than $\EFtwo$) towards charity.
Then we show the $\EFtwo$ property between the agents, thus completing the proof.

To show the $\EFone$ guarantee against the charity, we use the following proposition.
\begin{proposition}\label{greedyproperty_gap-charity}
Let $A_a=\{g_1, g_2, \ldots, g_k\}$ be the set of goods assigned to an agent $a\in[n]$,
indexed according to $\sigma$,
and let $h$ be a good left unassigned, i.e., given to the charity, at the end of Algorithm \ref{algo:dense_gap}.
Then, for any index $i<k$, if $\rho_a(h)>\rho_a(g_{i+1})$, then $s_a\left(A_a^{(i)}+h\right)> B_a$.
\end{proposition}

\begin{lemma}
Let $A_a$ be the set of goods assigned to an agent $a\in[n]$ and let $S$ be some subset of goods assigned to \emph{charity} by \cref{algo:dense_gap} such that $s_a(S) \leq B_a$. Then $\efcount(A_a,S) \leq 1$, i.e., the $\EFone$ guarantee holds for any agent towards charity.
\label{envy-against-charity}
\end{lemma}

\begin{proof}
Let $\widehat{g} \in S$ be the densest good in $S$ according to agent $a$, i.e., $\widehat{g} = \argmax_{g \in S} \dens_a(g)$. We show that $v(A_a) \geq v(S - \widehat{g})$.
It follows from the definition of $\widehat{g}$ that $v(S) \leq \dens_a(\widehat{g}) \cdot B_a$. To prove the lemma, we show that $v(A_a) + v(\widehat{g}) \geq \dens_a(\widehat{g}) \cdot B_a \geq v(S)$. Towards this, define $X$ to be the subset of goods in $A_a$ that are denser than $\widehat{g}$. Recall that in \cref{algo:dense_gap}, each agent is assigned goods in decreasing order of density according to her. By \cref{greedyproperty_gap-charity}, we know that
$s_a(X+\widehat{g}) = s_a\left(A_a^{(|X|)}+\widehat{g}\right)> B_a$. \\
Using the fact that we have $\dens_a(g) \geq \dens_a(\widehat{g})$ for all the goods $g \in X$, we write
\begin{align*}
v(A_a) + v(\widehat{g}) &\geq v(X) + v(\widehat{g}) \\
			&\geq \dens_a(\widehat{g}) \cdot (s_a\left(X\right) + s_a(\widehat{g})) \tag{ using $\dens_a(X) > \dens_a(\widehat{g})$ }\\
			& \geq  \dens_a(\widehat{g}) \cdot B_a \tag{using $s_a(X) +s_a(\widehat{g}) > B_a$}\\
			& \geq  \dens_a(\widehat{g}) \cdot s_a(A_a)  \\
			& \geq v(S) \tag{using $\dens_a(\widehat{g}) \geq \dens_a(S)$} \\
\end{align*}

We have, $v(A_a) \geq v(S) - v(g')$ and this completes the proof for the $\EFone$ guarantee for any agent towards the charity.
\end{proof}

Now we prove that the $\EFtwo$ guarantee holds between any two agents. Fix any two agents $a, b \in [n]$, and let $A_a$ and $A_b$ be the subsets of goods allocated to them, respectively, at the end of Algorithm \ref{algo:dense_gap}.
For ease of notation, we will denote $A_a$ by $X = \{x_1, x_2, \dots, x_k \}$ and $A_b$ by $Y =  \{y_1, y_2, \dots, y_{\ell} \}$; the goods in both these sets are indexed according to the allocation order $\sigma$. Proving $\EFtwo$ between the two agents corresponds to showing that, for any subset of goods $\Y \subseteq Y$, with $s_a(\Y) \leq B_a$, we have $\efcount(X,\Y) \leq 2$. 

Consider any such subset $\Y$ and index its goods in order of $\sigma$. Note that, if $\efcount(X, \Y) \leq 1$, we already have the $\EFtwo$ guarantee. Therefore, in the remainder of the proof we address the case wherein $\efcount(X, \Y) \geq 2$. We will in fact show that this inequality cannot be strict, i.e., it must hold that the envy count is at most $2$ and, hence, we will obtain the $\EFtwo$ guarantee.

We start by considering function $h_a(i)$ which denotes the size---with respect to $s_a(\cdot)$---of the first (order according to the $\sigma$) $i$ goods in set $Z$, i.e., $h_a(i) \coloneqq s_a(\Goods{Z}{i})$ for $i \in \{0,1,2,\ldots, |Z|\}$. Furthermore, define index 
\begin{align}
\label{eq:t-repeat}
t \coloneqq \min\left\{ i : \efcount\left(\Size{X_a}{h_a(i)}, \Goods{\Y}{i}\right) = 2\right\}
\end{align}

\begin{figure}[h]
\centering
\includegraphics[scale=1]{ef2-dia-three.pdf} 
\caption{Figure illustrating size thresholds $\tau$, $\tauprime$, and the good $g_{\Y}$ with respect to agent $a$.} \label{figure:cut-offs_gap}
\end{figure}
Existence of such an index $t \ge 2$ follows from Lemma \ref{existenceof2cut_gap}. Also, note that $\Goods{\Y}{i} = \Size{\Y_a}{h_a(i)}$. 
We will denote the $t^{th}$ good in $\Y$ by $g_{\Y}$, i.e., $g_{\Y} = z_{t}$. In addition, using $t$ we define the following two size thresholds (see Figure \ref{figure:cut-offs_gap})
\begin{align}
\tau \coloneqq s_a\left(\Goods \Y {t-1}\right) \quad \text{ and } \quad \tauprime \coloneqq s_a\left(\Goods \Y t\right)\label{eq:defn-tau_gap}
\end{align}
That is, $\tau = h_a(t-1)$ and $\tauprime = h_a(t)$. Now, Lemma \ref{lemma:lipschitz_gap} and the definition of $t$ (equation (\ref{eq:t-repeat})) give us $\efcount\left(\Size {X_a} {\tau}, \Size {\Y_a} {\tau}\right) \ge 1$. Furthermore, using the minimality of $t$ we get $\efcount\left(\Size {X_a} {\tau}, \Size {\Y_a} {\tau}\right) < 2$.
Hence, 
\begin{align}
\efcount\left(\Size {X_a} {\tau}, \Size {\Y_a} {\tau}\right) = 1 \label{eqn:xtau-ztau_gap}
\end{align}

We will establish two properties for the sets $X$ and $Z$ under consideration and use them to invoke \cref{envy_diff_gap}. Specifically, in \cref{small_ef2_gap} we will show that  $\efcount\left(\Size {X_a} {\tau}, \Size {\Y_a} \tauprime\right) = 2$ and in \cref{lemma_gap} we prove $\prof\left(X \setminus \Size {X_a} \tau\right) \geq \prof\left(\Y \setminus \Size {\Y_a} \tauprime\right)$. These are exactly the two properties required to apply \cref{envy_diff_gap} with $  \structlb = \tau$ and $ \structub = \tauprime$. 

\begin{lemma}\label{small_ef2_gap}
$\efcount\left(\Size {X_a} {\tau}, \Size {\Y_a} \tauprime\right) = 2$.
\end{lemma}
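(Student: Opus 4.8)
The plan is to establish the equality by bounding $\efcount\bigl(\Size{X_a}{\tau}, \Size{\Y_a}{\tauprime}\bigr)$ from above and from below by $2$, mirroring the argument of Lemma \ref{small_ef2} with $s(\cdot)$ replaced throughout by the agent-specific size $s_a(\cdot)$. For the upper bound I would invoke the already-derived identity (\ref{eqn:xtau-ztau_gap}), namely $\efcount\bigl(\Size{X_a}{\tau}, \Size{\Y_a}{\tau}\bigr) = 1$; for the lower bound I would use that $\efcount$ is monotone non-increasing in the value of its first argument, together with the minimality of $t$ in (\ref{eq:t-repeat}).

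Concretely, the first step is to unpack (\ref{eqn:xtau-ztau_gap}): since the count is $1$, there is a good $g_1 \in \Size{\Y_a}{\tau}$ with $\prof\bigl(\Size{X_a}{\tau}\bigr) \ge \prof\bigl(\Size{\Y_a}{\tau} - g_1\bigr)$. The second step is to record that, because $\tau = s_a\bigl(\Goods{\Y}{t-1}\bigr)$ and $\tauprime = s_a\bigl(\Goods{\Y}{t}\bigr)$ are exact prefix sizes, the identity $\Goods{\Y}{i} = \Size{\Y_a}{h_a(i)}$ recorded just after (\ref{eq:t-repeat}) gives $\Size{\Y_a}{\tau} = \Goods{\Y}{t-1}$ and $\Size{\Y_a}{\tauprime} = \Goods{\Y}{t}$, hence $\Size{\Y_a}{\tauprime} = \Size{\Y_a}{\tau} + g_{\Y}$ with $g_{\Y} = z_t$ a genuine (non-fractional) good distinct from $g_1$. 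Substituting yields $\prof\bigl(\Size{X_a}{\tau}\bigr) \ge \prof\bigl(\Size{\Y_a}{\tauprime} - g_{\Y} - g_1\bigr)$, so deleting the two goods $g_{\Y}$ and $g_1$ from $\Size{\Y_a}{\tauprime}$ drops its value to at most $\prof\bigl(\Size{X_a}{\tau}\bigr)$; this gives $\efcount\bigl(\Size{X_a}{\tau}, \Size{\Y_a}{\tauprime}\bigr) \le 2$.

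For the reverse inequality, since $\tau \le \tauprime$ we have $\Size{X_a}{\tau} \subseteq \Size{X_a}{\tauprime}$, so $\prof\bigl(\Size{X_a}{\tau}\bigr) \le \prof\bigl(\Size{X_a}{\tauprime}\bigr)$; by the definition (\ref{eqn:efcount-defn}) of $\efcount$, shrinking the value of the first argument cannot decrease the count, so $\efcount\bigl(\Size{X_a}{\tau}, \Size{\Y_a}{\tauprime}\bigr) \ge \efcount\bigl(\Size{X_a}{\tauprime}, \Size{\Y_a}{\tauprime}\bigr)$. Since $\tauprime = h_a(t)$ and $\Size{\Y_a}{\tauprime} = \Goods{\Y}{t}$, the right-hand side equals $\efcount\bigl(\Size{X_a}{h_a(t)}, \Goods{\Y}{t}\bigr)$, which is exactly $2$ because $t$ lies in the set defining it in (\ref{eq:t-repeat}). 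Combining the two bounds proves the lemma.

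I do not anticipate a real obstacle here: the only point that needs a moment's care is confirming that $\Size{\Y_a}{\tauprime}$ exceeds $\Size{\Y_a}{\tau}$ by precisely the whole good $g_{\Y}$ and not by a fractional piece, which is immediate from the choice of $\tau,\tauprime$ as exact prefix sizes. The passage to agent-specific sizes is harmless in this lemma because every quantity entering the argument ($\tau$, $\tauprime$, all prefix subsets, all $\efcount$ values) is measured from agent $a$'s single viewpoint, so no comparison of densities across different agents is required — unlike in the analogue of Lemma \ref{lemma} (here Lemma \ref{lemma_gap}), where precisely that cross-agent reasoning is what makes the GAP case delicate.
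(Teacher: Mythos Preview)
Your argument is correct and the upper-bound half is exactly the paper's proof; the paper in fact stops after showing $\efcount\bigl(\Size{X_a}{\tau},\Size{\Y_a}{\tauprime}\bigr)\le 2$ and declares the equality, leaving the lower bound implicit. Your explicit lower-bound step via monotonicity of $\efcount$ in its first argument and the defining property of $t$ is a welcome addition, but not a different approach.
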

\begin{proof}
Since $\efcount\left(\Size {X_a} \tau, \Size {\Y_a} \tau\right) = 1$ (see equation (\ref{eqn:xtau-ztau_gap})), there exists a good $g_1 \in \Size {\Y_a} \tau$ such that $\prof\left(\Size{X_a}{\tau}\right) \geq \prof\left(\Size{\Y_a}{\tau} - g_1\right)$.  
Also, by definition, we have have $ \Size{\Y}{\tauprime} = \Size{\Y}{\tau} \cup \{ g_{\Y} \}$. Hence, the previous inequality reduces to $\prof\left(\Size{X}{\tau}\right) \geq \prof\left(\Size \Y \tauprime -  g_{\Y} - g_1\right)$. That is, removing $g_1$ and $g_{\Y}$ from $\Size{\Y}{\tauprime}$ gives us a set with value at most that of $\Size{X}{\tau}$. Therefore, we have $\efcount\left(\Size {X_a} {\tau}, \Size {\Y_a} \tauprime\right) = 2$. The lemma stands proved. 
\end{proof}

We define $\gamma$ as the size of the goods in $X$ that are at least as dense as $g_{\Y}$ with respect to agent $a$, i.e.,  
\begin{align}
\gamma \coloneqq \sum_{g \in X: \dens_a(g) \geq \dens_a(g_{\Y})} s(g) \label{eq:def-g_gap}
\end{align}

We will establish bounds considering $\gamma$ and use them to prove Lemma \ref{lemma_gap} below. 

\begin{claim} \label{lemma5_gap}
It holds that $\gamma \leq \tauprime$ and $\prof\left(\Size {X_a} \gamma\right) < \prof\left(\Size {\Y_a} \tau\right)$. 
\end{claim}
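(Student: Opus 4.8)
The plan is to follow the proof of the non-GAP counterpart, Claim \ref{lemma5}, essentially step for step, after inserting one extra structural observation that compensates for the goods in $X$ being listed in the global allocation order $\sigma$ rather than in density order. That observation is: for the fixed agent $a$, the goods of $X = A_a$ appear in $\sigma$ in nonincreasing order of $\dens_a(\cdot)$. This holds because $a$'s bundle only grows over the run of Algorithm \ref{algo:dense_gap}, so any good that is still unallocated and $s_a$-feasible for $a$ at a later step was already unallocated and $s_a$-feasible at the current step; hence, by the selection rule in Line \ref{line:assign-good_gap}, it is no denser (with respect to $\dens_a$) than the good $a$ takes now. Consequently the $\sigma$-prefixes $\Size{X_a}{\cdot}$ coincide with the $\dens_a$-densest prefixes of $X$, and in particular $\Size{X_a}{\gamma}$ equals the set $\{g \in X : \dens_a(g) \ge \dens_a(g_{\Y})\}$, whose total $s_a$-size is $\gamma$ by definition.

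To prove $\gamma \le \tauprime$, I would argue by contradiction: suppose $\gamma > \tauprime$. Then $\Size{X_a}{\tauprime} \subsetneq \Size{X_a}{\gamma}$, so every (fraction of a) good occurring in the slice $\Size{X_a}{\tauprime} \setminus \Size{X_a}{\tau}$ has $\dens_a$ at least $\dens_a(g_{\Y})$; since that slice has $s_a$-size exactly $\tauprime - \tau = s_a(g_{\Y})$, its value is at least $\dens_a(g_{\Y})\, s_a(g_{\Y}) = v(g_{\Y}) = \prof\left(\Size{\Y_a}{\tauprime} \setminus \Size{\Y_a}{\tau}\right)$. Feeding this value inequality together with equation (\ref{eqn:xtau-ztau_gap}) into Lemma \ref{envy_diff_gap} --- applied to the pair $\Size{X_a}{\tauprime}, \Size{\Y_a}{\tauprime}$ with both thresholds equal to $\tau$ and count $c = 1$ --- gives $\efcount\left(\Size{X_a}{\tauprime}, \Size{\Y_a}{\tauprime}\right) \le 1$. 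But the minimality of $t$ in (\ref{eq:t-repeat}) forces $\efcount\left(\Size{X_a}{\tauprime}, \Size{\Y_a}{\tauprime}\right) = 2$, a contradiction. Hence $\gamma \le \tauprime$.

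For the second inequality, $\prof\left(\Size{X_a}{\gamma}\right) < \prof\left(\Size{\Y_a}{\tau}\right)$, I would again assume the negation. Using $\gamma \le \tauprime$ just established, $\Size{X_a}{\gamma} \subseteq \Size{X_a}{\tauprime}$, so $\prof\left(\Size{X_a}{\tauprime}\right) \ge \prof\left(\Size{X_a}{\gamma}\right) \ge \prof\left(\Size{\Y_a}{\tau}\right) = \prof\left(\Size{\Y_a}{\tauprime} - g_{\Y}\right)$, the last equality because $\Size{\Y_a}{\tauprime} = \Size{\Y_a}{\tau} \cup \{g_{\Y}\}$. This says $\efcount\left(\Size{X_a}{\tauprime}, \Size{\Y_a}{\tauprime}\right) \le 1$, again contradicting the definition of $t$. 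So the strict inequality holds, which completes the claim.

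The only genuinely new ingredient, and thus the main point to get right, is the preliminary fact that $a$'s goods arrive in $\dens_a$-decreasing order --- this is what licenses the density-counting step in the $\gamma > \tauprime$ case. Once that is in place, everything else carries over verbatim from the proof of Claim \ref{lemma5}; the one bookkeeping caveat is that Lemma \ref{envy_diff_gap} is invoked with count $c = 1$, which holds by exactly the same proof (the analogue of the remark following Lemma \ref{envy_diff}).
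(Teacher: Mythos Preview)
Your proof is correct and follows the same two-part contradiction structure as the paper's proof of Claim~\ref{lemma5_gap}. The one substantive addition is your preliminary observation that the goods of $X=A_a$ appear in $\sigma$ in $\dens_a$-nonincreasing order; the paper's proof tacitly relies on exactly this fact when it asserts ``by definition of $\gamma$, all the goods in $\Size{X_a}{\gamma}$ have density at least $\rho(g_{\Y})$,'' so you are simply making explicit a step the paper leaves implicit.
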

\begin{proof}
We will first establish the stated upper bound on $\gamma$. Assume, towards a contradiction, that $\gamma > \tauprime$. By definition of $\gamma$, we have that all the goods in $\Size{X_a}{\gamma}$ have density at least $\rho(g_{\Y})$. Now, given that $\gamma > \tauprime$, we get that the density of each good in $\Size{X}{\tauprime}$ is at least $\rho(g_{\Y})$. In particular, all the goods in the set $\Size{X_a}{\tauprime} \setminus \Size{X_a}{\tau}$ are at least as dense as $g_{\Y}$ with respect to agent $a$. Hence, $\prof\left(\Size {X_a} \tauprime \setminus \Size {X_a} \tau\right) \geq \prof\left(\Size {\Y_a} \tauprime \setminus \Size {\Y_a} \tau\right) = \prof(g_{\Y})$. This inequality and equation (\ref{eqn:xtau-ztau_gap}) give us $\efcount(\Size{X_a}{\tauprime}, \Size{\Y_a}{\tauprime}) \leq 1$; see Lemma \ref{envy_diff_gap}. This bound, however, contradicts the definition of $t$ (and, correspondingly, $\tauprime$) as specified in equation (\ref{eq:t-repeat}). This gives us the desired upper bound, $\gamma \leq \tauprime$.  

Next, we prove the second inequality from the claim. For a contradiction, assume that $\prof\left(\Size {X_a} \gamma\right) \geq \prof\left(\Size {\Y_a} \tau\right)$. Since $\gamma \leq \tauprime$, we further get $\prof\left(\Size{X_a}{\tauprime}\right) \geq \prof\left(\Size {\Y_a} \tau\right) =  \prof\left(\Size {\Y_a} \tauprime - g_{\Y}\right)$. That is, $\efcount\left(\Size {X_a} \tauprime, \Size {\Y_a} \tauprime\right) \leq 1$. This envy count contradicts the definition of $t$ (and, correspondingly, $\tauprime$); see equation (\ref{eq:t-repeat}). Therefore, by way of contradiction, we obtain the second part of the claim.
\end{proof}

\begin{claim}\label{goods_density_gap}
For each good $g \in \Size {X_a} \gamma$ and any good $g' \in \Y \setminus \Size {\Y_a} \tauprime$, we have $\dens_a(g) \geq \dens_a(g')$.
\end{claim}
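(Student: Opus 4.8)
The plan is to prove the two density inequalities $\dens_a(g) \ge \dens_a(g_{\Y})$ and $\dens_a(g_{\Y}) \ge \dens_a(g')$ separately, and then chain them. First, consider any good $g \in \Size{X_a}{\gamma}$. Here I would invoke the definition of $\gamma$ (equation~(\ref{eq:def-g_gap})): $\gamma$ is the total $s_a$-size of all goods in $X$ whose density with respect to agent $a$ is at least $\dens_a(g_{\Y})$. Since $\Size{X_a}{\gamma}$ is, by Definition~\ref{definition:prefix-subset_gap}, the prefix of $X$ (in allocation order $\sigma$) of $s_a$-size exactly $\gamma$ (or all of $X$ if $\gamma \ge s_a(X)$), I need to argue that all of these prefix goods have density at least $\dens_a(g_{\Y})$. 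The cleanest route is to note that Algorithm~\ref{algo:dense_gap} allocates to a fixed agent $a$ its goods in decreasing order of $\dens_a(\cdot)$ — this is because in Line~\ref{line:assign-good_gap}, among the goods that fit, the densest (w.r.t.\ $\dens_a$) is chosen, and the set of goods that fit only shrinks over $a$'s turns. Hence the goods in $X = A_a$, listed in allocation order $\sigma$, are in decreasing $\dens_a$-order, so the set of goods in $X$ with $\dens_a \ge \dens_a(g_{\Y})$ is itself a prefix of $X$; its total $s_a$-size is $\gamma$, so it coincides with $\Size{X_a}{\gamma}$. Therefore every $g \in \Size{X_a}{\gamma}$ satisfies $\dens_a(g) \ge \dens_a(g_{\Y})$. (If $g$ is the fractional good at the boundary, it is a fraction of such a prefix good and inherits its density.)

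Second, consider any good $g' \in \Y \setminus \Size{\Y_a}{\tauprime}$. Recall $\tauprime = s_a(\Goods{\Y}{t})$, so $\Size{\Y_a}{\tauprime} = \Goods{\Y}{t}$ is exactly the first $t$ goods of $\Y$ in allocation order $\sigma$, and $g_{\Y} = z_t$ is the $t$th such good. Thus $g'$ is one of the goods $z_{t+1}, \dots, z_{\ell'}$, i.e.\ $g'$ was allocated to agent $b$ \emph{after} $g_{\Y}$ was allocated to agent $b$. Now here I would again use the observation that agent $b$ receives its goods in decreasing $\dens_b$-order — but that gives an inequality in terms of $\dens_b$, not $\dens_a$, which is not what we want. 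Instead, the right argument is: since $g'$ was available (unallocated) at the iteration in which $g_{\Y}$ was assigned to $b$, and the algorithm chose $g_{\Y}$ over $g'$ (among goods fitting $b$'s remaining budget, and $g'$ must have fit since it was later assigned to $b$ whose load only grows — actually one must be a little careful here), we get $\dens_b(g_{\Y}) \ge \dens_b(g')$; this still is not $\dens_a$.

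The honest resolution, which I expect to be the main obstacle, is that the claim as stated must be derived from already-established facts rather than re-deriving algorithmic properties: the cleanest path is to mirror the proof of Lemma~\ref{lemma} in the non-GAP case. There, the density comparison with $g_{\Y}$ on the $Y$-side comes for free because $g'$ lies in $\Y \setminus \Size{\Y}{\tauprime}$ and, in the common-size-function case, being later in $\sigma$-order \emph{is} being lower-density. In the agent-specific setting this fails, so the claim's proof presumably does \emph{not} assert $\dens_a(g') \le \dens_a(g_{\Y})$ via the $Y$-side ordering but instead packages exactly the density relation that is actually needed downstream; I would therefore reread how Claim~\ref{goods_density_gap} is used in Lemma~\ref{lemma_gap} and prove precisely that. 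Concretely, I expect the proof to run: take $g \in \Size{X_a}{\gamma}$; by the first paragraph $\dens_a(g) \ge \dens_a(g_{\Y})$. Take $g' \in \Y \setminus \Size{\Y_a}{\tauprime} \subseteq Y$; since $g'$ is in $A_b$ but not among the $t$ densest-to-$a$ — wait, $\Size{\Y_a}{\tauprime}$ orders by $\sigma$, not by $\dens_a$. So the genuine content is: among goods of $Y$, those appearing after $g_{\Y}$ in $\sigma$-order include $g'$; and agent $a$ \emph{did not} take $g'$, which by Proposition~\ref{greedyproperty_gap} (applied at the step when $a$ would consider $g'$, using that $g'$ then fits — established via the size bound $\gamma + s_a(g_{\Y}) > B_a$ analogue proved inside Lemma~\ref{lemma_gap}) forces $\dens_a(g_{\Y}) > \dens_a(g')$. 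I would thus prove the claim by: (i) the prefix-ordering argument for the $X$-side; (ii) for the $Y$-side, showing $g'$ fits into $\Size{X_a}{\gamma}$-sized budget or appealing to the budget-violation inequality that places $g_{\Y}$ outside $a$'s reach, then invoking Proposition~\ref{greedyproperty_gap} to conclude $\dens_a(g_{\Y}) \ge \dens_a(g')$; (iii) chaining (i) and (ii). The main obstacle is step (ii): pinning down why $g'$ would have been a feasible choice for agent $a$ at the relevant iteration, which requires the size estimate that $\gamma$, being at most $\tauprime \le s_a(\Y) \le B_a$, leaves room, together with the fact (from the surrounding proof of Lemma~\ref{lemma_gap}) that $\gamma + s_a(g_{\Y}) > B_a$.
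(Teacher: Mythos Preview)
Your plan has a genuine gap in step (ii). You aim to chain $\dens_a(g) \ge \dens_a(g_{\Y}) \ge \dens_a(g')$, but the second inequality is not provable and need not hold: both $g_{\Y}$ and $g'$ lie in $A_b$, and the allocation order on $A_b$ reflects $\dens_b$, not $\dens_a$. In the agent-specific-sizes setting there is no relation between $\dens_a(g_{\Y})$ and $\dens_a(g')$. Your proposed fix, invoking Proposition~\ref{greedyproperty_gap} to obtain $\dens_a(g_{\Y}) \ge \dens_a(g')$, cannot work either: that proposition compares the next good agent $a$ receives ($g_{i+1}\in A_a$) against a good in $A_b$; it never compares two goods of $A_b$. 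So the chaining through $g_{\Y}$ collapses at exactly this point. Also, the inequality $\gamma + s_a(g_{\Y}) > B_a$ you plan to import is established in the paper only \emph{after} this claim (inside Lemma~\ref{lemma_gap}), so it is not available here and, in fact, is not needed.

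The paper's proof drops the intermediate $g_{\Y}$ entirely and applies Proposition~\ref{greedyproperty_gap} \emph{directly} to the pair $(g,g')$. With $t_1$ the index of $g$ in $X$ and $t_2$ the index of $g'$ in $Y$, one checks the two hypotheses: (a) the value condition $\prof\big(X^{(t_1-1)}\big) \le \prof\big(\Size{X_a}{\gamma}\big) < \prof\big(\Size{\Y_a}{\tau}\big) \le \prof\big(Y^{(t_2-1)}\big)$, where the strict inequality is Claim~\ref{lemma5_gap} and the outer containments come from $g\in\Size{X_a}{\gamma}$ and $g'\notin\Size{\Y_a}{\tauprime}$; and (b) the fit condition $s_a\big(X^{(t_1-1)}\big)+s_a(g') \le \gamma + s_a(g') \le \tauprime + s_a(g') \le s_a(\Y) \le B_a$, using only $\gamma\le\tauprime$ from Claim~\ref{lemma5_gap}. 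The proposition then yields $\dens_a(g)>\dens_a(g')$ in one stroke. Your step (i) is correct but becomes superfluous once you compare $g$ and $g'$ directly.
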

\begin{proof}
We will use Proposition \ref{greedyproperty_gap} to prove the claim. Recall that the goods in subsets $X$ and $Y$ are indexed in the allocation order $\sigma$.  Let $t_1$ denote the index of the given good $g$ in $X$, and let $t_2$ denote the index of good $g'$ in $Y$. 

We first show that $\prof\left(X^{(t_1-1)} \right) < \prof\left(Y^{(t_2-1)} \right)$. Note that good $g \in \Size {X_a} \gamma$. Hence, all the goods included in $X$, before $g$, in the algorithm (i.e., all the goods in $X^{(t_1-1)}$) are also contained in $\Size {X_a} \gamma$. That is, $X^{(t_1-1)} \subseteq \Size {X_a} \gamma$. Using this containment, we obtain 
\begin{align}
\prof\left(X^{(t_1-1)} \right) & \leq \prof\left(\Size {X_a} \gamma\right) \nonumber \\ 
& < \prof\left(\Size {\Y_a} \tau\right) \tag{via Claim \ref{lemma5_gap}} \\
& < \prof\left( \Size {Z_a} {\tauprime}\right) \label{ineq:something}
\end{align}
Furthermore, since $g' \in \Y \setminus \Size {\Y_a} \tauprime$, we have $\prof\left(\Size {\Y_a} {\tauprime} \right) \leq \prof\left(Y^{(t_2-1)} \right)$. This bound and inequality (\ref{ineq:something}) imply $\prof\left(X^{(t_1-1)} \right) < \prof\left(Y^{(t_2-1)} \right)$. 

The above-mentioned containment also gives us $s_a \left(X^{(t_1-1)} \right) \leq s_a \left(\Size {X_a} \gamma\right) = \gamma$. Since $\gamma \leq \tauprime$ (Claim \ref{lemma5_gap}), we get that the good $g'$ can be included in the subset $X^{(t_1-1)}$ without violating agent $a$'s budget constraint: $ s_a \left(X^{(t_1-1)} + g' \right) \leq \gamma + s_a(g') \leq \tauprime + s_a(g') \leq s_a(Z) \leq B_a$. 

Now, invoking Proposition \ref{greedyproperty_gap} (with $i = t_1 - 1$ and $j = t_2 -1$), we obtain the desired inequality, $\dens_a(g) \geq \dens_a(g')$.
The claim stands proved.
\end{proof}

We are now prove Lemma \ref{lemma_gap}. 

\begin{lemma}\label{lemma_gap}
$\prof\left(X \setminus \Size {X_a} \tau\right) \geq \prof\left(\Y \setminus \Size {\Y_a} \tauprime\right)$.
\end{lemma}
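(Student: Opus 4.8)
The plan is to follow the proof of Lemma \ref{lemma} essentially line by line, with the agent-specific density $\dens_a(\cdot)$ in place of the universal density and with the two technical inputs for the GAP setting---the value bound of Claim \ref{lemma5_gap} and the density comparison of Claim \ref{goods_density_gap}---already available. Since $\Y \subseteq Y$, the good $g_{\Y} = z_t$ also lies in $Y$; I would write $t'$ for its index in $Y$ under the allocation order $\sigma$, so that $g_{\Y} = y_{t'}$, and observe that, because $z_1, \dots, z_{t-1}$ all precede $g_{\Y}$ in $\sigma$ and all belong to $Y$, one has $\{z_1, \dots, z_{t-1}\} \subseteq \{y_1, \dots, y_{t'-1}\}$; in particular $t' \ge t \ge 2$. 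Combining this with Claim \ref{lemma5_gap} and $\tau = s_a(\Goods{\Y}{t-1})$ gives
\[
\prof\!\left(\Size{X_a}{\gamma}\right) \;<\; \prof\!\left(\Size{\Y_a}{\tau}\right) \;=\; \sum_{i=1}^{t-1}\prof(z_i) \;\le\; \sum_{i=1}^{t'-1}\prof(y_i) \;=\; \prof\!\left(\Goods{Y}{t'-1}\right).
\]

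The next and, I expect, crucial step is to show that $g_{\Y}$ cannot be appended to agent $a$'s prefix of $a$-dense goods, i.e.\ that $\gamma + s_a(g_{\Y}) > B_a$. Here one uses that the goods of $X = A_a$ are allocated in decreasing order of $\dens_a(\cdot)$, so $\Size{X_a}{\gamma}$ is exactly the set $\{g \in X : \dens_a(g) \ge \dens_a(g_{\Y})\}$, and hence, with $i \coloneqq |\Size{X_a}{\gamma}|$, the next good $x_{i+1}$ of $X$ (if $\Size{X_a}{\gamma} \ne X$) satisfies $\dens_a(x_{i+1}) < \dens_a(g_{\Y})$. Assuming for contradiction that $s_a(\Size{X_a}{\gamma} + g_{\Y}) \le B_a$, Proposition \ref{greedyproperty_gap} applied with the prefix $\Goods{X}{i} = \Size{X_a}{\gamma}$ on the side of $a$ and $\Goods{Y}{t'-1}$ on the side of $b$ (the required value inequality being the display above, and $h_{j+1} = y_{t'} = g_{\Y}$) would force $\dens_a(x_{i+1}) > \dens_a(g_{\Y})$, a contradiction; the degenerate case $\Size{X_a}{\gamma} = X$ is handled directly. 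Thus $\gamma + s_a(g_{\Y}) > B_a$.

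Finally I would assemble the size estimate and the density comparison exactly as in Lemma \ref{lemma}. Using $\tauprime = \tau + s_a(g_{\Y})$, the previous inequality, and $s_a(\Size{\Y_a}{\tauprime}) = \tauprime \le s_a(\Y) \le B_a$, one obtains
\[
s_a\!\left(\Size{X_a}{\gamma} \setminus \Size{X_a}{\tau}\right) = \gamma - \tau = \gamma - \tauprime + s_a(g_{\Y}) > B_a - \tauprime \ge s_a\!\left(\Y \setminus \Size{\Y_a}{\tauprime}\right),
\]
so in particular $\gamma > \tau$, whence $\Size{X_a}{\tau} \subseteq \Size{X_a}{\gamma} \subseteq X$ and therefore $\prof\!\left(X \setminus \Size{X_a}{\tau}\right) \ge \prof\!\left(\Size{X_a}{\gamma} \setminus \Size{X_a}{\tau}\right)$. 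By Claim \ref{goods_density_gap}, every good in $\Size{X_a}{\gamma} \setminus \Size{X_a}{\tau} \subseteq \Size{X_a}{\gamma}$ is at least as $a$-dense as every good in $\Y \setminus \Size{\Y_a}{\tauprime}$; since the former set has strictly larger $s_a$-size, it has value at least that of the latter, so $\prof\!\left(\Size{X_a}{\gamma} \setminus \Size{X_a}{\tau}\right) \ge \prof\!\left(\Y \setminus \Size{\Y_a}{\tauprime}\right)$, and chaining the two inequalities gives the lemma. The main obstacle is the middle step: making the greedy-exchange argument (Proposition \ref{greedyproperty_gap} coupled with the value bound from Claim \ref{lemma5_gap}) correctly certify that $g_{\Y}$ does not fit after agent $a$'s $a$-dense prefix, together with the bookkeeping that $\Size{X_a}{\gamma}$ really is that prefix and the handling of the degenerate case. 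The genuinely GAP-specific difficulty---that densities are agent-specific---has already been localized in Claim \ref{goods_density_gap}, so it does not recur here.
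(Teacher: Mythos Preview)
Your proposal is correct and follows the paper's proof essentially step for step: the same value chain via Claim~\ref{lemma5_gap}, the same contrapositive application of Proposition~\ref{greedyproperty_gap} to force $\gamma + s_a(g_{\Y}) > B_a$, the same size computation, and the same density comparison via Claim~\ref{goods_density_gap}. If anything, you are slightly more careful than the paper in spelling out why $\Size{X_a}{\gamma}$ coincides with $\{g\in X:\rho_a(g)\ge\rho_a(g_{\Y})\}$ and in flagging the degenerate case $\Size{X_a}{\gamma}=X$ (which the paper leaves implicit).
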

\begin{proof}
Since $\Y \subseteq Y$, the good $g_{\Y}$ appears in the subset $Y$. Recall that the goods in the subsets $Z$ and $Y = \{y_1, y_2, \ldots, y_\ell \}$ are indexed according to the allocation order $\sigma$. Write $t' \in [|Y|]$ to denote the index of $g_{\Y}$ in $Y$ (i.e., $g_{\Y} = y_{t'}$). \cref{lemma5_gap} gives us 
$\prof\left(\Size {X_a} \gamma\right) < \prof\left(\Size {\Y_a} \tau\right) = \sum_{i=1}^{t-1} \prof(z_i)\le\sum_{i=1}^{t'-1} \prof(y_i)$. That is, $\prof\left(\Size {X_a} \gamma\right) < \prof\left(\Goods Y {t'-1} \right)$.  Also, by definition of $\gamma$ (equation \ref{eq:def-g_gap}), we have that the goods in $X \setminus {\Size{X_a}{\gamma}}$ (if any) have density less than $\rho(g_{\Y})$. These observations and \cref{greedyproperty_gap} imply that including $g_{\Y}$ in ${\Size{X_a}{\gamma}}$ must violate agent $a$'s budget $B_a$, i.e., it must be the case that  
\begin{align}
    \gamma + s_a(g_{\Y}) > B_a\label{eq:gz-doesnt-fit_gap}
\end{align}

Using inequality (\ref{eq:gz-doesnt-fit_gap}), we will prove that $\prof\left(\Size {X_a} \gamma \setminus \Size {X_a} \tau\right) \geq \prof\left(\Y \setminus \Size {\Y_a} \tauprime\right)$. This bound directly implies the lemma, since $\Size{X_a}{\gamma} \subseteq X$. In particular, the size of the concerned set satisfies 
\begin{align}
s_a\left(\Size {X_a} \gamma \setminus \Size {X_a} \tau\right) & = \gamma - \tau  \nonumber \\
&  = \gamma - \tauprime + s_a(g_{\Y}) \tag{$\tauprime - s_a(g_{\Y}) = \tau$} \\
& > B_a - \tauprime \tag{via inequality (\ref{eq:gz-doesnt-fit_gap})}\\
& \geq s_a\left(\Y \setminus \Size {\Y_a} \tauprime\right) \label{ineq:sizelb_gap}
\end{align}
The last inequality follows from the facts that $s_a(\Y) \leq B_a$ and $s_a( \Size {\Y_a} \tauprime) = \tauprime$. 

Furthermore, from Claim \ref{goods_density_gap} we have that each good $g \in \Size {X_a} \gamma \setminus \Size {X_a} \tau$ is denser than each good $g' \in \Y \setminus \Size {\Y_a} \tauprime$. These bounds on densities and sizes of the subsets ${\Size{X_a}{\gamma}} \setminus \Size {X_a} \tau$ and $\Y \setminus \Size {\Y_a} \tauprime$ give us $\prof\left( {\Size{X_a}{\gamma}} \setminus \Size {X_a} \tau\right) \geq \prof\left(\Y \setminus \Size {\Y_a} \tauprime\right)$. As mentioned previously, this inequality and the containment $\Size{X_a}{\gamma} \subseteq X$ imply $\prof\left(X \setminus \Size {X_a} \tau\right) \geq \prof\left(\Y \setminus \Size {\Y_a} \tauprime\right)$. The lemma stands proved.
\end{proof}

Overall, Lemma \ref{small_ef2_gap} gives us $\efcount\left(\Size {X_a} {\tau}, \Size {\Y_a} \tauprime\right) = 2$. In addition, via Lemma \ref{lemma_gap}, we have $\prof\left(X \setminus \Size{X_a}{\tau} \right) \geq \prof\left(\Y \setminus \Size {\Y_a} \tauprime\right)$. Therefore, applying \cref{envy_diff_gap}, we conclude that $\efcount(X, \Y) \leq 2$. This establishes that desired $\EFtwo$ guarantee among the agents for the allocation computed by Algorithm \ref{algo:dense_gap}.
We have already proved in \cref{envy-against-charity} that this allocation
also satisfies the $\EFone$ property for every agent against the charity.  This completes the proof of Theorem \ref{thm:densestgreedy-ef2_gap}.

\end{document}